\documentclass[12pt]{article}

\usepackage{microtype}

\usepackage{ bbold }
\usepackage{graphics}

\usepackage{pst-node}
\usepackage{tikz-cd}
\usepackage{enumitem}
\usepackage{wrapfig,color}

\usepackage{amsmath}
\usepackage{amssymb}
\usepackage{amsthm}
\usepackage{xspace}
\usepackage[normalem]{ulem}
\usepackage{graphicx}
\usepackage{epsfig}
\usepackage{ifpdf}
\usepackage{url}
\usepackage{latexsym}
\usepackage[mathscr]{euscript}
\usepackage{xspace}
\usepackage{color}
\usepackage{makeidx}
\usepackage{wrapfig,color}
\usepackage{stackrel}
\usepackage{algorithm}

\usepackage[all]{xy}

\usepackage{mathtools}

\DeclarePairedDelimiter\set{\{}{\}}


\title{Computing Bottleneck Distance for Multi-parameter Interval Decomposable Persistence Modules}

\author{Tamal K. Dey$^*$~~ Cheng Xin\thanks{Department of Computer Science and Engineering, The Ohio State University. \texttt{dey.8@osu.edu, xin.108@buckeyemail.osu.edu}}}

\date{}

\newtheorem{theorem}{Theorem} 

\newtheorem{remark}[theorem]{Remark}

\newtheorem{corollary}[theorem]{Corollary}
\newtheorem{fact}[theorem]{Fact}

\newtheorem{proposition}[theorem]{Proposition}
\newtheorem{definition}[theorem]{Definition}

\newcommand{\Int}{\mathbb{Z}}
\newcommand{\Real}{\mathbb{R}}
\newcommand{\Rn}{\Real^n}

\newcommand{\Id}{\mathbb{1}}

\newcommand{\VecSp}{\mathbf{Vec}}
\newcommand{\vecsp}{\mathbf{vec}}
\newcommand*{\rom}[1]{\expandafter\@slowromancap\romannumeral #1@}
\newcommand{\Img}{\mathrm{im\,}}
\newcommand{\coim}{\mathrm{coim\,}}
\newcommand{\dm}{\mathrm{dm}}
\newcommand{\dl}{\mathrm{dl}}
\newcommand{\Poset}{\mathbb{P}}
\newcommand{\nmod}{\Rn\mbox{-}\mathbf{mod}}
\newcommand{\RB}{\bar{\mathbb{R}}}
\newcommand{\field}[1] {{\mathbb{#1}}}

\definecolor{darkred}{rgb}{1, 0.1, 0.3}
\definecolor{darkblue}{rgb}{0.1, 0.1, 1}

\begin{document}

\maketitle

\begin{abstract}
	Computation of the interleaving distance between persistence modules is a central task in topological data analysis. For $1$-parameter persistence modules, thanks to the isometry theorem, this can be done by computing the bottleneck distance with known efficient algorithms. The question is open for most $n$-parameter persistence modules, $n>1$, because of the well recognized complications of the indecomposables. Here, we consider a reasonably complicated class called {\em $n$-parameter interval decomposable} modules whose indecomposables may have a description of non-constant complexity. We present a polynomial time algorithm to compute the bottleneck distance for these modules from indecomposables, which bounds the interleaving distance from above, and give another algorithm to compute a new distance called {\em dimension distance} that bounds it from below. An earlier version of this paper considered only the $2$-parameter interval decomposable modules~\cite{DeyCheng18}.
\end{abstract}


\section{Introduction} \label{sec: intro}
\begin{sloppypar}
	Persistence modules have become an important object of study in topological data analysis in that they serve as an intermediate between the raw input data and the output summarization with persistence diagrams. The classical persistence theory 
	\cite{edelsbrunner2010computational} for $\mathbb{R}$-valued functions produces 1-parameter persistence modules, which is a sequence of vector spaces (homology groups with a field coefficient) with linear maps over $\mathbb{R}$ seen as a poset. It is known that~\cite{Crawley-Boevey,Webb}, this sequence can be decomposed uniquely into a set of intervals called {\em bars} which is also represented as points in $\mathbb{R}^2$ called the persistence diagrams~\cite{cohen2007stability}. The space of these diagrams can be equipped with a metric $d_B$ called the {\em bottleneck distance}. Cohen-Steiner et al.~\cite{cohen2007stability} showed that $d_B$ is bounded from above by the input function perturbation measured in infinity norm. Chazal et al.~\cite{Chazal:2009} generalized the result by showing that
	the bottleneck distance is bounded from above by a distance $d_I$ called the {\em interleaving distance} between two persistence modules; see also~\cite{BCM,Bubenik2014,deSilva2016} for further generalizations. Lesnick~\cite{lesnick2015theory} (see also~\cite{Bauer:2014:IMB:2582112.2582168,chazal2012structure}) established the isometry theorem which showed that indeed $d_I=d_B$. Consequently, $d_I$ for 1-parameter persistence modules can be computed exactly by efficient algorithms known for computing $d_B$
	[see e.g.,] \cite{edelsbrunner2010computational,kerber2017geometry}. 
	The status however is not so well settled for
	multi-parameter persistence modules~\cite{Carlsson2009} arising from $\Rn$-valued functions.
\end{sloppypar}

\begin{sloppypar}
	Extending the concept from 1-parameter modules,
	Lesnick~\cite{lesnick2015theory} defined the interleaving distance for $n$-parameter persistence modules, and proved its stability and universality. The definition of the bottleneck distance, however, is not readily extensible mainly because the bars for finitely presented $n$-parameter modules called {\em indecomposables} are far more complicated though are guaranteed to be essentially unique by Krull-Schmidt theorem~\cite{Atiyah1956}. Nonetheless, one can define $d_B$  as the supremum of the pairwise interleaving distances between indecomposables, which in some sense generalizes the concept in 1-parameter due to the isometry theorem.  
	Then, straightforwardly, $d_I\leq d_B$ as observed in~\cite{botnan2016algebraic}, but the converse is not necessarily true. For some special cases, results in the converse direction have started to appear. Botnan and Lesnick~\cite{botnan2016algebraic} proved that, in 2-parameter, $d_B\leq \frac{5}{2} d_I$ for what they called block decomposable modules. Bjerkevic~\cite{bjerkevik2016stability} improved this result to $d_B\leq d_I$. Furthermore, he extended it by proving that $d_B\leq (2n-1)d_I$ for rectangle decomposable $n$-parameter modules and $d_B \leq (n-1)d_I$ for free $n$-parameter modules. He gave an example for exactness of this bound when $n=2$. 
\end{sloppypar}

\begin{sloppypar}
	Unlike 1-parameter modules, the question of estimating $d_I$ for $n$-parameter modules through efficient algorithms is largely open~\cite{Botnan17}.
	Multi-dimensional matching distance  introduced in \cite{cerri2013betti} provides a lower bound to interleaving distance \cite{landi2014rank} and can be approximated within any error threshold by algorithms proposed in \cite{biasotti2011new, cerri2011new}. But, it cannot provide an upper bound like $d_B$.
	For free, block, rectangle, and triangular decomposable modules, one can compute $d_B$ by computing pairwise interleaving distances between indecomposables in constant time because they have a description of constant complexity. 
	Due to the results mentioned earlier, $d_I$ can be estimated within a constant or dimension-dependent factors by computing $d_B$ for these modules.  It is not obvious how to do the same for the larger class of interval decomposable modules  mentioned in the literature~\cite{bjerkevik2016stability,botnan2016algebraic} where indecomposables may not have constant complexity. These are modules whose indecomposables are bounded by "stair-cases". Our main contribution is a polynomial time algorithm that, given indecomposables, computes $d_B$ exactly for $n$-parameter interval decomposable modules. The algorithm draws upon various geometric and algebraic analysis of the interval decomposable modules that may be of independent interest. It is known that no lower bound in terms of $d_B$ for $d_I$ may exist for these modules~\cite{botnan2016algebraic}. To this end, we complement our result by proposing a distance $d_0$ called {\em dimension distance} that is efficiently computable and satisfies the condition $d_0\leq d_I$. An earlier version of this paper considered only the 2-parameter interval decomposable modules~\cite{DeyCheng18}.
\end{sloppypar}



\section{Persistence modules} 
\begin{sloppypar}
	Our goal is to compute the bottleneck distance between two $n$-parameter interval decomposable persistence modules. The bottleneck distance, originally defined for 1-parameter persistence modules \cite{cohen2007stability} (also see \cite{Bauer:2014:IMB:2582112.2582168}), and later extended to multi-parameter persistence modules
	\cite{botnan2016algebraic} is known to bound the interleaving distance between two persistence modules from above.
\end{sloppypar}

Let $\mathbb k$ be a field, $\VecSp$ be the category of vector spaces over $\mathbb k$, and $\vecsp$ be the subcategory of finite dimensional vector spaces. In what follows, for simplicity, we assume $\mathbb{k}=\mathbb{Z}/2\mathbb{Z}$.
\begin{definition}[Persistence module]
	Let $\Poset$ be a poset category. A $\Poset$-indexed persistence module is a functor $M:\Poset\rightarrow \VecSp$. If $M$ takes values in $\vecsp$, we say $M$ is pointwise finite dimensional (p.f.d).
	The $\Poset$-indexed persistence modules themselves form another category where the natural transformations between functors constitute the morphisms. 
\end{definition}

Here we consider the poset category to be $\Rn$ with the standard partial order and all modules to be p.f.d. We call $\Rn$-indexed persistence modules as $n$-parameter persistence modules. The category of $n$-parameter modules is denoted as $\nmod$. For an $n$-parameter module $M\in \nmod $, we use notation $M_x:=M(x)$ and $\rho^M_{x\rightarrow y}:=M(x\leq y)$. 

\begin{definition}[Shift] For any $\delta \in \mathbb{R}$, we denote  $\vec{\delta}=(\delta, \cdots, \delta)=\delta \cdot \vec{e}$, where $\vec{e}=\sum_i e_i$ with $\{e_i\}_{i=1}^n$ being the standard basis of $\mathbb{R}^n$. 
	We define a shift functor $(\cdot)_{\rightarrow \delta}:\nmod \rightarrow \nmod$ where $M_{\rightarrow \delta}:=(\cdot)_{\rightarrow \delta} (M)$ is given by $M_{\rightarrow \delta}(x)=M(x+\vec{\delta})$ 
	and $M_{\rightarrow\delta}(x\leq y)=M(x+\vec{\delta} \leq y+\vec{\delta})$. In other words, $M_{\rightarrow \delta}$ is the module $M$ shifted diagonally by $\vec\delta$.
\end{definition}

The following definition of interleaving taken from ~\cite{oudot2015persistence} adapts the original definition designed for 1-parameter modules in~\cite{chazal2012structure} to $n$-parameter modules.
\begin{definition}[Interleaving]
	For two persistence modules $M$ and $N$, and $\delta \geq 0$, a $\delta$-interleaving between $M$ and $N$ are two families of linear maps $\{\phi_x: M_x \rightarrow N_{x+\vec{\delta}}\}_{x\in \mathbb{R}^n}$ and $\{\psi_x: N_x \rightarrow M_{x+\vec{\delta}}\}_{x\in \mathbb{R}^n}$ satisfying the following two conditions (see Appendix~\ref{app:miss1} for commutative diagrams):
	
	\begin{itemize}
		\item $\forall x \in \mathbb{R}^n, \rho^M_{x \rightarrow x+2\vec{\delta}} = \psi_{x+\vec\delta} \circ \phi_{x}$ and
		$\rho^N_{x \rightarrow x+2\vec{\delta}} = \phi_{x+\vec\delta} \circ \psi_{x}$
		\item $\forall x\leq y \in \mathbb{R}^n, 
		\phi_{y} \circ \rho^M_{x \rightarrow y} =  \rho^N_{x \rightarrow y} \circ \phi_{x} $ and   
		$\psi_{y} \circ \rho^N_{x \rightarrow y} =  \rho^M_{x \rightarrow y} \circ \psi_{x} $ symmetrically
	\end{itemize}

	If such a $\delta$-interleaving exists, we say $M$ and $N$ are $\delta$-interleaved. We call the first condition \textit{triangular commutativity} and the second condition \textit{square commutativity}.
\end{definition}

\begin{definition}[Interleaving distance]
	Define the interleaving distance between modules $M$ and $N$ as  $d_I(M, N)=\inf_{\delta}\{M \mbox{ and } N \mbox{ are } \delta\textit{-interleaved}\}$. 
	We say $M$ and $N$ are $\infty$-interleaved if they are not $\delta$-interleaved for any $\delta\in \Real^+$, and assign $d_I(M, N)=\infty$.
\end{definition}





\begin{definition}[Matching]
	A matching  $\mu: A \nrightarrow B$ between two multisets $A$ and $B$ is a partial bijection, that is, $\mu: A' \rightarrow B'$ for some $A' \subseteq A$ and $B' \subseteq B$. We say $\Img\mu = B', \coim\mu = A'$.
\end{definition}

For the next definition~\cite{botnan2016algebraic}, we call a module \textit{$\delta$-trivial} if $\rho^M_{x \rightarrow x+\vec\delta} = 0$ for all $x\in \Rn$. 

\begin{definition}[Bottleneck distance]
	Let $M \cong \bigoplus_{i=1}^{m} M_{i}$ and $N\cong \bigoplus_{j=1}^{n} N_{j}$ be two persistence modules, where $M_{i}$ and  $N_{j}$ are indecomposable submodules of $M$ and $N$ respectively. Let $I=\{1,\dotsb,m\}$ and $J=\{1,\dotsb,n\}$. We say $M$ and $N$ are $\delta$-matched for $\delta \geq 0$ if there exists a matching
	$\mu:I \nrightarrow J$ so that,
	(i) $i\in I\setminus \coim\mu \implies M_{i} \mbox{ is } 2\delta$-trivial,
	(ii) $j\in   J\setminus \Img\mu \implies N_{j} \mbox{ is } 2\delta$-trivial, and
	(iii) $i\in            \coim\mu \implies M_{i}\ and\ N_{\mu(i)}\ are\ \delta$-interleaved.
	
	The bottleneck distance is defined as
	$$d_B(M, N) = \inf\{\delta\mid M \mbox{ and } N \mbox{ are }\delta \mbox{-matched}\}.$$
\end{definition} 

The following fact observed in~\cite{botnan2016algebraic} is straightforward from the definition.
\begin{fact}
	$d_I \leq d_B$.
\end{fact}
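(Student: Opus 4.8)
The plan is to show directly that if $M$ and $N$ are $\delta$-matched, then they are $\delta$-interleaved; taking infima over such $\delta$ on both sides then yields $d_I(M,N) \le d_B(M,N)$. So fix $\delta \ge 0$ and a matching $\mu: I \nrightarrow J$ witnessing a $\delta$-match between $M \cong \bigoplus_{i \in I} M_i$ and $N \cong \bigoplus_{j \in J} N_j$.

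First I would assemble the interleaving maps block by block along the matching. For each $i \in \coim\mu$, condition (iii) gives a $\delta$-interleaving $(\phi^i, \psi^i)$ between $M_i$ and $N_{\mu(i)}$; for unmatched $i \in I \setminus \coim\mu$ and unmatched $j \in J \setminus \Img\mu$ I set the corresponding maps to zero. Define $\phi_x : M_x \to N_{x+\vec\delta}$ as the direct sum over $i$ of $\phi^i_x$ (into the summand $N_{\mu(i),\,x+\vec\delta}$) composed with the inclusion, and symmetrically define $\psi_x$ from the $\psi^i_x$; here I am using that $M_x = \bigoplus_i M_{i,x}$ and $N_{x+\vec\delta} = \bigoplus_j N_{j,x+\vec\delta}$ naturally, so that both the structure maps $\rho^M$, $\rho^N$ and these candidate interleaving maps are block-diagonal with respect to the decomposition.

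Next I would verify the two interleaving conditions, which decompose across the summands. On a matched block $M_i \leftrightarrow N_{\mu(i)}$, triangular and square commutativity hold because $(\phi^i,\psi^i)$ is by hypothesis a $\delta$-interleaving. On an unmatched summand $M_i$ with $i \notin \coim\mu$, the relevant maps $\phi, \psi$ restricted there are zero, so square commutativity is immediate, and triangular commutativity reduces to the requirement $\rho^{M_i}_{x \to x+2\vec\delta} = 0$ for all $x$; this is exactly the statement that $M_i$ is $2\delta$-trivial, which is condition (i). The unmatched summands of $N$ are handled identically using condition (ii). Since every summand satisfies both conditions, the assembled maps $(\phi, \psi)$ form a $\delta$-interleaving of $M$ and $N$.

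The only subtlety — and the one place to be slightly careful rather than fully routine — is the compatibility of the chosen isomorphisms $M \cong \bigoplus_i M_i$ and $N \cong \bigoplus_j N_j$ with the structure maps: one must note that these are isomorphisms of persistence modules (not merely pointwise), so $\rho^M$ genuinely acts as the block-diagonal sum of the $\rho^{M_i}$, and likewise for $N$. Granting that, there is no real obstacle; the argument is a bookkeeping verification that a block-diagonal pair of interleavings plus $2\delta$-triviality on the unmatched blocks is a global interleaving. Finally, taking the infimum: for every $\delta$ with $M,N$ being $\delta$-matched we have produced a $\delta$-interleaving, hence $d_I(M,N) \le \inf\{\delta \mid M,N \text{ are }\delta\text{-matched}\} = d_B(M,N)$.
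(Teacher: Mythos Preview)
Your argument is correct and is exactly the standard ``block-diagonal'' construction that the paper has in mind when it calls this fact ``straightforward from the definition'' (with a citation to \cite{botnan2016algebraic} in lieu of a written-out proof). The paper itself gives no proof, so there is nothing further to compare; your verification of the triangular condition on unmatched summands via $2\delta$-triviality is precisely the point of conditions (i) and (ii) in the definition of a $\delta$-match.
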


\subsection{Interval decomposable modules}
\begin{sloppypar}
	Persistence modules whose indecomposables are interval modules (Definition \ref{interval-def}) are called {\em interval decomposable modules}, see for example \cite{botnan2016algebraic}. To account for the boundaries of free modules, we enrich the poset $\Rn$ by adding points at $\pm\infty$ and consider the poset $\bar{\Real}^n=\bar{\Real}\times\ldots\times\bar{\Real}$ where $\bar{\Real}=\Real\cup\set{\pm\infty}$ with the usual additional rule $a\pm\infty=\pm\infty$.
\end{sloppypar}

\begin{definition}
	An interval is a subset $\emptyset \neq I \subset \bar{\Real}^n$ that satisfies the following:
	\begin{enumerate}
		\item If $p,q \in I$ and $p \leq r \leq q$, then $r \in I$;
		\item If $p,q \in I$, then there exists a sequence ($p_1, p_2, ... , p_{2m}) \in I$ for some $m \in \mathbb{N}$ such that $p\leq p_1 \geq p_2 \leq p_3 \geq ... \geq p_{2m} \leq q$.
		We call the sequence ($p=p_0, p_1, p_2, ... , p_{2m}, p_{2m+1}=q$) a path from $p$ to $q$ (in $I$). 
	\end{enumerate}
	
	
\end{definition}

Let $\bar{I}$ denote the closure of an interval $I$ in the standard topology of $\RB^n$. 
The lower and upper boundaries of $I$ are defined as
\begin{eqnarray*}
	L(I)&=&\set{x=(x_1,\cdots, x_n)\in \bar{I}\mid \forall y=(y_1,\cdots, y_n) \mbox{ with } y_i< x_i \; \forall i \implies y\notin I}\\
	U(I)&=&\set{x=(x_1,\cdots, x_n)\in \bar{I}\mid \forall y=(y_1,\cdots, y_n) \mbox{ with } y_i> x_i \; \forall i \implies y\notin I}.
\end{eqnarray*}
Let $B(I)=L(I)\cup U(I)$. 
According to this definition, $\RB^n$ is an interval with boundary $B(\RB^n)$ that consists of all the points with at least one coordinate $\infty$. The vertex set $V(\RB^n)$ consists of $2^n$ corner points of the infinitely large 
cube $\RB^n$ with coordinates $(\pm\infty,\cdots, \pm\infty)$.

\begin{definition}[Interval module]
	\label{interval-def}
	An $n$-parameter \textbf{ interval persistence module}, or \textit{interval module} in short, is a persistence module $M$ that satisfies the following condition: for some interval $I_M\subseteq \RB^n$, called the interval of $M$, 
	\begin{equation*}
	M_x =
	\begin{cases}
	
	\mathbb{k} & \mbox{if $x \in I_M$}\\
	0 & otherwise
	\end{cases}
	\qquad 
	\rho^M_{x \rightarrow y} =
	\begin{cases}
	\mathbb{1} & \mbox{if $x,y \in I_M $}\\
	0 & otherwise
	\end{cases}
	\end{equation*}
	
\end{definition}


It is known that an interval module is indecomposable \cite{lesnick2015theory}.
%

\begin{definition}[Interval decomposable module]
	An $n$-parameter {\em interval decomposable module} is a persistence module that can be decomposed into interval modules.
\end{definition}

\begin{definition}[Rectangle]
	A $k$-dimensional rectangle, $0\leq k \leq n$ , or $k$-rectangle, in $\Real^n$,  is a set $I=[a_1,b_1]\times,\cdots, \times[a_n,b_n], a_i,b_i\in \bar{\Real}$, such that, there exists a size $k$ index set $\Lambda \subseteq [n]$, $\forall i\in \Lambda, a_i\neq b_i$, and $\forall j\in [n]-\Lambda, a_j=b_j$. 
\end{definition}
Note that rectangle is an example of interval. A 0-rectangle is a vertex. A 1-rectangle is an edge.



\begin{wrapfigure}{r}{0.55\textwidth}
	\centerline{\includegraphics[height=5.5cm]{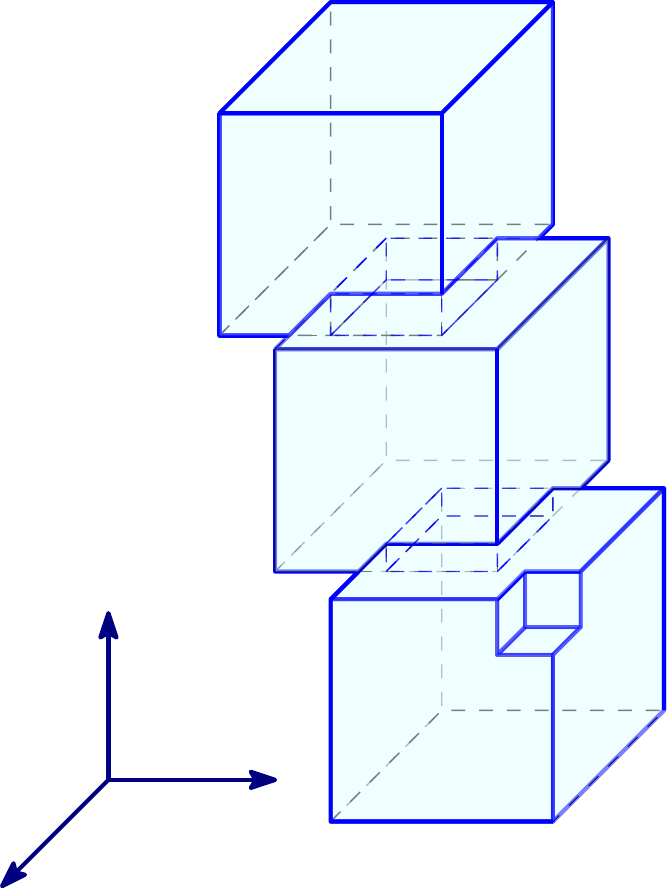}}
\end{wrapfigure}

We say an interval $I\subseteq \Real^n$ is \textbf{discretely presented} if 
it is a finite union of $n$-rectangles.
We also require the boundary of the interval is a $(n-1)$-manifold.
A facet of $I$ is a $(n-1)$-dimensional subset $f=\hat{f}\cap L \subseteq \bar{\Real}^n$ where $\hat{f}=\{x_i=c\}$ is a hyperplane at some standard direction $\vec{e_i}$ in $\Real^n$ and $L$ is either $L(I)$ or $U(I)$. We call such hyperplane $\hat{f}\supseteq f$ the flat of $f$.
We denote the vertex set as $V(I)$ and the facet set as $F(I)$. So the boundary of $I$ is the union of facets. And the vertices of each facet is a subset of $V(I)$.


\begin{wrapfigure}{r}{0.4\textwidth}
	\centerline{\includegraphics[height=4.0cm]{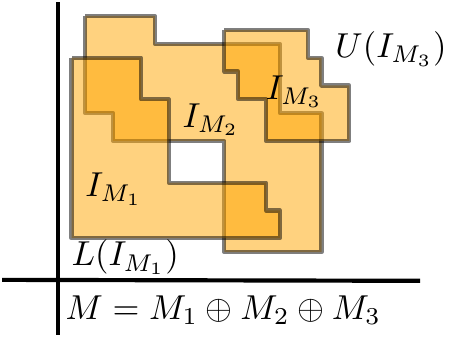}}
\end{wrapfigure}

For $2$-parameter cases, a discretely presented interval $I\subseteq \bar{\Real}^2$ has boundary consisting of a finite set of horizontal and vertical line segments called edges, with end points called vertices, which satisfy the following condition: 
(i) every vertex is incident to either a single horizontal edge or a vertical edge,
(ii) no vertex appears in the interior of an edge.
We denote the set of edges and vertices with $E(I)$ and $V(I)$ respectively. 
We say an $n$-parameter interval decomposable module is finitely presented if it can be decomposed into finitely many interval modules whose intervals are discretely presented (figure on right for an example in 2-parameter cases). They belong to the finitely presented persistence modules as defined in \cite{miller2017data}. In the following, we focus on finitely presented interval decompsable modules.

For an interval module $M$, let $\overline{M}$ be the interval module defined on the closure $\overline{I_M}$. To avoid complication in this exposition, we assume that 
every interval module has closed intervals which is justified by the following proposition (proof in Appendix~\ref{app:miss1}).

\begin{proposition} \label{closure_interleaving}
	$d_I(M, N)=d_I(\overline{M}, \overline{N})$. 
\end{proposition}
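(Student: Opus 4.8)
The plan is to show the two inequalities $d_I(M,N) \le d_I(\overline M, \overline N)$ and $d_I(\overline M,\overline N) \le d_I(M,N)$ separately, each by directly transporting a $\delta$-interleaving from one pair to the other; by density of $\Real$ in $\RB$ the resulting inequalities on the infima are in fact equalities. Since $M$ and $N$ are assumed interval decomposable and the shift/closure operations commute with direct sums, it suffices to treat the case where $M$ and $N$ are themselves interval modules with intervals $I_M, I_N$; then $\overline M, \overline N$ are the interval modules on $\overline{I_M}, \overline{I_N}$. The key geometric observation to establish first is that for an interval $I$ and any $\delta > 0$, the $\delta$-shifted closed interval sits inside the original: $\overline{I} + \vec\delta \subseteq I$ whenever $\delta > 0$ is small — more precisely, a point $x$ with $x + \vec\delta \in \overline{I}$ satisfies $x + \tfrac{\vec\delta}{2} \in I$ because interior directions of approach to the closure land in $I$ itself. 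This lets us convert "maps into the closure" into "maps into the open module" at the cost of an arbitrarily small extra shift.

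For the direction $d_I(\overline M, \overline N) \le d_I(M, N)$: given a $\delta$-interleaving $(\phi, \psi)$ between $M$ and $N$, I would build a $\delta$-interleaving between $\overline M$ and $\overline N$ as follows. At a point $x$, $\overline M_x = \mathbb k$ iff $x \in \overline{I_M}$. If $x \in I_M$ already, use $\phi_x$. If $x \in \overline{I_M} \setminus I_M$, then $\overline M_x = \mathbb k$ but we need a map to $\overline N_{x+\vec\delta}$; here $x + \vec\delta \in I_N$ or not — if $x+\vec\delta \notin \overline{I_N}$ then triangular commutativity forces the relevant composite to be zero anyway, and if $x + \vec\delta \in \overline{I_N}$ we define the map to be the identity when $\overline N_{x+\vec\delta} = \mathbb k$. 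The content is checking triangular and square commutativity: because all nonzero vector spaces are $\mathbb k$ and all nonzero structure maps are $\mathbb 1$, every square of identities commutes, and the only thing to rule out is a square of the form "$\mathbb 1$ composed with $0$ on one side, $0$ on the other" failing — which cannot happen since a composite through a zero space is zero. One must use convexity of intervals (condition 1 in the definition of interval) to guarantee that if both endpoints of a commuting square lie in $\overline{I_M}$ then the module is $\mathbb k$ throughout the relevant comparable chain, so no spurious zero appears.

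For the reverse direction $d_I(M, N) \le d_I(\overline M, \overline N)$: given a $\delta$-interleaving between $\overline M$ and $\overline N$, I would produce, for every $\delta' > \delta$, a $\delta'$-interleaving between $M$ and $N$, which gives $d_I(M,N) \le \delta' $ for all $\delta' > \delta$ hence $\le \delta$. The idea is to precompose with the internal structure map by an extra $\vec{\epsilon}$ (with $\delta + \epsilon = \delta'$): the map $M_x \hookrightarrow \overline M_x \xrightarrow{\bar\phi_x} \overline N_{x + \vec\delta} \to N_{x + \vec\delta + \vec\epsilon}$, where the last arrow exists because the geometric observation gives $x + \vec\delta \in \overline{I_N} \Rightarrow x + \vec\delta + \vec\epsilon \in I_N$ (for $\epsilon>0$), so $\overline N_{x+\vec\delta} = \mathbb k \Rightarrow N_{x+\vec\delta+\vec\epsilon} = \mathbb k$ and the map is $\mathbb 1$; if $\overline N_{x+\vec\delta} = 0$ the map is $0$. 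Again commutativity is automatic because every diagram is built from $\mathbb 1$'s and $0$'s and no zero-space composite can equal a nonzero identity. I expect the main obstacle to be the bookkeeping in the second direction: verifying carefully that the extra $\vec\epsilon$-shift is enough to land inside the open interval at every point simultaneously (this needs discrete presentation, or at least that the interval is "fat" near its closure, so that $\overline I + \vec\epsilon \subseteq I$ genuinely holds — a point on a "vertical" boundary facet of a 2-parameter interval must still move into the interior under a strictly positive diagonal shift, which is where the manifold-boundary and closedness hypotheses earn their keep), and that the triangular commutativity $\rho_{x \to x + 2\vec{\delta'}}$ is recovered and not just something through a larger shift.
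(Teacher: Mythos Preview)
Your approach differs from the paper's, which is much shorter: the paper proves only that $d_I(M,\overline M)=0$ (by exhibiting, for every $\delta>0$, an explicit $\delta$-interleaving between $M$ and $\overline M$) and then invokes the triangle inequality for $d_I$. This sidesteps entirely the need to transport an arbitrary interleaving between $M,N$ to one between $\overline M,\overline N$.

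Your direct-transport plan has two real problems. First, the ``key geometric observation'' is false as stated: neither $\overline I + \vec\delta \subseteq I$ nor ``$x+\vec\delta\in\overline I \Rightarrow x+\tfrac{\vec\delta}{2}\in I$'' holds---a point on the upper boundary $U(\overline I)$ is a counterexample to the first, and a point on the lower boundary $L(\overline I)$ to the second. What \emph{is} true (and suffices for the second direction) is that if $y,z\in\overline I$ with $y<w<z$ strictly componentwise then $w\in I$; this follows from condition~(1) in the definition of interval by approximating $y,z$ from inside $I$. Your second-direction construction can be repaired with this, but the repair is not what you wrote.

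Second, and more seriously, the first direction as you describe it does not produce a morphism. You set $\bar\phi_x=\phi_x$ for $x\in I_M$ and $\bar\phi_x=\Id$ for $x\in\overline{I_M}\setminus I_M$ whenever $x+\vec\delta\in\overline{I_N}$. But by Proposition~\ref{prop_conn} any morphism $\overline M\to\overline N_{\to\delta}$ must be constant on each intersection component of $\overline{I_M}\cap\overline{I_{N_{\to\delta}}}$; a single such component can contain both interior points $y\in I_M$ with $\phi_y=0$ (this occurs whenever the corresponding component of $I_M\cap I_{N_{\to\delta}}$ is trivializable but not valid---exactly the situation Theorem~\ref{di_criteria} is designed to handle) and boundary points $x\in\overline{I_M}\setminus I_M$ where your rule forces $\bar\phi_x=\Id$. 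The resulting $\bar\phi$ then violates square commutativity on the pair $x\le y$. The fix is to use an $\epsilon$-shift in this direction too (factor through $\overline M\to M_{\to\epsilon}\xrightarrow{\phi}N_{\to\epsilon+\delta}\hookrightarrow\overline N_{\to\delta'}$), but at that point you are essentially reproving $d_I(M,\overline M)=0$ and $d_I(N,\overline N)=0$---which is exactly the paper's argument, reached the long way around.
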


\section{Criterion for computing interleaving}
\label{sec:esti_dI}

Given the intervals of the indecomposables (interval modules)  as input,
an approach based on bipartite-graph matching is well known for computing the bottleneck distance $d_B(M,N)$ between two 1-parameter persistence modules $M$ and $N$~\cite{edelsbrunner2010computational}. This approach constructs a bi-partite graph $G$ out of the intervals of $M$ and $N$ and their pairwise interleaving distances including the distances to zero modules. If these distance computations take $O(C)$ time in total, the algorithm for computing $d_B$ takes time $O(m^{\frac{5}{2}}\log m+C)$ if $M$ and $N$ together have $m$ indecomposables altogether. Given indecomposables (say computed  by Meat-Axe~\cite{LS07}), this approach is readily extensible to the $n$-parameter modules if one can compute the interleaving distance between any pair of indecomposables including the zero modules. To this end,
we present an algorithm to compute the interleaving distance between two interval modules $M_i$ and $N_j$ with $t_i$ and $t_j$ vertices respectively on their intervals in $O((t_i+t_j)\log (t_i+t_j))$ time. This gives a total time of $O(m^{\frac{5}{2}}\log m +\sum_{i,j}(t_i+t_j)\log(t_i+t_j))=O(m^{\frac{5}{2}}\log m + t^2\log t)$ where $t$ is the number of vertices over all input intervals. 

Now we focus on computing the interleaving distance between two given intervals. Given two intervals $I_M$ and $I_N$ with $t$ vertices, this algorithm searches a value $\delta$ so that there exists two families of linear maps from $M$ to $N_{\rightarrow \delta}$ and from $N$ to $M_{\rightarrow \delta}$ respectively which satisfy both triangular and square commutativity. This search is done with a binary probing. For a chosen $\delta$ from a candidate set of $O(t)$ values, the algorithm determines the direction of the search by checking two conditions called \textit{trivializability} and \textit{validity} on the intersections of modules $M$ and $N$.

We first present an algorithm for $2$-parameter case since it is more intuitive and the algorithm is relatively simpler. Nevertheless, most of the definitions and claims in this chapter are developed for general $n$-parameter case except a few which are presented for the 2-parameter case first and are generalized later for the n-parameter case. The ones specialized for the 2-parameter case are clearly marked so.

\begin{definition} [Intersection module]
	For two interval modules $M$ and $N$ with intervals $I_M$ and $I_N$ respectively
	let $I_Q = I_M\cap I_N$, which is a disjoint union of intervals, $\coprod I_{Q_i}$.  
	The intersection module $Q$ of $M$ and $N$ is $Q=\bigoplus Q_i$, where $Q_i$ is the interval module with interval $I_{Q_i}$. That is,
	\begin{equation*}
	Q_x= 
	\begin{cases}
	\field{k}  & \mbox{if $x\in I_M\cap I_N$}\\
	0   & otherwise
	\end{cases} \quad \mbox{and for $x\leq y,\,$ }
	\rho^{Q}_{x\rightarrow y}=
	\begin{cases}
	\mathbb{1}  & \mbox{if $x,y \in I_M \cap I_N$}\\
	0   & otherwise
	\end{cases}
	\end{equation*}
\end{definition}
From the definition we can see that the support of $Q$, $supp(Q)$, is $I_M\cap I_N$. 
We call each $Q_i$ an intersection component of $M$ and $N$.  Write $I:=I_{Q_i}$ and consider $\phi:M\rightarrow N$ to be any morphism in the following proposition which says that $\phi$ is constant on $I$.


\begin{proposition}\label{prop_conn}
	$\phi|_I \equiv a\cdot \Id$ for some $a\in\field{k}=\mathbb{Z}/2$.
\end{proposition}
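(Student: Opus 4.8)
The plan is to exploit the path-connectedness built into the definition of an interval. Since $I = I_{Q_i}$ is an interval, any two points $p, q \in I$ are joined by a zigzag path $p \leq p_1 \geq p_2 \leq \cdots \leq q$ lying entirely in $I$. On the support $I$ of the interval module $Q_i$, both $M$ and $N$ restrict to the constant one-dimensional module (all vector spaces are $\mathbb{k}$ and all internal maps are identities, since $I \subseteq I_M$ and $I \subseteq I_N$). So for each $x \in I$, the component $\phi_x : M_x = \mathbb{k} \to N_x = \mathbb{k}$ is multiplication by some scalar $a_x \in \mathbb{k}$, and I must show $a_x$ is independent of $x$ on $I$.

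First I would observe that along any single comparable pair $x \leq y$ in $I$, naturality of $\phi$ gives $\phi_y \circ \rho^M_{x\to y} = \rho^N_{x\to y}\circ \phi_x$; since both restriction maps are the identity on $\mathbb{k}$ (as $x, y \in I_M \cap I_N$), this reads $a_y = a_x$. Thus $a$ is constant along every edge of a zigzag path. Then, given arbitrary $p, q \in I$, I would invoke condition (2) of the interval definition to get a path $p \leq p_1 \geq p_2 \leq \cdots \leq p_{2m} \leq q$ inside $I$, and chain the equalities $a_p = a_{p_1} = a_{p_2} = \cdots = a_q$ along consecutive comparable pairs. Hence $a_p = a_q$ for all $p, q \in I$, so $\phi|_I \equiv a \cdot \mathbb{1}$ for a single $a \in \mathbb{k} = \mathbb{Z}/2$.

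There is essentially no hard obstacle here; the statement is a direct consequence of naturality plus the zigzag-connectivity axiom for intervals. The only point requiring a little care is making sure that both $M$ and $N$ genuinely restrict to the constant module over $I$ — i.e. that the relevant internal maps of $M$ and $N$ between points of $I$ are identities and not zero — which holds precisely because $I$ is contained in both $I_M$ and $I_N$ and these are intervals (hence convex in the order-interval sense), so the pointwise values stay $\mathbb{k}$ and the maps stay $\mathbb{1}$ all along the path. A secondary bookkeeping point is that a zigzag path may pass through points $p_k$ that lie in $I$ but whose comparability is only with their immediate neighbours in the path; this is exactly what condition (2) provides, so no extra argument is needed. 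I would therefore present the proof as: reduce $\phi_x$ to a scalar $a_x$, prove invariance across a single comparability via naturality, and propagate along an interval path.
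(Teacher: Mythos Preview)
Your proposal is correct and follows essentially the same approach as the paper's proof: reduce each $\phi_x$ to a scalar, use naturality across a comparable pair (where both internal maps are $\mathbb{1}$) to force the scalars to agree, and then propagate along the zigzag path guaranteed by condition~(2) of the interval definition. The paper presents this via two small commutative squares (one for $p_i \leq p_{i+1}$ and one for $p_i \geq p_{i+1}$) and an induction on $i$, which is exactly what you describe.
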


\begin{proof}
	\[ \begin{tikzcd}
	M_{p_{i}} \arrow{r}{\Id}  \arrow[d, dashed, "\phi_{p_i}"']   &M_{p_{i+1}}  \arrow[d, dashed, "\phi_{p_{i+1}}"]
	&M_{p_{i}}   \arrow[d, dashed, "\phi_{p_{i}}"']   &M_{p_{i+1}} \arrow{l}[swap]{\Id} \arrow[d, dashed, "\phi_{p_{i+1}}"]\\%
	N_{p_{i}} \arrow{r}[swap]{\Id}              &N_{p_{i+1}} 
	&N_{p_{i}}              &N_{p_{i+1}} \arrow{l}{\Id}
	\end{tikzcd}
	\]
	For any $x,y \in I$, consider a path $(x=p_0, p_1, p_2, ... , p_{2m}, p_{2m+1}=y)$ in $I$ from $x\ to\ y$ and the commutative diagrams above for $p_i\leq p_{i+1}$ (left) and $p_i\geq p_{i+1}$(right) respectively. Observe that  $\phi_{p_i}=\phi_{p_{i+1}}$ in both cases due to the commutativity. Inducting on $i$, we get that $\phi(x)=\phi(y)$.
\end{proof}

\begin{sloppypar}
	\begin{definition} [Valid intersection]
		An intersection component $Q_i$  is $(M,N)\textit{-valid}$ if for each $x\in I_{Q_i}$ the following two conditions hold (see Figure~\ref{fig:valid_intersection}):
		$$
		\mbox{(i) } y \leq x \mbox{ and } y \in I_M \implies y \in I_N, \mbox{ and (ii) } z \geq x \mbox{ and } z \in I_N \implies z \in I_M
		$$
	\end{definition}
\end{sloppypar}

\begin{proposition} \label{prop:valid_0}
	
	Let $\{Q_i\}$ be a set of intersection components of $M$ and $N$ with intervals $\{I_{Q_i}\}$. Let $\{\phi_x\}:M\rightarrow N$ be the family of linear maps defined as $\phi_x=\Id$ for all $x \in I_{Q_i}$ and $\phi_x=0$ otherwise. Then $\phi$ is a morphism if and only if every $Q_i$ is $(M,N)$\textit{-valid}.
	
	
\end{proposition}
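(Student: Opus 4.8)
The plan is to prove both directions by directly checking the defining axioms of a morphism of persistence modules against the valid-intersection conditions, using Proposition~\ref{prop_conn} to control what $\phi$ can look like on each component.

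First I would handle the ``only if'' direction. Suppose $\phi$ with $\phi_x=\Id$ on $\bigcup_i I_{Q_i}$ and $\phi_x=0$ elsewhere is a morphism, and suppose for contradiction that some $Q_i$ fails validity at a point $x\in I_{Q_i}$; say condition (i) fails, so there is $y\le x$ with $y\in I_M$ but $y\notin I_N$. Then $N_y=0$, hence $\phi_y=0$, while $\phi_x=\Id$. Consider the naturality square for $y\le x$: $\rho^N_{y\to x}\circ\phi_y = \phi_x\circ\rho^M_{y\to x}$. The left side is $0$ (since $\phi_y=0$), but the right side is $\Id\circ\Id=\Id\colon M_y\to N_x$, since $y,x\in I_M$ forces $\rho^M_{y\to x}=\Id$ on $M_y=\mathbb{k}$, and $M_y=N_x=\mathbb{k}$. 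This is a contradiction because $\mathbb{k}=\mathbb{Z}/2\neq 0$. The case where condition (ii) fails is symmetric, using a point $z\ge x$ with $z\in I_N\setminus I_M$, so that $\phi_z\colon N_z\to M_z=0$ is $0$ while naturality for $x\le z$ forces $\rho^M_{x\to z}\circ\phi_x$ to be computed on the $N$ side as $\phi_z\circ\rho^N_{x\to z}=0$, yet $\phi_x=\Id$ and both relevant values of $M$ and $N$ are $\mathbb{k}$ on the relevant parts — wait, more carefully: naturality for $x\le z$ reads $\rho^N_{x\to z}\circ\phi_x=\phi_z\circ\rho^M_{x\to z}$; the right side is $0$ since $M_z=0$, but $\rho^N_{x\to z}=\Id$ (as $x,z\in I_N$) and $\phi_x=\Id$, so the left side is $\Id\colon M_x\to N_z$, nonzero, a contradiction. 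Thus every $Q_i$ is $(M,N)$-valid.

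Next, the ``if'' direction. Assume every $Q_i$ is valid; I must check naturality $\rho^N_{x\to y}\circ\phi_x=\phi_y\circ\rho^M_{x\to y}$ for every $x\le y$. Both sides are maps $M_x\to N_y$, and each of $M_x,M_y,N_x,N_y$ is either $0$ or $\mathbb{k}$, so the square is automatically commutative unless $M_x\neq 0$ and $N_y\neq 0$, i.e. $x\in I_M$ and $y\in I_N$. In that case the right-hand side is $\phi_y$ (since $\rho^M_{x\to y}$ is $\Id$ if $y\in I_M$ and $0$ otherwise, but either way composing with $\phi_y$: if $y\notin I_M$ then $\rho^M_{x\to y}=0$ so RHS $=0$; if $y\in I_M$ then RHS $=\phi_y$). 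I would split into the relevant cases on whether $x,y$ lie in $I_M\cap I_N$. The key geometric input: I claim $x\in I_M$ and $y\in I_N$ with $x\le y$, together with validity, forces that if either endpoint lies in an intersection component then so does the other and they lie in the \emph{same} component, with all of $\{z: x\le z\le y\}$ in that component — this is where I use that intervals are convex (axiom 1 of the interval definition) and the two validity conditions to ``fill in'' the segment from $x$ to $y$. Once the segment from $x$ to $y$ is shown to lie in a single $I_{Q_i}$ whenever both sides of the square are nonzero, Proposition~\ref{prop_conn}-style reasoning (or just the definition of $\phi$ and the intersection module) makes both composites equal to $\Id$; and in the remaining cases one side is $0$ and I check the other is too.

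The main obstacle I anticipate is precisely this case analysis in the ``if'' direction: showing that validity propagates membership in the \emph{same} intersection component along a comparable pair $x\le y$, rather than merely component-by-component. Concretely, if $x\in I_{Q_i}$, $y\in I_M\cap I_N$, I need $y\in I_{Q_i}$; this follows because the straight segment $\{z:x\le z\le y\}$ lies in $I_M$ (convexity of $I_M$, as $x,y\in I_M$) and in $I_N$ (convexity of $I_N$), hence in $I_M\cap I_N$, hence it is connected in the intersection and so stays within the single component $I_{Q_i}$ containing $x$. Symmetrically if $y\in I_{Q_j}$ and $x\in I_M\cap I_N$ then $x\in I_{Q_j}$. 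The subtle sub-case is when, say, $x\in I_M\setminus I_N$ while $y\in I_N$: then $N_x=0$ and $M_y$ may be $0$ or $\mathbb{k}$; I use validity condition (ii) at $y$ reversed — actually here I need that this configuration does not create an obstruction, and indeed the naturality square has $\rho^N_{x\to y}=0$ (since $x\notin I_N$), $\phi_x=0$ (since $x\notin$ any $Q_i$), so LHS $=0$, and RHS $=\phi_y\circ\rho^M_{x\to y}$, which is $0$ unless $y\in I_M\cap I_N$ — but then $y\in I_{Q_j}$ for some $j$, and I must rule out $\rho^M_{x\to y}=\Id$, i.e. rule out $x\in I_M$ with $y\in I_{Q_j}$ and $x\notin I_N$; this is exactly forbidden by validity condition (i) applied at the point $y\in I_{Q_j}$ with $x\le y$, $x\in I_M$, which yields $x\in I_N$, a contradiction. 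So the argument closes, but bookkeeping the roughly half-dozen cases cleanly is the real work.
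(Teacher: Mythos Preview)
Your proposal is correct and follows essentially the same approach as the paper: both directions are handled by direct inspection of the naturality square $\rho^N_{x\to y}\circ\phi_x=\phi_y\circ\rho^M_{x\to y}$, with validity furnishing exactly the missing implications in the nontrivial cases. The only organizational difference is that the paper first treats a single intersection component and then obtains the multi-component case by summing the resulting morphisms, whereas you handle multiple components in one pass via the convexity observation that comparable points of $I_M\cap I_N$ lie in the same component; both routes are equally short.
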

See the proof in Appendix~\ref{app:miss1}.

\begin{figure}
	\centerline{\includegraphics[height=3.5cm]{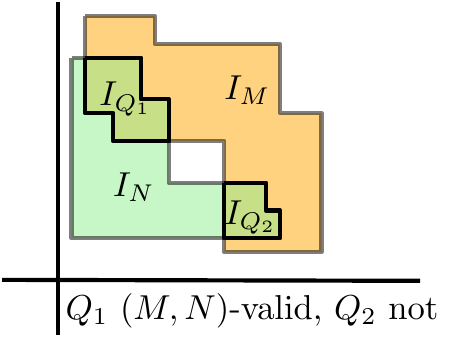}}
	\caption{Examples of a valid intersection and a invalid intersection.}
	\label{fig:valid_intersection}
\end{figure}





From the definition of boundaries of intervals, the following proposition is immediate.

\begin{proposition} 
	Given an interval $I$ and any point $x=(x_1,\cdots, x_n)\in I\setminus (I\cap B(\RB^n))$, we have $x\in L(I) \iff \forall \epsilon>0, x-\vec{\epsilon}\notin I$. Similarly, we have $x\in U(I) \iff \forall \epsilon>0, x+\vec{\epsilon}\notin I$. 
	
\end{proposition}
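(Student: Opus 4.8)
The plan is to unwind the definitions of $L(I)$ and $U(I)$ and then use only two facts: that $x-\vec\epsilon$ (resp. $x+\vec\epsilon$) lies coordinatewise strictly below (resp. above) $x$, and that $I$ is convex in the order-interval sense, i.e. condition~1 of the definition of an interval. By symmetry it is enough to treat $L(I)$: the statement for $U(I)$ follows by reversing every inequality and replacing $x-\vec\epsilon$ by $x+\vec\epsilon$ throughout.

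For the forward implication I would argue directly. Suppose $x\in L(I)$. Since $x\in I\subseteq \bar I$, the defining property of $L(I)$ applies to $x$: every $y$ with $y_i<x_i$ for all $i$ satisfies $y\notin I$. Given $\epsilon>0$, the point $y=x-\vec\epsilon$ has $y_i=x_i-\epsilon<x_i$ in every coordinate, hence $x-\vec\epsilon\notin I$, which is exactly the right-hand side.

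For the converse I would use a contradiction argument together with convexity. Assume $x-\vec\epsilon\notin I$ for every $\epsilon>0$, and suppose $x\notin L(I)$. Since $x\in\bar I$, failure of the $L(I)$ condition means there is some $y\in I$ with $y_i<x_i$ for all $i$. Because $x\notin B(\RB^n)$, every coordinate $x_i$ is finite, so each $x_i-y_i$ is a strictly positive element of $\RB$; choose $\epsilon>0$ with $\epsilon<\min_i(x_i-y_i)$, which is possible since the minimum of finitely many strictly positive (possibly $+\infty$) quantities is strictly positive. Then $y_i<x_i-\epsilon<x_i$ coordinatewise, i.e. $y\le x-\vec\epsilon\le x$ with $y,x\in I$; since $x-\vec\epsilon$ has finite coordinates it lies in $\RB^n$, and condition~1 of the interval definition forces $x-\vec\epsilon\in I$, contradicting the hypothesis. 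Hence $x\in L(I)$.

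There is no real obstacle here; the argument is routine. The only point that needs care is the hypothesis $x\notin B(\RB^n)$, which ensures that $x\pm\vec\epsilon$ stays inside $\RB^n$ with finite coordinates and that the gap $\min_i(x_i-y_i)$ is genuinely positive so that a valid $\epsilon$ exists. I also rely on $x\in I$ rather than merely $x\in\bar I$ when invoking convexity in the converse — precisely what the hypothesis $x\in I\setminus(I\cap B(\RB^n))$ provides.
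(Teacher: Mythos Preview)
Your proof is correct and is precisely the routine unwinding the paper has in mind: the paper does not give a proof at all but simply states that the proposition is immediate from the definition of $L(I)$ and $U(I)$. Your argument spells this out carefully, including the one genuine subtlety (using $x\notin B(\RB^n)$ to guarantee $x_i-\epsilon<x_i$ and a positive gap $\min_i(x_i-y_i)$), so nothing more is needed.
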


\begin{definition} [Diagonal projection and distance]
	Let $I$ be an interval and $x\in \RB^n$. Let $\Delta_x=\{x+ \vec{\alpha}\mid  \alpha \in \Real \}$ denote the line called {\em diagonal} with slope $1$ that passes through $x$. We define (see Figure~\ref{fig:dl}) 
	\begin{equation*}
	\dl(x,I)=
	\begin{cases}
	\min_{y\in \Delta_x\cap I}\set{d_\infty(x, y):=|x-y|_{\infty}} \mbox{ if } \Delta_x\cap I \neq \emptyset\\
	+\infty \mbox{ otherwise.}
	\end{cases}
	\end{equation*}
	In case $\Delta_x\cap I\neq \emptyset$, define $\pi_I(x)$, called the projection point of $x$ on $I$, to be the point $y\in{\Delta_x\cap I}$ where  $\dl(x,I)=d_\infty(x, y)$. 
	
	
\end{definition}
Note that $\forall \alpha\in \Real, \quad  \pm \infty + \alpha = \pm \infty$. Therefore, for $x\in V(\RB^n)$, the line collapses to a single point. In that case, $\dl(x, I)\neq+\infty$ if and only if $x\in I$, which means $\pi_I(x)=x$.

\begin{figure}
	\centering
	\includegraphics{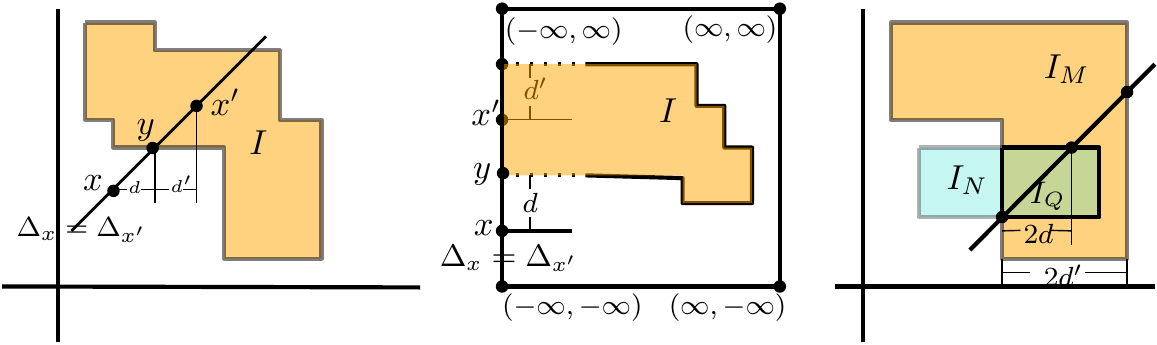}
	\caption{$d=\dl(x,I)$, $y=\pi_I(x)$, $d'=\dl(x',L(I))$ (left); $d=\dl(x,I)$ and $d'=\dl(x',U(I))$ are defined on the left edge of $B(\RB^2)$ (middle); $Q$ is $d'_{(M,N)}$- and $d_{(N,M)}$-trivializable (right) }
	\label{fig:dl}
\end{figure}


Notice that upper and lower boundaries of an interval are also intervals by definition. With this understanding,
following properties of $\dl$ are obvious from the above definition.

\begin{fact}  
	
	\begin{enumerate}\label{fact:dl_distance} 
		\item [(i)] For any $x \in I_M$,
		$$\dl(x, U(I_M))=\sup_{\delta\in \RB}\{x+\vec{\delta} \in I_M\} \mbox{ and } \dl(x, L(I_M))=\sup_{\delta\in \RB}\{x-\vec{\delta} \in I_M\}.$$
		\item[(ii)] Let $L=L(I_M)$ or $U(I_M)$ and let $x,x'$ be two points such that $\pi_L(x), \pi_L(x')$ both exist. If $x$ and $x'$ are on the same facet or the same diagonal line, then $|\dl(x, L)-\dl(x', L)|\leq d_\infty(x, x')$. 
	\end{enumerate}

\end{fact}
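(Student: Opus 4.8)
My plan is to read both parts off directly from the definition of $\dl$, the definitions of the boundaries $L(I),U(I)$, and the defining ``convexity'' property of an interval (if $p,q\in I$ and $p\le r\le q$ then $r\in I$), using the standing assumption (Proposition~\ref{closure_interleaving}) that $I_M$ is closed, hence compact in the compact space $\RB^n$.

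For part (i), I would fix $x\in I_M$ and set $\delta^{*}=\sup\{\delta\in\RB : x+\vec\delta\in I_M\}$; this set contains $0$, so $\delta^{*}\ge 0$. Since $\delta\mapsto x+\vec\delta$ is continuous and $I_M$ is closed, $\{\delta : x+\vec\delta\in I_M\}$ is a closed (hence compact) subset of $\RB$, so the supremum is attained and $x+\vec{\delta^{*}}\in I_M$ (this covers $\delta^{*}=+\infty$, where $x+\vec{\delta^{*}}$ is the vertex $(+\infty,\dots,+\infty)\in\overline{I_M}=I_M$). The key step is to show $x+\vec{\delta^{*}}\in U(I_M)$: given any $w$ with $w_j>(x+\vec{\delta^{*}})_j$ for all $j$, I would set $\delta'=\min_j(w_j-x_j)>\delta^{*}$, note $x\le x+\vec{\delta'}\le w$ coordinatewise, and conclude that if $w\in I_M$ then convexity forces $x+\vec{\delta'}\in I_M$, contradicting $\delta'>\delta^{*}$; hence $w\notin I_M$, i.e.\ $x+\vec{\delta^{*}}\in U(I_M)$. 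Finally I would observe that any point of $\Delta_x\cap U(I_M)$ lies in $I_M$, hence equals $x+\vec\delta$ with $\delta\le\delta^{*}$, and that $\delta<\delta^{*}$ is impossible because then $x+\vec{\delta^{*}}\in I_M$ sits coordinatewise above it; so $x+\vec{\delta^{*}}$ is the unique point of $\Delta_x\cap U(I_M)$, whence $\pi_{U(I_M)}(x)=x+\vec{\delta^{*}}$ and $\dl(x,U(I_M))=d_\infty(x,x+\vec{\delta^{*}})=\delta^{*}$. The statement for $L(I_M)$ follows by the mirror argument applied to $\{\delta:x-\vec\delta\in I_M\}$.

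For part (ii) I would first reduce to proving $\dl(x',L)\le\dl(x,L)+d_\infty(x,x')$ and then exchange $x$ and $x'$. In the same-diagonal case, $\Delta_x=\Delta_{x'}$, so $\pi_L(x)\in\Delta_x\cap L=\Delta_{x'}\cap L$ is admissible in the minimum defining $\dl(x',L)$, and the triangle inequality gives $\dl(x',L)\le d_\infty(x',\pi_L(x))\le d_\infty(x,x')+\dl(x,L)$. In the same-facet case, I would note that $x,x'$ are points of a facet of a closed interval contained in $I_M$ (e.g.\ $I_M$ itself or an intersection component), hence $x,x'\in I_M$, so part (i) applies; taking $L=L(I_M)$ (the case $U(I_M)$ is symmetric) and $z=\pi_{L(I_M)}(x)=x-\vec{\dl(x,L)}\in L(I_M)$, for any $\delta>\dl(x,L)+d_\infty(x,x')$ and any coordinate $j$ we get $x'_j-\delta< x_j-\dl(x,L)=z_j$ since $x'_j-x_j\le d_\infty(x,x')$, so $x'-\vec\delta<z$ coordinatewise and thus $x'-\vec\delta\notin I_M$ because $z\in L(I_M)$; by part (i) this gives $\dl(x',L(I_M))=\sup\{\delta:x'-\vec\delta\in I_M\}\le\dl(x,L)+d_\infty(x,x')$.

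The only steps that are more than bookkeeping are showing the exit point $x\pm\vec{\delta^{*}}$ lands on the correct boundary piece (the point where interval convexity and closedness of $I_M$ are genuinely used) and, for the facet case, that facet points of $I_M$—or of intervals contained in it—lie in $I_M$ so that the part~(i) formula is available; the points at $\pm\infty$ should cause no trouble thanks to the conventions $a\pm\infty=\pm\infty$ and the observation that $(+\infty,\dots,+\infty)$ lies vacuously in $U(I_M)$ whenever it lies in $I_M$.
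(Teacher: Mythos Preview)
The paper does not prove this Fact at all; it simply asserts that ``following properties of $\dl$ are obvious from the above definition'' and moves on. Your proposal therefore supplies what the paper omits, and it does so correctly: part~(i) is read off from the interval ``convexity'' axiom together with closedness of $I_M$, and part~(ii) is obtained from the triangle inequality in the diagonal case and from part~(i) in the facet case. This is exactly the kind of unpacking the authors presumably had in mind.

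One small point worth flagging: in the facet case of part~(ii) you assume $x,x'\in I_M$ so that the formula from part~(i) is available. The Fact as literally stated does not impose this, and the phrase ``on the same facet'' is left ambiguous (facet of which interval?). Your reading---that the facet belongs to $I_M$ or to an intersection component contained in $I_M$---matches every use of Fact~\ref{fact:dl_distance}(ii) in the paper (notably Corollary~\ref{cor:stable_intersection_property}(ii), where the points are vertices of $I_Q$ and $I_{Q^\epsilon}$, both subsets of $I_M$), so the restriction is harmless here. If you wanted a version that also covers points possibly outside $I_M$, you would need a separate argument not going through part~(i); but the paper never needs that generality.
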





Set $VL(I):=V(I)\cap L(I)$, $EL(I):=E(I)\cap L(I)$, $VU(I):=V(I)\cap U(I)$, and $EU(I):=E(I)\cap U(I)$. Following proposition is proved in Appendix~\ref{app:miss1}.

\begin{proposition} \label{prop_valid_intersect}
	For an intersection component $Q$ of $M$ and $ N$ with interval $I$, the following conditions are equivalent:
	\begin{enumerate}[label={(\arabic*)}]
		\item $Q$ is $(M, N)$\textit{-valid}.
		\item $L(I) \subseteq L(I_M)$ and $U(I) \subseteq U(I_N)$.
		\item $VL(I) \subseteq L(I_M)$ and $VU(I) \subseteq U(I_N)$.    
	\end{enumerate}
	
\end{proposition}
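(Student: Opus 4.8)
The plan is to prove the three conditions equivalent by establishing the cycle $(1)\Rightarrow(2)\Rightarrow(3)\Rightarrow(1)$, where the implication $(2)\Rightarrow(3)$ is immediate since $VL(I)\subseteq L(I)$ and $VU(I)\subseteq U(I)$. So the real content is in $(1)\Rightarrow(2)$ and $(3)\Rightarrow(1)$.

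For $(1)\Rightarrow(2)$, I would argue by contrapositive, or directly. Suppose $Q$ is $(M,N)$-valid and take $x\in L(I)$; I want $x\in L(I_M)$. Since $x\in L(I)\subseteq\bar I\subseteq I_M\cap I_N$ (using the closedness assumption on the intervals, which lets us treat $\bar I$ as sitting inside the closed intervals $I_M,I_N$), we have $x\in I_M$. Now if $x\notin L(I_M)$, then by the characterization of lower boundary (Proposition before Definition of diagonal projection, or directly from the definition of $L$), there is a point $y$ with $y_i<x_i$ for all $i$ and $y\in I_M$. Validity condition (i) applied at $x$ — after first moving slightly into $I$ if $x$ is only in the closure — then forces $y\in I_N$, hence $y\in I_M\cap I_N$, and convexity-along-chains of the interval $I$ (condition 1 of the interval definition, plus the fact that $y\le x$ and $x$ is in the component) should place $y$ in the same component, contradicting $x\in L(I)$. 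The symmetric argument with condition (ii) and $U$ handles $U(I)\subseteq U(I_N)$. The slightly delicate point here is handling boundary points of $I$ that lie in $B(\RB^n)$ or points only in the closure $\bar I$ rather than $I$ itself; one wants a small perturbation argument or a limiting argument to reduce to interior points where validity is stated.

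For $(3)\Rightarrow(1)$, I would show that controlling only the vertices $VL(I)$ and $VU(I)$ suffices to recover validity at every point of $I$. The idea is that the lower boundary $L(I)$ is built from facets, each facet lying in a flat $\{x_i=c\}$, and the vertices of a facet determine the facet (a facet of a discretely presented interval is itself a union of $(n-1)$-rectangles whose own vertices are among $V(I)$). So if every vertex in $VL(I)$ lies in $L(I_M)$, then propagating along each facet — using that $L(I_M)$ is itself an interval and is "closed downward" in the appropriate diagonal sense (Fact about $\dl$ and the monotonicity of $\dl(\cdot,L(I_M))$ along a facet) — gives $L(I)\subseteq L(I_M)$, and symmetrically $U(I)\subseteq U(I_N)$; this is exactly condition (2). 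Then one still needs $(2)\Rightarrow(1)$: given $x\in I$ and $y\le x$, $y\in I_M$, one shows $y\in I_N$ by following the diagonal/coordinate path from $y$ up toward $L(I)\subseteq L(I_M)$ and using that between $L(I)$ and $x$ the point stays in $I\subseteq I_N$; symmetrically for (ii). So the argument is really $(3)\Rightarrow(2)\Rightarrow(1)$, with $(2)\Rightarrow(1)$ a short path-chasing argument analogous to Proposition~\ref{prop_conn}.

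I expect the main obstacle to be the combinatorial geometry in $(3)\Rightarrow(2)$: justifying rigorously that the finitely many vertices of a facet pin down the whole facet, and that membership of all those vertices in $L(I_M)$ forces the entire facet into $L(I_M)$. This requires knowing that $L(I_M)$ restricted to the flat containing the facet behaves like a monotone staircase and that the facet of $I$, being contained in $I_M$ and sharing the flat, cannot "escape" below $L(I_M)$ once its corners are on $L(I_M)$ — essentially a convexity/monotonicity statement one gets from Fact~\ref{fact:dl_distance}(ii) and the discretely-presented hypothesis. The boundary-at-infinity cases and the closure issues in $(1)\Rightarrow(2)$ are a secondary nuisance that the assumption of closed intervals (Proposition~\ref{closure_interleaving}) is there to absorb.
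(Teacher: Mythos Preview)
Your proposal is correct and follows essentially the same route as the paper: the paper proves $(1)\Leftrightarrow(2)$ via the same contrapositive-of-validity and diagonal-projection arguments you sketch, and dispatches $(2)\Leftrightarrow(3)$ in a single line by observing that $L(I)$ and $U(I)$ are monotone staircases uniquely determined by their vertices. Your anticipated difficulty in $(3)\Rightarrow(2)$ is therefore overblown---no appeal to Fact~\ref{fact:dl_distance}(ii) is needed, only the observation that if both endpoints of an axis-parallel edge of $L(I)$ lie on the monotone staircase $L(I_M)$ then so does the whole edge---and your closure and boundary-at-infinity worries are absorbed by the paper's standing assumption that all intervals are closed.
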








\begin{definition} [Trivializable intersection]
	Let $Q$ be a connected component of the intersection of two modules $M$ and $N$. 
	For each point $x\in I_Q$, define
	$$d^{(M,N)}_{triv}(x) =\max\{\dl(x, U(I_M))/2, \dl(x, L(I_N))/2)\}.$$ For $\delta \geq 0$, we say a point $x$ is $\delta_{(M, N)}$\textit{-trivializable} if $d^{(M,N)}_{triv}(x) < \delta$.
	We say an intersection component $Q$ is $\delta_{(M,N)}$\textit{-trivializable} if each point in $I_Q$ is $\delta_{(M,N)}$\textit{-trivializable} (Figure~\ref{fig:dl}). We also denote $d_{triv}^{(M, N)}(I_Q):=\sup_{x\in I_Q}{\{d_{triv}^{(M, N)}(x)\}}$
\end{definition}




Following proposition discretizes the search for trivializability (proof in Appendix \ref{app:miss1}).
\begin{proposition} \label{prop:triv}
	
	An intersection component $Q$ is $\delta_{(M,N)}$\textit{-trivializable} if and only if every vertex of $Q$ is $\delta_{(M,N)}$\textit{-trivializable}.
\end{proposition}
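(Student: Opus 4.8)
The statement to prove is Proposition~\ref{prop:triv}: an intersection component $Q$ is $\delta_{(M,N)}$-trivializable iff every vertex of $Q$ is $\delta_{(M,N)}$-trivializable. One direction is trivial — if every point of $I_Q$ is $\delta_{(M,N)}$-trivializable, then in particular every vertex is. So the real content is the converse: assuming $d^{(M,N)}_{triv}(v)<\delta$ for all $v\in V(Q)$, show $d^{(M,N)}_{triv}(x)<\delta$ for all $x\in I_Q$. Since $d^{(M,N)}_{triv}(x)=\max\{\dl(x,U(I_M))/2,\ \dl(x,L(I_N))/2\}$, it suffices to treat the two terms separately: I would show that $\sup_{x\in I_Q}\dl(x,U(I_M))$ is attained (as a sup, and in the strict-inequality sense we need) at a vertex of $Q$, and symmetrically for $\dl(x,L(I_N))$.

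Here is how I would run the $U(I_M)$ argument. Fix $x\in I_Q$ and let $d=\dl(x,U(I_M))$; by Fact~\ref{fact:dl_distance}(i) this equals $\sup\{\delta'\in\RB: x+\vec{\delta'}\in I_M\}$, i.e.\ how far one can travel along the positive diagonal from $x$ and still stay in $I_M$. The plan is to move $x$ within $I_Q$ in a direction that does not decrease this diagonal reach, until I land on a vertex of $Q$. Concretely: the function $x\mapsto\dl(x,U(I_M))$ restricted to the diagonal line $\Delta_x$ behaves in a controlled (monotone, with slope at most one) way — this is the content of Fact~\ref{fact:dl_distance}(ii) applied to $U(I_M)$. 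Starting from $x$, slide along a coordinate direction (or combination) inside $I_Q$; because $I_Q$ is a discretely presented interval, its boundary is a union of axis-parallel facets, so I can always push $x$ toward lower coordinates within $I_Q$ (which, by validity-type reasoning and monotonicity of the diagonal reach, only increases $\dl(x,U(I_M))$) until $x$ hits a facet, then continue within that facet toward its boundary, dropping in dimension each time, until reaching a vertex $v\in V(Q)$ with $\dl(v,U(I_M))\geq\dl(x,U(I_M))$. Then $d^{(M,N)}_{triv}(x)\leq\max\{\dl(v,U(I_M))/2,\ \dl(x,L(I_N))/2\}$, and I handle the second term by the mirror-image construction (pushing toward higher coordinates, using $L(I_N)$), arriving at a possibly different vertex $v'$ with $\dl(v',L(I_N))\geq\dl(x,L(I_N))$. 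Combining, $d^{(M,N)}_{triv}(x)\leq\max\{d^{(M,N)}_{triv}(v),\ d^{(M,N)}_{triv}(v')\}<\delta$.

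The cleanest way to organize the monotonicity step is probably: show that for $x\leq y$ both in $I_Q$, one has $\dl(x,U(I_M))\geq \dl(y,U(I_M))$ whenever the relevant diagonal reaches land in $I_M$ — because any diagonal ray from $y$ staying in $I_M$ can be "prepended" to get a longer diagonal segment through $x$, using that $I_M$ is an interval and $x,y\in I_M$. Dually $\dl(x,L(I_N))\leq\dl(y,L(I_N))$ for $x\leq y$. So $\dl(\cdot,U(I_M))$ is order-reversing and $\dl(\cdot,L(I_N))$ is order-preserving on $I_Q$. Since $I_Q$ is bounded by finitely many axis-parallel facets and is itself a finite union of $n$-rectangles, an order-reversing function on it attains its supremum at a minimal point of $I_Q$, and minimal points of a discretely presented interval are vertices (lying in $L(I_Q)$); similarly the order-preserving function attains its sup at a vertex in $U(I_Q)$. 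That reduces the whole thing to finitely many vertex checks.

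The main obstacle I anticipate is the geometric "push to a vertex" step — making rigorous that one can always move monotonically within $I_Q$ to a vertex without decreasing $\dl(\cdot,U(I_M))$, given that $I_Q$ need not be convex (it is only a "staircase" region, a finite union of rectangles) and its local structure near the boundary requires the boundary-is-a-manifold and discretely-presented hypotheses stated earlier. I would lean on Fact~\ref{fact:dl_distance}(ii) to control what happens as $x$ varies along a single facet or along a diagonal, and on the interval axioms (condition~(2) on paths $p\leq p_1\geq p_2\leq\cdots$) to guarantee that monotone moves stay inside $I_Q$. The supremum-vs-strict-inequality bookkeeping ($d^{(M,N)}_{triv}(I_Q):=\sup_{x}d^{(M,N)}_{triv}(x)$ and we want $<\delta$ everywhere) is a minor point: since there are finitely many vertices, $\max_v d^{(M,N)}_{triv}(v)<\delta$ is a genuine strict inequality, and the reduction above gives $d^{(M,N)}_{triv}(x)\leq \max_v d^{(M,N)}_{triv}(v)$ for every $x$, which closes the argument.
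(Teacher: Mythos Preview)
Your argument is correct and takes a genuinely different route from the paper's. The paper first reduces to the boundary $B(I_Q)$ and then invokes Proposition~\ref{prop:distance_bound_2}: for $x$ on an edge of $B(I_Q)$, the value $\dl(x,L)$ is sandwiched between $\dl$-values at endpoints of the edge containing $x$ and the edge containing $\pi_L(x)$, a piecewise-linear computation specific to axis-parallel edges and diagonal projection. You instead exploit the order-theoretic convexity of intervals directly: from $p\leq r\leq q$ with $p,q\in I_M\Rightarrow r\in I_M$ you get that $x\mapsto\dl(x,U(I_M))$ is order-reversing on $I_M$ and $x\mapsto\dl(x,L(I_N))$ is order-preserving on $I_N$, so each term of $d_{triv}^{(M,N)}$ is dominated by its value at a minimal (resp.\ maximal) vertex of $I_Q$. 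Your route is cleaner and essentially dimension-agnostic, while the paper's edge-based sandwich bound is more explicit and is reused later to show membership in the candidate set $S$ (Proposition~\ref{prop:dtriv_in_S} and Corollary~\ref{cor:distance_bound}). The one step you correctly flag as needing care---that every point of a closed, discretely presented interval lies above some vertex of $VL(I_Q)$ and below some vertex of $VU(I_Q)$---is immediate in the 2D staircase picture and in general follows from $I_Q$ being a finite union of closed $n$-rectangles (each rectangle has a least and greatest corner).
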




Recall that for two modules to be $\delta$-interleaved, we need two families of linear maps satisfying both triangular commutativity and square commutativity. For a given $\delta$, Theorem~\ref{di_criteria} below provides criteria which ensure that such linear maps exist. In our algorithm, we make sure that these criteria are verified.


Given an interval module $M$ and the diagonal line $\Delta_x$ for any $x\in \RB^n$, there is a $1$-parameter persistence module $M|_{\Delta_x}$ which is the functor restricted on the poset $\mathbf{\Delta}_x$ as a subcategory of $\bar{\mathbb{R}^n}$. We call it a \textit{1-parameter slice} of $M$ along $\Delta_x$. 
Define 
$$\delta^* = \inf_{\delta\in\RB} \set{\delta: \forall x\in \RB^n, M|_{\Delta_x} \mbox{ and } N|_{\Delta_x}\mbox{ are } \delta\mbox{-interleaved}}$$ 

Equivalently, we have

$$
\delta^*=\sup_{x\in \RB^n}\set{d_I(M|_{\Delta_x}, N|_{\Delta_x})}
$$

We have the following Proposition and Corollary from the equivalent definition of $\delta^*$.


\begin{proposition} \label{prop_tria_comm} 
	For two interval modules $M, N$ and $\delta > \delta^*  \in \Real^+$,
	there exist two families of linear maps $\phi=\set{\phi_x: M_x \rightarrow N_{(x+\delta)}}$ and $\psi=\set{\psi_x: N_x \rightarrow M_{(x+\delta)}}$ such that for each $x\in\RB^n$, the 1-parameter slices $M|_{\Delta_x}$ and $N|_{\Delta_x}$ are $\delta$-interleaved by the linear maps $\phi|_{\Delta_x}$ and $\psi|_{\Delta_x}$.
	
\end{proposition}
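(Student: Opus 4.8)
<br>

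The plan is to construct the families $\phi$ and $\psi$ "slice by slice" along diagonals, using the fact that the $1$-parameter isometry theorem gives us interleavings on each slice, and then verify these assemble into well-defined families of linear maps. First I would observe that for each $x \in \RB^n$, the diagonal line $\Delta_x$ depends only on the diagonal coset of $x$; that is, $\Delta_x = \Delta_{x'}$ whenever $x' = x + \vec\alpha$ for some $\alpha \in \Real$. So the collection of distinct diagonals partitions $\RB^n$, and we may pick one representative diagonal for each coset. For each such diagonal $\Delta$, the restrictions $M|_\Delta$ and $N|_\Delta$ are $1$-parameter p.f.d.\ modules, and since $\delta > \delta^* \geq d_I(M|_\Delta, N|_\Delta)$, the isometry theorem (or directly the definition of $d_I$ on $1$-parameter modules) yields a $\delta$-interleaving $(\phi^\Delta, \psi^\Delta)$ between them. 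Define $\phi_x := \phi^\Delta_x$ and $\psi_x := \psi^\Delta_x$ where $\Delta = \Delta_x$; since the $1$-parameter slice $M|_{\Delta_x}$ at index $x$ has underlying vector space exactly $M_x$ (and likewise $N|_{\Delta_x}$ at index $x + \vec\delta$ is $N_{x+\vec\delta}$), these maps have the right source and target, so $\phi = \set{\phi_x}$ and $\psi = \set{\psi_x}$ are well-defined families of linear maps $M_x \to N_{x+\vec\delta}$ and $N_x \to M_{x+\vec\delta}$.

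Next I would check that the promised compatibility holds: by construction, for each fixed $x$, the pair $(\phi|_{\Delta_x}, \psi|_{\Delta_x})$ is precisely the chosen interleaving $(\phi^{\Delta_x}, \psi^{\Delta_x})$, which is a $\delta$-interleaving of $M|_{\Delta_x}$ with $N|_{\Delta_x}$. This is essentially immediate from the definition of $\phi_x, \psi_x$ since two points $x, y$ lie on the same diagonal iff $\Delta_x = \Delta_y$, so restricting the global families to a diagonal recovers exactly the slice interleaving we put there. There is one subtlety worth addressing: one must confirm that $\delta > \delta^*$ in $\Real^+$ indeed implies that \emph{every} slice pair is $\delta$-interleaved, which follows from the equivalent characterization $\delta^* = \sup_x d_I(M|_{\Delta_x}, N|_{\Delta_x})$ together with the fact that, for $1$-parameter modules, the interleaving relation is closed under increasing $\delta$ and the infimum defining $d_I$ is attained or approached so that $\delta > d_I$ guarantees an actual $\delta$-interleaving. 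One also uses that $M|_{\Delta_x}$, being a $1$-parameter p.f.d.\ module, decomposes into intervals (Crawley-Boevey/Webb), so $d_I$ equals $d_B$ and the relevant interleavings genuinely exist.

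The main obstacle I anticipate is not the existence of per-slice interleavings but the \emph{independence of choices}: the statement only asks for families $\phi, \psi$ whose restriction to each diagonal is a $\delta$-interleaving, and it does \emph{not} ask for square commutativity across different diagonals, so in fact no gluing compatibility between distinct diagonals is required here. Thus the proof is genuinely a matter of (i) parametrizing the diagonals, (ii) invoking the $1$-parameter result on each, and (iii) bookkeeping that the assembled maps have the correct domains and codomains and that restriction recovers the slice data. The only place to be slightly careful is the behavior on $V(\RB^n)$ and on boundary diagonals where $\Delta_x$ may collapse to a point or lie in $B(\RB^n)$; there the $1$-parameter slice is a module over a degenerate poset (a point, or a ray), but the isometry/decomposition theory still applies trivially, and the maps $\phi_x, \psi_x$ are then forced (identity or zero on a one-dimensional or zero space), so no difficulty arises. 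I would close by noting explicitly that this proposition isolates exactly the "triangular commutativity along diagonals" part of a future $\delta$-interleaving, with square commutativity to be arranged separately.
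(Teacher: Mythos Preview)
Your proposal is correct and follows essentially the same route as the paper, which simply states that the proposition follows ``from the equivalent definition of $\delta^*$'' without further argument; you have spelled out that one-line justification in detail. One minor remark: you invoke the isometry theorem and Crawley-Boevey/Webb for the slices, but since $M$ and $N$ are interval modules, each slice $M|_{\Delta_x}$ is already a single $1$-parameter interval module (the intersection $I_M\cap\Delta_x$ is a segment), so no decomposition theorem is needed and the existence of a $\delta$-interleaving on each slice is elementary.
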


\begin{corollary} \label{cor:di_delta_star_}
	$d_I(M,N)\geq \delta^*$
\end{corollary}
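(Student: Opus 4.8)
The plan is to derive Corollary~\ref{cor:di_delta_star_} directly from Proposition~\ref{prop_tria_comm} together with the equivalent characterization of $\delta^*$ as a supremum of slice-wise interleaving distances. First I would argue by contradiction: suppose $d_I(M,N) < \delta^*$, so there is some $\delta$ with $d_I(M,N) \le \delta < \delta^*$ for which $M$ and $N$ are $\delta$-interleaved, witnessed by families $\{\phi_x\}$ and $\{\psi_x\}$. The key observation is that interleaving is preserved under restriction to a diagonal slice: since the diagonal line $\Delta_x$ is a subposet of $\bar{\mathbb{R}}^n$ closed under the shift $x \mapsto x + \vec\delta$, the restricted maps $\phi|_{\Delta_x}$ and $\psi|_{\Delta_x}$ have the correct domains and codomains ($M|_{\Delta_x}$ shifted by $\delta$ agrees with $N|_{\Delta_x}$ evaluated appropriately, and vice versa), and both the triangular and square commutativity conditions are just instances of the corresponding conditions for the ambient modules, restricted to comparable pairs lying on $\Delta_x$. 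Hence every slice pair $M|_{\Delta_x}, N|_{\Delta_x}$ is $\delta$-interleaved, so $d_I(M|_{\Delta_x}, N|_{\Delta_x}) \le \delta$ for all $x$, and taking the supremum gives $\delta^* \le \delta < \delta^*$, a contradiction.

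Alternatively — and perhaps more cleanly, since it avoids the strict inequality bookkeeping — I would prove the contrapositive-free version: show $d_I(M,N) \ge \delta^*$ by showing that for every $\delta$ achieving (or approximating) a $\delta$-interleaving of $M$ and $N$, the restriction argument above forces $\delta \ge d_I(M|_{\Delta_x}, N|_{\Delta_x})$ for each $x$, hence $\delta \ge \sup_x d_I(M|_{\Delta_x}, N|_{\Delta_x}) = \delta^*$; taking the infimum over all such $\delta$ yields $d_I(M,N) \ge \delta^*$. This direction uses only the trivial ``slices of an interleaving are interleavings'' fact and does not even require Proposition~\ref{prop_tria_comm}, which is really the converse statement (that slice-wise $\delta$-interleavability for all $x$ lets one assemble global maps satisfying the commutativities). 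I would state the corollary's proof using whichever of these the authors intend; given its placement immediately after Proposition~\ref{prop_tria_comm}, they may want to phrase it as: the proposition shows global maps restrict compatibly to slices, and conversely any global interleaving restricts to slice interleavings, pinning $d_I(M,N)$ below by $\delta^*$.

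There is essentially no hard part here: the only thing to check carefully is that restricting a $\delta$-interleaving to $\Delta_x$ genuinely yields a $\delta$-interleaving of the $1$-parameter slices, which amounts to noting (i) $\Delta_x$ is totally ordered and invariant under diagonal $\delta$-shift, so all four families of maps in the interleaving have the right types when restricted, and (ii) the triangular identities $\rho^M_{y \to y + 2\vec\delta} = \psi_{y+\vec\delta}\circ\phi_y$ and the square identities $\phi_z \circ \rho^M_{y\to z} = \rho^N_{y\to z}\circ\phi_y$ (and their symmetric counterparts) hold in particular for all $y \le z$ on $\Delta_x$, which is exactly what the slice interleaving requires. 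I would write this out in one short paragraph and then conclude with the supremum/infimum manipulation. One minor subtlety worth a remark: the equivalence $\delta^* = \sup_x d_I(M|_{\Delta_x}, N|_{\Delta_x})$ — which the excerpt asserts just before the proposition — should be used here, and if it has not been justified I would include a one-line argument that the infimum over $\delta$ uniform across all slices equals the supremum over slices of the per-slice infimum, using that each $1$-parameter slice interleaving distance is attained or approached and that the candidate $\delta$ values are monotone.
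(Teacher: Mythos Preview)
Your proposal is correct and matches the paper's approach: the paper simply states that the corollary follows from the equivalent definition $\delta^* = \sup_x d_I(M|_{\Delta_x}, N|_{\Delta_x})$, leaving implicit exactly the restriction argument you spell out. Your observation that Proposition~\ref{prop_tria_comm} is not actually needed for this direction (it is the converse, assembling global maps from slice data) is accurate, and your direct supremum/infimum version is the cleanest way to phrase it.
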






\begin{theorem} \label{di_criteria}

	For two interval modules $M$ and $N$, $d_I(M, N)\leq \delta$ if and only if both of the following two conditions are satisfied:
	
	(i) $\delta \geq \delta^*$,
	
	\begin{sloppypar}
		(ii) $\forall\delta' > \delta$, each intersection component of $M$ and $N_{\rightarrow \delta'}$ is either $(M,N_{\rightarrow \delta'})$\textit{-valid} or $\delta_{(M,N_{\rightarrow \delta'})}$\textit{-trivializable}, and each intersection component of $ M_{\rightarrow \delta'}$ and $N$ is either $(N,M_{\rightarrow \delta'})$\textit{-valid} or $\delta_{(N,M_{\rightarrow \delta'})}$\textit{-trivializable}
	\end{sloppypar}
	
\end{theorem}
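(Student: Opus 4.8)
\textbf{Proof plan for Theorem~\ref{di_criteria}.}

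The plan is to prove the two directions separately, with the forward direction (necessity) being the more routine one and the reverse direction (sufficiency) carrying the main construction.

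For the \emph{necessity} direction, suppose $d_I(M,N)\leq\delta$. Condition (i) is immediate from Corollary~\ref{cor:di_delta_star_}, which already gives $d_I(M,N)\geq\delta^*$, hence $\delta\geq d_I(M,N)\geq\delta^*$. For condition (ii), fix any $\delta'>\delta$; then $M$ and $N$ are $\delta'$-interleaved, so there are interleaving maps $\phi=\{\phi_x:M_x\to N_{x+\vec{\delta'}}\}$ and $\psi=\{\psi_x:N_x\to M_{x+\vec{\delta'}}\}$. I would reinterpret $\phi$ as a morphism $M\to N_{\rightarrow\delta'}$ and $\psi$ as a morphism $N_{\rightarrow\delta'}\to M$ (and symmetrically for the other pair), and then restrict attention to a single intersection component $Q_i$ of $M$ and $N_{\rightarrow\delta'}$ with interval $I=I_{Q_i}$. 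By Proposition~\ref{prop_conn}, $\phi|_{I}$ is either $0$ or $\Id$ there. If it is $\Id$ on $I$, then Proposition~\ref{prop:valid_0} forces $Q_i$ to be $(M,N_{\rightarrow\delta'})$-valid (one checks the hypotheses of that proposition apply component-wise; the other components of $\phi$ can be set to match). If instead $\phi|_I\equiv 0$, I would use the triangular commutativity $\psi_{x+\vec\delta}\circ\phi_x=\rho^M_{x\to x+2\vec\delta}$ together with the square commutativity to show that the relevant composite through $N_{\rightarrow\delta'}$ kills the structure maps of $M$ over a $2\delta$-shift, which by Fact~\ref{fact:dl_distance}(i) and the definition of $d_{triv}^{(M,N_{\rightarrow\delta'})}$ translates into each point of $I$ being $\delta_{(M,N_{\rightarrow\delta'})}$-trivializable; here the strict inequality $\delta'>\delta$ is what gives the strict inequality in the trivializability definition. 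The symmetric argument handles the components of $M_{\rightarrow\delta'}$ and $N$.

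For the \emph{sufficiency} direction, assume (i) and (ii) hold; I want to show $d_I(M,N)\leq\delta$, i.e.\ that $M$ and $N$ are $\delta'$-interleaved for every $\delta'>\delta$ (taking the infimum then gives $d_I\leq\delta$). Fix $\delta'>\delta$. I would build $\phi:M\to N_{\rightarrow\delta'}$ and $\psi:N\to M_{\rightarrow\delta'}$ as follows. Decompose the intersection of $I_M$ and $I_{N_{\rightarrow\delta'}}$ into components $\{I_{Q_i}\}$. By (ii), each $Q_i$ is either $(M,N_{\rightarrow\delta'})$-valid or $\delta_{(M,N_{\rightarrow\delta'})}$-trivializable. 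Define $\phi_x=\Id$ on every valid component and $\phi_x=0$ elsewhere; by Proposition~\ref{prop:valid_0} this is a well-defined morphism. Define $\psi$ analogously from the components of the intersection of $I_{M_{\rightarrow\delta'}}$ and $I_N$. It remains to check the two interleaving axioms. Square commutativity holds because $\phi$ and $\psi$ are genuine morphisms of (shifted) persistence modules. For triangular commutativity, consider the composite $\psi_{x+\vec{\delta'}}\circ\phi_x:M_x\to M_{x+2\vec{\delta'}}$ and compare it to $\rho^M_{x\to x+2\vec{\delta'}}$. If $x\notin I_M$ or $x+2\vec{\delta'}\notin I_M$ both sides are $0$. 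If both lie in $I_M$: either $x$ lies in a valid component of the intersection with $N_{\rightarrow\delta'}$, in which case I would chase $x$ forward along the diagonal, using Proposition~\ref{prop_valid_intersect}(2) (valid $\Leftrightarrow L(I_Q)\subseteq L(I_M)$ and $U(I_Q)\subseteq U(I_N)$) to guarantee that the path stays inside $I_N$ long enough that $\psi$ is also $\Id$ there, giving composite $\Id=\rho^M$; or $x$ is in a trivializable component (or no component), in which case $d_{triv}^{(M,N_{\rightarrow\delta'})}(x)<\delta<\delta'$ forces, via Fact~\ref{fact:dl_distance}(i), that $\dl(x,U(I_M))<2\delta'$ or the symmetric statement, so $x+2\vec{\delta'}\notin I_M$ — contradicting our assumption — hence this sub-case is vacuous and the composite equals $\rho^M$ trivially. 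A symmetric analysis handles $\phi_{x+\vec{\delta'}}\circ\psi_x$ versus $\rho^N$. Finally, (i) together with Proposition~\ref{prop_tria_comm} is used to reconcile the diagonal-slice behavior: where both $\phi$ and $\psi$ vanish on a diagonal, one still needs the $1$-parameter interleaving of $M|_{\Delta_x}$ and $N|_{\Delta_x}$ (guaranteed since $\delta'>\delta\geq\delta^*$) to certify that the $2\delta'$-structure maps of $M$ and $N$ already vanish, so the required commutativities hold vacuously.

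The main obstacle I anticipate is the bookkeeping in the sufficiency direction where a point $x$ lies in a valid intersection component but its diagonal forward-image $x+2\vec{\delta'}$ exits that component or enters a different one: one must argue that the $\Id$-assignment is still globally consistent across components so that the composite $\psi\circ\phi$ really is $\rho^M$ and not something that jumps between $0$ and $\Id$. This is where the equivalences in Proposition~\ref{prop_valid_intersect} and the strictness coming from $\delta'>\delta$ (ensuring the trivializable case cannot coexist with $x,x+2\vec{\delta'}\in I_M$) do the real work, and stitching together the valid-component contributions with the diagonal-slice argument of Proposition~\ref{prop_tria_comm} into a single coherent pair $(\phi,\psi)$ is the delicate part of the proof.
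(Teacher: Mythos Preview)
Your overall architecture matches the paper's: same case split for necessity (constant $\Id$ vs.\ constant $0$ via Proposition~\ref{prop_conn}, then Proposition~\ref{prop:valid_0} and Fact~\ref{fact:dl_distance}(i) respectively), and the same explicit construction of $\phi,\psi$ for sufficiency (set $\Id$ on valid components, $0$ elsewhere, invoke Proposition~\ref{prop:valid_0} for square commutativity). The use of condition~(i) with Proposition~\ref{prop_tria_comm} to rule out the ``no component'' case is also exactly how the paper proceeds.

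The one gap is in your Case~1 of triangular commutativity. You propose to establish $\psi_{x+\vec{\delta'}}=\Id$ by ``chasing $x$ forward along the diagonal using Proposition~\ref{prop_valid_intersect}(2)''. But Proposition~\ref{prop_valid_intersect}(2) describes the boundary structure of the component $Q\subseteq I_M\cap I_{N_{\rightarrow\delta'}}$ containing $x$; it says nothing about the component $Q'\subseteq I_N\cap I_{M_{\rightarrow\delta'}}$ containing $x+\vec{\delta'}$, and $\psi_{x+\vec{\delta'}}=\Id$ requires precisely that $Q'$ be $(N,M_{\rightarrow\delta'})$-valid. Validity of $Q$ does not propagate to validity of $Q'$ by any boundary-inclusion argument. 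The paper closes this symmetrically to how it handled $\phi_x$: since $x\in I_M$, one has $\dl(x+\vec{\delta'},L(I_{M_{\rightarrow\delta'}}))\geq 2\delta'$, so $d_{triv}^{(N,M_{\rightarrow\delta'})}(x+\vec{\delta'})\geq\delta'$, hence $Q'$ is \emph{not} $\delta'$-trivializable, and therefore by hypothesis~(ii) it must be valid. In other words, the paper never uses validity of one component to deduce validity of another; it uses the dichotomy in (ii) twice, once for $\phi_x$ and once for $\psi_{x+\vec{\delta'}}$, each time ruling out trivializability via a $\dl$ bound. Replacing your diagonal chase with this second non-trivializability step makes the argument go through.
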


\begin{proof}
	Recall that, by definition, $d_I(M,N)\leq \delta$ if and only if $\forall \delta' > \delta, M, N$ is $\delta'$-interleaved.\\
	
	$\implies$ direction: Given $M$ and $N$ are $\delta$-interleaved. Condition (i) follows from Corollary~\ref{cor:di_delta_star_} directly. Consider condition (ii).
	By definition of interleaving, $\forall \delta' > \delta$, we have two families of linear maps $\{\phi_x\}$ and $\{\psi_x\}$ which satisfy both triangular and square commutativities. Let the morphisms between the two persistence modules constituted by these two families of linear maps be $\phi=\{\phi_x\}$ and $\psi=\{\psi_x\}$ respectively.
	For each intersection component $Q$ of $M$ and $N_{\rightarrow \delta'}$ with interval $I:=I_Q$, consider the restriction $\phi|_I$. By Proposition \ref{prop_conn}, $\phi|_I$ is constant, that is, $\phi|_I\equiv 0 $ or
	$\mathbb{1}$. 
	If $\phi|_I\equiv \mathbb{1}$, by Proposition \ref{prop:valid_0}, $Q$ is $(M,N_{\rightarrow \delta'})$\textit{-valid}.
	If $\phi|_I\equiv 0$, by the triangular commutativity of $\phi$, we have that $\rho^M_{x\rightarrow x+2\vec\delta'}=\psi_{x+\vec\delta'}\circ\phi_x=0$ for each point $x\in I$. That means $x+2\vec{\delta'} \notin I_M$. By Fact \ref{fact:dl_distance}(i), $\dl(x, U(I_M))/2 < \delta'$. 
	Similarly, $\rho^N_{x-\vec{\delta'}\rightarrow x+\vec{\delta'}}=\phi_x\circ\psi_{x-\vec{\delta'}}=0 \implies x-\vec{\delta'} \notin I_N$, which is the same as to say $x-2\vec{\delta'}\notin I_{N_{\rightarrow \delta'}}$. 
	By Fact \ref{fact:dl_distance}(i), $\dl(x, L(I_{N_{\rightarrow \delta'}}))/2 < \delta'$. 
	So $\forall x\in I$, we have $d^{(M, N_{\rightarrow \delta'})}_{triv}(x) < \delta'$. This means $Q$ is $\delta'_{(M, N_{\rightarrow\delta'})}$\textit{-trivializable}. Similar statement holds for intersection components of $M_{\rightarrow \delta'}$ and $N$.

	$\Longleftarrow$ direction:
	We construct two families of linear maps $\{\phi_x\}, \{\psi_x\}$ as follows: On the interval $I:=I_{Q_i}$  of each intersection component $Q_i$ of $M$ and $N_{\rightarrow \delta'}$, set $\phi|_I\equiv \Id$ if $Q_i$ is $(M, N_{\rightarrow \delta'})$\textit{-valid} and $\phi|_I\equiv 0$ otherwise. Set $\phi_x\equiv 0$ for all $x$ not in the interval of any intersection component. Similarly, construct $\{\psi_x\}$. 
	Note that, by Proposition \ref{prop:valid_0}, $\phi:=\{\phi_x\}$ is a morphism between $M$ and $N_{\rightarrow \delta'}$, and $\psi:=\{\psi_x\}$ is a morphism between $N$ and $M_{\rightarrow \delta'}$. Hence, they satisfy the square commutativity. 
	We show that they also satisfy the triangular commutativity.
	
	We claim that $\forall x\in I_M$, $\rho^M_{x\rightarrow x+2\vec{\delta'}}=\Id\implies x+\vec{\delta'}\in I_N$ and similar statement holds for $I_N$. 
	From condition that $\delta' > \delta \geq \delta^*$ and by proposition \ref{prop_tria_comm}, we know that there exist two families of linear maps satisfying triangular commutativity everywhere, especially on the pair of $1$-parameter persistence modules $M|_{\Delta_x}$ and $N|_{\Delta_x}$. From triangular commutativity, we know that for $\forall x\in I_M$ with $\rho^M_{x\rightarrow x+2\vec{\delta'}}=\Id$, $x+\vec{\delta'} \in I_N$ since otherwise one cannot construct a $\delta$-interleaving between $M|_{\Delta_x}$ and $N|_{\Delta_x}$. So we get our claim. 
	
	Now for each $x\in I_M$ with $\rho^M_{x\rightarrow x+2\vec{\delta'}}=\Id$,  we have $\dl(x, U(I_M))/2 \geq \delta'$ by Fact \ref{fact:dl_distance}, and $x+\vec{\delta'} \in I_{N}$ by our claim. This implies that $x\in I_M\cap I_{N\rightarrow \delta'}$ is a point in an interval of an intersection component $Q_x$ of $M, N_{\rightarrow \delta'}$ which is not $\delta'_{(M,N_{\rightarrow \delta'})}$\textit{-trivializable}. Hence, it is $(M, N_{\rightarrow \delta'})$\textit{-valid} by the assumption. So, by our construction of $\phi$ on valid intersection components, $\phi_x = \Id$. 
	Symmetrically, we have that  $x+\vec{\delta'}\in I_N\cap I_{M\rightarrow \delta'}$ is a point in an interval of an intersection component of $N$ and $M_{\rightarrow \delta'}$ which is not $\delta'_{(N, M_{\rightarrow\delta'})}$\textit{-trvializable} since $\dl(x+\vec{\delta'}, L(I_M))/2 \geq \delta'$. So by our construction of $\psi$ on valid intersection components, $\psi_{x+\vec{\delta'}}=\Id$.
	Then, we have $\rho^M_{x \rightarrow x+2\vec{\delta'}}=\psi_{x+\vec{\delta'}}\circ\phi_x$ for every nonzero linear map $\rho^M_{x \rightarrow x+2\vec{\delta'}}$. The statement also holds for any nonzero linear map $\rho^N_{x \rightarrow x+2\vec{\delta'}}$. Therefore, the triangular commutativity holds.
\end{proof}

Note that the above proof provides a construction of the interleaving maps for any specific $\delta'$ if it exists. 
Furthermore, the interleaving distance $d_I(M,N)$ is the infimum of all $\delta'$ satisfying the two conditions in the theorem, which means $d_I(M, N)$ is the infimum of all $\delta'\geq \delta^*$ satisfying condition 2 in Theorem~\ref{di_criteria}.

\section{Algorithm to compute $d_I$}

In practice, we cannot verify all those infinitely many values $\delta'>\delta^*$ required by Theorem~\ref{di_criteria}. We propose a finite candidate set of potentially possible interleaving distance values and prove later that our final target, the interleaving distance, is always contained in this finite set. Surprisingly, the size of the candidate set is only $O(t)$ with respect to the $t$ number of vertices for 2-parameter interval modules and $O(t^2)$ in higher dimensional case.
We first discuss the 2-parameter case.
\subsection{2-parameter module}

Based on our results, we propose a search algorithm for computing the interleaving distance $d_I(M,N)$ for interval modules $M$ and $N$.

\begin{definition} [Candidate set for 2-parameter cases]
	For two interval modules $M$ and $N$, and for each point $x$ in $I_M \cup I_N$, let
	\begin{eqnarray*}
		D(x)& =& \{\dl(x, L(I_M)), \dl(x, L(I_N)), \dl(x, U(I_M)), \dl(x, U(I_N))\} \mbox{ and}\\
		S& = &\{d \mid d\in D(x) \mbox{ or } 2d\in D(x) \mbox{ for some vertex } x\in V(I_M)\cup V(I_N) \} \mbox { and } \\
		S_{\geq \delta}&:=&\set{d \mid d\geq \delta, d\in S}. 
	\end{eqnarray*}
\end{definition}


\noindent\textbf{Algorithm} {\sc Interleaving} (output: $d_I(M, N)$, input: $I_M$ and $I_N$ with $t$ vertices in total)
\begin{enumerate}
	
	\item Compute the candidate set $S$ and 
	let $\epsilon$ be the half of the smallest difference between any two numbers in $S$.  /* $O(t)$ time */
	\item Compute $\delta^*$; Let $\delta = \delta^*$.  /* $O(t)$ time */
	\item Output $\delta$ after a binary search in $S_{\geq \delta^*}$ by following steps /* $O(\log t$) probes */ 
	\begin{itemize}
		\item let $\delta'=\delta+\epsilon$
		\item Compute intersections $I_M \cap I_{N_{\rightarrow\delta'}}$ and $I_N \cap I_{M_{\rightarrow\delta'}}$. /* $O(t)$ time */
		\item For each intersection component, check if it is valid or trivializable according to Theorem~\ref{di_criteria}. /* $O(t)$ time */
	\end{itemize}
\end{enumerate}

In the above algorithm, the following generic task of computing {\em diagonal span} is performed for several steps. Let $L$ and $U$ be any two chains of vertical and horizontal edges that are both $x$- and $y$-monotone. Assume that $L$ and $U$ have at most $t$ vertices. Then, for a set $X$ of $O(t)$ points in $L$, one can compute the intersection of $\Delta_x$ with $U$ for every $x\in X$ in $O(t)$ total time. The idea is to first compute by a binary search a point $x$ in $X$ so that $\Delta_x$ intersects $U$ if at all. Then, for other points in $X$, traverse from $x$ in both directions while searching for the intersections of the diagonal line with $U$ in lock steps.

Now we analyze the complexity of the algorithm {\sc Interleaving}. The candidate set, by definition, has $O(t)$
values which can be computed in $O(t)$ time by the diagonal span procedure. Proposition~\ref{prop:delstar-search} shows that $\delta^*$ is in $S$ and can be determined by computing the one dimensional interleaving distances $d_I(M|_{\Delta_x},N|_{\Delta_x})$ for diagonal lines passing through $O(t)$ vertices of $I_M$ and $I_N$. This can be done in $O(t)$ time by diagonal span procedure. Once we determine $\delta^*$, we search for $\delta=d_I(M,N)$ in the truncated set $S_{\delta\geq \delta^*}$ to satisfy the first condition of Theorem~\ref{di_criteria}. Intersections between two polygons $I_M$ and $I_N$ bounded by $x$- and $y$-monotone chains can be computed in $O(t)$ time by a simple traversal of the boundaries. The validity and trivializability of each intersection component can be determined in time linear in the number of its vertices due to Proposition~\ref{prop_valid_intersect} and Proposition~\ref{prop:triv} respectively. Since the total number of intersection points is $O(t)$, validity check takes $O(t)$ time in total. The check for trivializabilty also takes $O(t)$ time if one uses the diagonal span procedure. 
Taking into account $O(\log t)$ probes, the total time complexity of the algorithm becomes $O(t\log t)$.

Proposition~\ref{prop:delstar-search} below says that $\delta^*$ is determined by a vertex in $I_M$ or $I_N$ and $\delta^*\in S$. It follows from applying Proposition~\ref{prop:deltastar-search_n} to the case $n=2$.

\begin{proposition}[2-parameter case] 
	
	(i) $\delta^*=\max_{x\in V(I_M)\cup V(I_N)}\set{d_I(M|_{\Delta_x}, N|_{\Delta_x})}$, (ii) $\delta^*\in S$.
	\label{prop:delstar-search}
\end{proposition}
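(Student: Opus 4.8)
The plan is to reduce to one-parameter modules along each diagonal line, invoke the isometry theorem to get an explicit formula, and then analyze how that formula varies as the diagonal line sweeps across $\Real^2$. (This proposition is also the $n=2$ instance of the general $n$-parameter statement, but the direct argument is instructive and is what I sketch here.) Fix a diagonal line $\Delta_x$. Since the points of $\Delta_x$ are totally ordered, condition $(1)$ of the interval definition makes $\Delta_x\cap I_M$ connected, so $M|_{\Delta_x}$ is either $0$ or a one-parameter interval module; writing its interval, in the diagonal coordinate on $\Delta_x$ (i.e.\ the restriction of $d_\infty$), as $[a_M,b_M]$, and likewise $[a_N,b_N]$ for $N$, a routine analysis of one-parameter interleavings (a slice is $\delta$-matched iff $|a_M-a_N|\le\delta$ and $|b_M-b_N|\le\delta$, and is $2\delta$-trivial iff its length is $<2\delta$) together with the isometry theorem and the closed-interval convention (Proposition~\ref{closure_interleaving}) gives
\[
d_I\big(M|_{\Delta_x},N|_{\Delta_x}\big)=\min\Big\{\max\big(|a_M-a_N|,|b_M-b_N|\big),\ \max\big(\tfrac{b_M-a_M}{2},\tfrac{b_N-a_N}{2}\big)\Big\},
\]
with the usual $\pm\infty$ conventions (and the first term read as $+\infty$ when a slice is empty). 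I would first record this formula; call its two terms $g(\Delta_x)$ and $h(\Delta_x)$.

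\textbf{Part (i).} In the $2$-parameter case the diagonal lines form a one-parameter family $\{\Delta_c\}_{c\in\Real}$, parametrized by signed $d_\infty$-distance. As $c$ varies, each of $a_M,b_M,a_N,b_N$ is piecewise linear in $c$: the edge of $\partial I_M$ (resp.\ $\partial I_N$) on which each endpoint lies changes only when $\Delta_c$ passes through a vertex of $I_M$ or $I_N$, and on each cell between consecutive such vertex-lines each of the four functions is affine with slope $+1$ (endpoint on a horizontal edge), $-1$ (endpoint on a vertical edge), or $0$ (endpoint at $\pm\infty$) --- here Fact~\ref{fact:dl_distance}(ii) and the monotone horizontal/vertical structure of $\partial I_M,\partial I_N$ are used. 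Hence on each cell $g$ and $h$ are convex piecewise linear and $f(c):=d_I(M|_{\Delta_c},N|_{\Delta_c})=\min(g,h)$. The key claim is that $f$ has no interior local maximum inside a cell: at a candidate point $c_0$ one must have $g(c_0)=h(c_0)$ with $g$ and $h$ taking over on opposite sides, and a short case analysis of the possible slopes --- using that the four endpoint rates are exactly $\pm1$ --- forces the slice lengths $\ell_M=b_M-a_M$ and $\ell_N=b_N-a_N$ to move in \emph{opposite} directions through $c_0$, so that $h=\max(\ell_M/2,\ell_N/2)$ is locally V-shaped (a local minimum) rather than a peak; consequently $f$ continues to increase past $c_0$. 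Since there are finitely many cells and on the two unbounded cells $f\to 0$ when $I_M,I_N$ are bounded (the slices become empty) --- infinite coordinates being absorbed into the $\RB^n$ compactification --- the supremum $\delta^*=\sup_{x}d_I(M|_{\Delta_x},N|_{\Delta_x})$ is attained at some vertex-line $\Delta_{x}$, $x\in V(I_M)\cup V(I_N)$, which is (i).

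\textbf{Part (ii).} Let $x$ be a maximizing vertex; say $x\in V(I_M)$. Since a vertex of a ($2$-parameter, hence box-convex) interval lies on $L(I_M)$ or $U(I_M)$, one endpoint of the slice $\Delta_x\cap I_M$ is $x$ itself, anchored at diagonal coordinate $0$; say $a_M=0$ and $b_M=\dl(x,U(I_M))$, so $\ell_M=\dl(x,U(I_M))$. Then $\tfrac{\ell_M}{2}=\tfrac12\dl(x,U(I_M))\in S$ because $2\cdot\tfrac{\ell_M}{2}=\dl(x,U(I_M))\in D(x)$; and if $x\in I_N$ then $a_N=-\dl(x,L(I_N))$, so $|a_M-a_N|=\dl(x,L(I_N))\in D(x)\subseteq S$. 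Thus $\delta^*\in S$ whenever the minimizing term in the formula is $\ell_M/2$ or $|a_M-a_N|$. The remaining possibilities --- $\delta^*$ equal to $|b_M-b_N|$ or to $\ell_N/2$, or $x\notin I_N$ --- are handled by re-selecting the maximizing line so that the witnessing quantity is anchored at a vertex of the \emph{right} interval and on the \emph{right} boundary ($L$ or $U$): e.g.\ if $\delta^*=\ell_N/2$, the witness is a longest diagonal chord of $I_N$, realized on a line through a vertex of $I_N$ lying on $L(I_N)$, where $\ell_N=\dl(\cdot,U(I_N))\in D(\cdot)$. The candidate set $S$ (its $D(x)$'s and their halves, over all vertices of $I_M$ and $I_N$) is designed precisely so that every value that can be the minimizing term of the formula above, evaluated on a vertex-line, already lies in $S$.

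\textbf{Main obstacle.} The delicate part is the slope case analysis in (i) --- showing that every interior critical point of $\min(g,h)$ fails to be a local maximum --- and the matching bookkeeping in (ii) guaranteeing that in every case the optimal value is a $\dl$-distance (or half of one) measured \emph{from an actual vertex}, rather than a difference of two such distances: equivalently, that the optimal diagonal line can always be pushed onto a vertex-line of the appropriate interval without changing $d_I$ along it.
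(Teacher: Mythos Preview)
Your approach is genuinely different from the paper's, and while the overall strategy (reduce to diagonals, use the 1D formula, then discretize) is sound, two key steps are asserted rather than proved.

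\medskip

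\textbf{How the paper argues.} The paper does \emph{not} study $c\mapsto d_I(M|_{\Delta_c},N|_{\Delta_c})$ as a piecewise-linear function. Instead it rewrites the statement $\delta\ge\delta^*$ as a family of threshold implications (via Proposition~\ref{prop:1d_interleaving_distance}): for every $x\in B(I_M)$,
\[
\dl(x,U(I_M))>2\delta\ \Longrightarrow\ x+\vec\delta\in I_N\ \text{and}\ \pi_{U(I_M)}(x)-\vec\delta\in I_N,
\]
plus the three symmetric conditions. It then shows, by a \emph{sandwich} argument using interval convexity, that these implications need only be checked at vertices: given $x$ in the interior of a facet $f$ with $x'=\pi_{U(I_M)}(x)$ on a facet $g$, pick a vertex $y\le x$ on $f$ and a vertex $z\ge x'$ in $V(/f/\cap g)$; the implications at $y$ and $z$ together with $y+\vec\delta\le x+\vec\delta\le z-\vec\delta$ force $x+\vec\delta\in I_N$. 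Part (ii) then follows because the smallest $\delta$ violating one of these implications at a vertex $x$ is either $\dl(x,\,\cdot\,)/2$ (the antecedent switches) or $\dl(x,\,\cdot\,)$ (the consequent switches), both of which lie in $S$ by construction. No slope analysis is needed.

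\medskip

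\textbf{Gaps in your sketch.} For (i), the ``no interior local maximum'' claim is the crux and you do not prove it. Your assertion that at a crossing of $g$ and $h$ the lengths $\ell_M,\ell_N$ are forced to move in opposite directions is not justified; with the natural parametrization of diagonals the endpoint slopes lie in $\{0,-1\}$ (not ``exactly $\pm1$''), and it is perfectly possible for $\ell_M$ and $\ell_N$ to have the same sign of slope on a cell, so that $h=\max(\ell_M/2,\ell_N/2)$ is monotone rather than V-shaped. The required case analysis is more involved than you indicate and you have not carried it out.

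For (ii), the ``re-selecting the maximizing line'' manoeuvre is where the argument breaks. If $\delta^*=|b_M-b_N|$ along $\Delta_x$ for a vertex $x\in V(I_M)$, then $\delta^*=\dl(p,U(I_N))$ with $p=\pi_{U(I_M)}(x)$, and $p$ need not be a vertex of anything; Proposition~\ref{prop:distance_bound_2}/Corollary~\ref{cor:distance_bound} only give $\dl(p,\,\cdot\,)$ \emph{between} two vertex values, not equal to one, so you cannot conclude $\delta^*\in S$ from the formula alone. You would need exactly the kind of ``push to a parallel-edge configuration'' that the paper's threshold argument provides implicitly.

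\medskip

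In short: your analytic route can likely be completed, but the two steps you flag as ``delicate'' and ``bookkeeping'' are in fact the whole content, and neither is established. The paper's threshold-plus-sandwich argument is both shorter and avoids these difficulties entirely.
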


The correctness of the algorithm {\sc Interleaving} already follows from Theorem~\ref{di_criteria} as long as the candidate set contains the distance $d_I(M,N)$. The following concept of stable intersections helps us to establish this result.

\begin{definition} [Stable intersection]
	Let $Q$ be an intersection component of $M$ and $N$. We say $Q$ is stable if $M$ and $N$ do not intersect at $Q$ transversally. 
	This means that any point $x\in B(I_Q)$ cannot be in the intersection of any two parallel facets of $I_M$ and $I_N$.
\end{definition}

From Proposition \ref{touching_pt_intersection} and Corollary \ref{cor:valid_intersection} in Appendix~\ref{app:miss1}, we have the following proposition.
\begin{proposition} \label{prop:stable_intersection_iff_dnotinS}
	$d\notin S$ if and only if each intersection component of $M,N_{\rightarrow d}$, and $N_{\rightarrow d}, M$ is stable.
\end{proposition}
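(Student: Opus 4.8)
The plan is to relate membership in the candidate set $S$ directly to the geometry of how the shifted modules intersect, using the distance-to-boundary functions $\dl(\cdot,\cdot)$ that define $S$. First I would unwind what "transversal intersection" means in terms of boundaries: $M$ and $N_{\rightarrow d}$ intersect transversally at some component $Q$ exactly when a point of $B(I_Q)$ lies on two parallel facets, one from $I_M$ (or its boundary $L(I_M)$/$U(I_M)$) and one from $I_{N_{\rightarrow d}}=I_N+\vec d$. Since facets are axis-parallel pieces of lower/upper boundaries, two parallel facets from the two intervals coincide along a point precisely when the diagonal distance from a relevant vertex to the opposing boundary, after shifting by $d$, vanishes — i.e. when some $\dl(x, L(I_M))$ or $\dl(x,U(I_M))$ equals $d$ or $2d$ (the factor $2$ coming from the cases where a vertex of $I_M$ meets a shifted boundary of $N$ through a double shift, exactly as encoded in the definition of $S$). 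This is the content I would extract from Proposition~\ref{touching_pt_intersection} and Corollary~\ref{cor:valid_intersection} that the statement invokes.

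Concretely, I would argue the contrapositive in both directions. For "$d\in S \Rightarrow$ some component is not stable": if $d\in S$ then there is a vertex $x\in V(I_M)\cup V(I_N)$ with $d\in D(x)$ or $2d\in D(x)$; I would then exhibit, using Fact~\ref{fact:dl_distance}, a point where a facet of $I_M$ and a parallel facet of $I_{N_{\rightarrow d}}$ (or of $I_{M_{\rightarrow d}}$ and $I_N$) touch, so that this point lies on $B(I_Q)$ for the intersection component $Q$ containing it, witnessing transversality. Here one must be careful to check that the touching point actually belongs to the closure of an intersection component and to its boundary $B(I_Q)$, not merely to $B(I_M)\cap B(I_{N_{\rightarrow d}})$ as ambient sets; this is where the interval/manifold-boundary hypotheses are used, and it is the technical heart of Proposition~\ref{touching_pt_intersection}. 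For the converse, "$d\notin S \Rightarrow$ every component stable": if some component $Q$ of $M,N_{\rightarrow d}$ were unstable, a point of $B(I_Q)$ would lie on parallel facets of $I_M$ and $I_N+\vec d$; tracing that facet of $I_N$ back to one of its vertices (or the corresponding facet of $I_M$ to one of its vertices) and measuring the diagonal gap would force $d$ or $2d$ to equal some value in $D(\cdot)$ at that vertex, contradicting $d\notin S$.

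The main obstacle I anticipate is the bookkeeping of which vertex "controls" a given facet and why the diagonal distance from that vertex to the opposing boundary is exactly $d$ (or $2d$) rather than something only bounded by it: a facet is an extended axis-parallel segment, so $\dl$ from an interior point of a facet is generally not realized at a vertex, and one has to use monotonicity of $\dl$ along facets (Fact~\ref{fact:dl_distance}(ii)) together with the discreteness of the presentation to push the witnessing point to a vertex. The secondary subtlety is handling the boundary-at-infinity cases in $\RB^n$, where diagonals collapse to points; these must be checked separately but are degenerate and follow from the $\pi_I(x)=x$ remark after the definition of diagonal projection. Once Proposition~\ref{touching_pt_intersection} and Corollary~\ref{cor:valid_intersection} are in hand, the proof is essentially a translation between the two languages, applied symmetrically to the pairs $(M,N_{\rightarrow d})$ and $(N,M_{\rightarrow d})$.
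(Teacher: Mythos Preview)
Your plan for the direction ``some component unstable $\Rightarrow d\in S$'' (equivalently $d\notin S\Rightarrow$ every component stable) is exactly the paper's: this is precisely Proposition~\ref{touching_pt_intersection}, whose proof shifts the parallel facet of $I_{N_{\rightarrow d}}$ back by $\vec d$ and then invokes Corollary~\ref{cor:distance_bound} to land on a vertex. One small correction: that argument produces a vertex $y$ with $d\in D(y)$, not $2d\in D(y)$; the halved values in $S$ play no role here.

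The genuine gap is in the other direction, ``$d\in S\Rightarrow$ some component is unstable''. First, you attribute it to Proposition~\ref{touching_pt_intersection}, but that proposition goes the opposite way. More importantly, this implication is \emph{false} as stated, so no argument for it can succeed. The set $S$ contains values that are unrelated to how $I_M$ and $I_N$ meet after a diagonal shift by $d$: any $d$ with $2d\in D(x)$, and any $\dl(x,L(I_M))$ or $\dl(x,U(I_M))$ for $x\in V(I_M)$ (purely internal to $M$). Concretely, take $I_M=[0,3]^2$ and $I_N=[100,200]^2$; then $\dl((3,3),L(I_M))=3$ gives $3\in S$, yet $I_M\cap I_{N_{\rightarrow 3}}=\emptyset=I_N\cap I_{M_{\rightarrow 3}}$, so every intersection component is vacuously stable. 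The paper's one-line proof cites only Proposition~\ref{touching_pt_intersection} and Corollary~\ref{cor:valid_intersection}, both of which yield the single correct implication; that implication is also the only one invoked later (in the proof of Theorem~\ref{thm:dIinS}). So the ``if and only if'' in the statement is an overstatement that is harmless for the paper's purposes but cannot actually be established, and your attempt to push a vertex witness from $d\in S$ to a tangential contact would fail exactly at the bookkeeping step you flagged as delicate.
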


The main property of a stable intersection component $Q$ of $M$ and $N$ is that if we shift one of the interval module, say $N$, to $N_{\rightarrow \epsilon}$ continuously for some small value $\epsilon\in \Real^+$, the interval $I_{Q^\epsilon}$ of the intersection component $Q^\epsilon$ of $M$ and $N_{\rightarrow \epsilon}$ changes continuously. Next proposition follows directly from the stability of intersection components. 

\begin{proposition} \label{prop:stable_intersection_property0}
	For a stable intersection component $Q$ of $M$ and $N$, there exists a positive real $\delta \in \Real^+$ so that the following holds:
	
	For each $\epsilon \in (-\delta, +\delta)$, there exists a unique intersection component $Q^\epsilon$ of $M$ and $N_{\rightarrow \epsilon}$ so that it is still stable and $I_{Q^\epsilon}\cap I_Q\neq \emptyset$.
	Furthermore, there is a bijection $\mu_\epsilon:V(I_Q)\rightarrow V(I_{Q^\epsilon})$
	so that $\forall x\in V(I_Q)$, $x$ and $\mu_\epsilon(x)$ are on the same facet and $d_\infty (\mu_\epsilon(x), x)=\epsilon$. 
	We call the set $\set{Q^\epsilon\mid \epsilon\in (-\delta, +\delta)}$ a stable neighborhood of $Q$.
	
\end{proposition}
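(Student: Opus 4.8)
The plan is to extract a single $\delta>0$ that works uniformly for the finitely many structural features of $Q$ --- its vertices, facets, and flats --- and then to verify the three claimed properties (stability of $Q^\epsilon$, uniqueness, and the vertex bijection) by a purely local analysis near each such feature. First I would unpack the hypothesis: $Q$ stable means no point of $B(I_Q)$ lies in the intersection of two parallel facets of $I_M$ and $I_N$. Since $I_M$ and $I_N$ are discretely presented (finite unions of $n$-rectangles), each has finitely many flats; the relevant quantity is the minimum, over all pairs of non-coincident parallel flats $\hat f\subseteq$ (some facet of $I_M$) and $\hat g\subseteq$ (some facet of $I_N$) that bound $Q$, of their separation $|\,c_{\hat f}-c_{\hat g}\,|$ in the common normal direction, together with the minimum gap between any two non-equal coordinates appearing among the vertices $V(I_Q)$, $V(I_M)$, $V(I_N)$ in each axis direction. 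Let $\delta$ be half the smallest of all these finitely many positive numbers. The point of taking it this small is that shifting $N$ to $N_{\rightarrow\epsilon}$ moves each flat of $I_N$ by $\epsilon$ along its normal (for axis-parallel flats this is exactly the coordinate shift), so for $|\epsilon|<\delta$ no flat of $I_N$ can cross a flat of $I_M$ or change the combinatorial incidence pattern that defines $Q$.

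Next I would describe $Q^\epsilon$ explicitly. The interval $I_Q$ is cut out as an intersection of halfspaces coming from the lower/upper boundary facets of $I_M$ and $I_N$ that are "active" near $Q$; $I_M\cap I_{N_{\rightarrow\epsilon}}$ near $Q$ is cut out by the same halfspaces with the $I_N$-ones translated by $\vec\epsilon$. For $|\epsilon|<\delta$ the chosen $\delta$ guarantees (a) this region stays nonempty and connected --- so it is a genuine intersection component $Q^\epsilon$ --- and (b) every vertex of $I_Q$, being the transverse intersection of $n$ mutually non-parallel active flats (transverse because $Q$ is stable, so no two of these $n$ flats are parallel-and-coincident; and any $n$ axis-parallel flats with distinct normals meet transversally), perturbs to a unique nearby vertex $\mu_\epsilon(x)$ obtained by translating exactly those of the $n$ defining flats that come from $I_N$. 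Since at most... well, the flats from $I_N$ among the $n$ get shifted by $\epsilon$ in their (axis) normal direction and the flats from $I_M$ stay put, the new vertex $\mu_\epsilon(x)$ differs from $x$ only in the coordinates normal to the shifted $I_N$-flats, each by exactly $\epsilon$; hence $d_\infty(\mu_\epsilon(x),x)=\epsilon$ (it is $\epsilon$ provided at least one defining flat of $x$ comes from $I_N$, which holds for every boundary vertex of an intersection component). That $x$ and $\mu_\epsilon(x)$ lie on the same facet is immediate: $\mu_\epsilon(x)$ lies on the (shifted or unshifted) images of the very flats defining $x$, and for $|\epsilon|<\delta$ those images still support facets of $I_{Q^\epsilon}$. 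Injectivity and surjectivity of $\mu_\epsilon$ follow because the same local picture applies symmetrically to $Q^\epsilon$ (its defining flats are exactly the perturbed ones), and running the perturbation by $-\epsilon$ recovers $Q$, giving $\mu_{-\epsilon}^{\,\text{for }Q^\epsilon}=\mu_\epsilon^{-1}$.

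Uniqueness of $Q^\epsilon$ with $I_{Q^\epsilon}\cap I_Q\neq\emptyset$ is where I would be most careful, and I expect it to be the main obstacle. A priori $I_M\cap I_{N_{\rightarrow\epsilon}}$ could have a second component that happens to brush against $I_Q$; one must rule this out. The argument is that any component of $I_M\cap I_{N_{\rightarrow\epsilon}}$ meeting $I_Q$ must contain, for $|\epsilon|<\delta$, a point whose $\epsilon\to 0$ limit lies in $\overline{I_M\cap I_N}$, and by the coordinate-gap choice of $\delta$ that limit point must in fact lie in $\overline{I_Q}$ rather than in the closure of some other component of $I_M\cap I_N$ --- the gap $\delta$ was chosen smaller than the distance between distinct components of $I_M\cap I_N$ in the relevant directions. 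Hence such a component is forced to be the one containing $Q^\epsilon$ as constructed. Stability of $Q^\epsilon$ then follows from stability of $Q$: its active flat pairs are the $\epsilon$-perturbations of those of $Q$, and for $|\epsilon|<\delta<\frac12\min|c_{\hat f}-c_{\hat g}|$ no perturbed parallel pair becomes coincident, so no point of $B(I_{Q^\epsilon})$ sits in two parallel facets. Finally, setting the stable neighborhood to be $\{Q^\epsilon : \epsilon\in(-\delta,+\delta)\}$ with this $\delta$ records the construction. The only genuinely delicate point, as noted, is the global step of the uniqueness argument; everything else is the local transversality bookkeeping that the discrete-presentation hypothesis makes routine.
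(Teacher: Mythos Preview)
Your argument is essentially correct and considerably more detailed than what the paper offers: the paper gives no proof, stating only that the proposition ``follows directly from the stability of intersection components.'' Your approach --- extract a uniform $\delta$ from the finitely many flats and vertex coordinates, track the combinatorial type of the intersection under the diagonal shift, and handle uniqueness via a component-separation gap --- is exactly the natural way to make that phrase precise.

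One small slip: you assert in the parenthetical that every boundary vertex of $I_Q$ has at least one defining flat coming from $I_N$. This is false. A vertex of $I_M$ lying in the interior of $I_N$ is a vertex of $I_Q$ all of whose defining flats come from $I_M$; such a vertex does not move under the shift, so $d_\infty(\mu_\epsilon(x),x)=0$, not $|\epsilon|$. This is not really a defect in your reasoning so much as an imprecision already present in the proposition's statement: the correct conclusion of your local analysis is $d_\infty(\mu_\epsilon(x),x)\leq|\epsilon|$, with equality exactly when at least one defining flat comes from $I_N$. The paper's own use of the proposition (in the proof of Corollary~\ref{cor:stable_intersection_property}(ii)) invokes only the inequality $d_\infty(x,x')\leq\epsilon$, so your argument, once the erroneous parenthetical is dropped, proves what is actually needed downstream.
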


\begin{corollary} \label{cor:stable_intersection_property}
	For a stable intersection component $Q$, we have:
	
	(i) $Q$ is $(M,N)$-valid iff each $Q^\epsilon$ in the stable neighborhood is $(M,N_{\rightarrow \epsilon})$-valid. 
	
	(ii) If $Q$ is $d_{(M,N)}$-trivializable, then $Q^\epsilon$ is $(d+2\epsilon)_{(M,N_{\rightarrow \epsilon})}$-trivializable.
\end{corollary}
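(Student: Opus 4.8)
The plan is to use Proposition~\ref{prop:stable_intersection_property0} to transport the defining conditions of validity and trivializability along the bijection $\mu_\epsilon:V(I_Q)\to V(I_{Q^\epsilon})$, using the fact that this bijection moves each vertex a distance exactly $\epsilon$ along the facet it lies on, together with the vertex-level characterizations of both conditions (Proposition~\ref{prop_valid_intersect}(3) for validity and Proposition~\ref{prop:triv} for trivializability). Throughout, fix a stable neighborhood $\set{Q^\epsilon\mid \epsilon\in(-\delta,+\delta)}$ of $Q$ as given by Proposition~\ref{prop:stable_intersection_property0}, so that $Q^0=Q$.

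For part (i): by Proposition~\ref{prop_valid_intersect}, $Q$ is $(M,N)$-valid iff $VL(I_Q)\subseteq L(I_M)$ and $VU(I_Q)\subseteq U(I_N)$, and likewise $Q^\epsilon$ is $(M,N_{\rightarrow\epsilon})$-valid iff $VL(I_{Q^\epsilon})\subseteq L(I_M)$ and $VU(I_{Q^\epsilon})\subseteq U(I_{N_{\rightarrow\epsilon}})$. Since $\mu_\epsilon$ sends each vertex of $I_Q$ to a vertex of $I_{Q^\epsilon}$ lying on the same facet of $I_Q$, it respects the partition of vertices into lower-boundary and upper-boundary vertices; so it suffices to show that a vertex $x\in VL(I_Q)$ lies in $L(I_M)$ iff $\mu_\epsilon(x)\in L(I_M)$, and symmetrically for $U(I_N)$ and $U(I_{N_{\rightarrow\epsilon}})$. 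This is where stability is essential: because $Q$ is a stable intersection, no boundary point of $I_Q$ lies on two parallel facets of $I_M$ and $I_N$, which forces the facet of $I_Q$ through $x$ to be (locally) either a facet of $I_M$ or a facet of $I_N$ but to be "pinned" by the appropriate one of $I_M,I_N$ in a way that is preserved under the small diagonal shift; I would spell this out by observing that the flat containing $x$'s facet is the same flat containing $\mu_\epsilon(x)$'s facet (the shift is diagonal, hence moves along the facet, not across flats), so membership in $L(I_M)$ — which only depends on that flat and on which side of $I_M$ the facet bounds — is unchanged, and conversely. The reverse direction (each $Q^\epsilon$ valid $\Rightarrow$ $Q$ valid) is the special case $\epsilon\to 0$, or symmetrically by replacing $\epsilon$ with $-\epsilon$ and $Q$ with $Q^\epsilon$.

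For part (ii): suppose $Q$ is $d_{(M,N)}$-trivializable. By Proposition~\ref{prop:triv} it suffices to check trivializability of $Q^\epsilon$ at its vertices, i.e. to bound $d^{(M,N_{\rightarrow\epsilon})}_{triv}(\mu_\epsilon(x))=\max\{\dl(\mu_\epsilon(x),U(I_M))/2,\ \dl(\mu_\epsilon(x),L(I_{N_{\rightarrow\epsilon}}))/2\}$ for each $x\in V(I_Q)$. For the first term, $x$ and $\mu_\epsilon(x)$ lie on the same facet of $I_Q$, and using Fact~\ref{fact:dl_distance}(ii) (the $1$-Lipschitz property of $\dl(\cdot,U(I_M))$ along a common facet) together with $d_\infty(\mu_\epsilon(x),x)=\epsilon$ gives $\dl(\mu_\epsilon(x),U(I_M))\le \dl(x,U(I_M))+\epsilon$. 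For the second term, note $L(I_{N_{\rightarrow\epsilon}})=L(I_N)-\vec\epsilon$, so $\dl(\mu_\epsilon(x),L(I_{N_{\rightarrow\epsilon}}))=\dl(\mu_\epsilon(x)+\vec\epsilon,L(I_N))$, and since $\mu_\epsilon(x)$ and $x$ (hence $\mu_\epsilon(x)+\vec\epsilon$ and $x$) lie within $d_\infty$-distance $2\epsilon$ and, being related by moves along a facet and a diagonal, can be connected so that Fact~\ref{fact:dl_distance}(ii) applies, we get $\dl(\mu_\epsilon(x),L(I_{N_{\rightarrow\epsilon}}))\le \dl(x,L(I_N))+2\epsilon$. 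Combining, $d^{(M,N_{\rightarrow\epsilon})}_{triv}(\mu_\epsilon(x))\le \max\{\dl(x,U(I_M))/2+\epsilon,\ \dl(x,L(I_N))/2+\epsilon\}= d^{(M,N)}_{triv}(x)+\epsilon$. Since $d^{(M,N)}_{triv}(x)<d$ for every vertex $x$, we conclude $d^{(M,N_{\rightarrow\epsilon})}_{triv}(\mu_\epsilon(x))<d+\epsilon<d+2\epsilon$ (taking $\epsilon>0$), so $Q^\epsilon$ is $(d+2\epsilon)_{(M,N_{\rightarrow\epsilon})}$-trivializable.

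I expect the main obstacle to be the careful bookkeeping in part~(i): making precise the claim that "stability + diagonal shift" preserves whether a facet of $I_Q$ is cut out by $I_M$'s lower boundary versus $I_N$'s lower boundary, so that the flat of the facet and the side it bounds are genuinely unchanged. The Lipschitz estimates in part~(ii) are routine once one knows $x$ and $\mu_\epsilon(x)$ are on a common facet and the displacement is exactly $\epsilon$, but one must be slightly careful that the path from $\mu_\epsilon(x)+\vec\epsilon$ to $x$ used in applying Fact~\ref{fact:dl_distance}(ii) really decomposes into at most a facet-move and a diagonal-move — or, more simply, one can first go from $x$ along the facet to $\mu_\epsilon(x)$ (distance $\epsilon$), then diagonally from $\mu_\epsilon(x)$ to $\mu_\epsilon(x)+\vec\epsilon$ (distance $\epsilon$), applying Fact~\ref{fact:dl_distance}(ii) to each leg.
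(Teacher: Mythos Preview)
Your proposal is correct and follows essentially the same approach as the paper: for (i), both you and the paper reduce to the vertex characterization of validity (Proposition~\ref{prop_valid_intersect}) and use the bijection $\mu_\epsilon$ from Proposition~\ref{prop:stable_intersection_property0} to transport the conditions $VL(I_Q)\subseteq L(I_M)$ and $VU(I_Q)\subseteq U(I_N)$; for (ii), both arguments combine Proposition~\ref{prop:triv} with the Lipschitz estimate of Fact~\ref{fact:dl_distance}(ii) along the facet-plus-diagonal displacement of length~$\epsilon$ each. Your treatment of (ii) is in fact slightly more careful than the paper's (you separate the $U(I_M)$ and $L(I_{N_{\rightarrow\epsilon}})$ contributions and obtain the sharper bound $d+\epsilon$ before relaxing to $d+2\epsilon$), and your anticipated obstacle in (i)---making precise that stability forces each boundary facet of $I_Q$ to remain a facet of the same one of $I_M,\,I_{N_{\rightarrow\epsilon}}$ under the shift---is exactly the point the paper takes for granted when it asserts $\mu_\epsilon(VL(I_Q))=VL(I_{Q^\epsilon})\subseteq L(I_M)$ directly from Proposition~\ref{prop:stable_intersection_property0}.
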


\begin{proof}
	(i): Let $Q^\epsilon$ be any intersection component in a stable neighborhood of $Q$. 
	We know that if $Q$ is ($M,N$)-valid, then $VL(I_Q)\subseteq L(I_M)$ and $VU(I_Q)\subseteq U(I_N)$. By Proposition \ref{prop:stable_intersection_property0}, $\mu_\epsilon(VL(I_Q))=VL(I_{Q^\epsilon})\subseteq L(I_M)$ and $\mu_\epsilon(UL(I_Q))=UL(I_{Q^\epsilon})\subseteq L(I_{N\rightarrow \epsilon})$. So  $Q^\epsilon$ is $(M,N_{\rightarrow \epsilon})$-valid. Other direction of the implication can be proved by switching the roles of $Q$ and $Q^\epsilon$ in the above argument.
	
	(ii): From Proposition \ref{prop:stable_intersection_property0}, we have that $\forall x' \in V(I_{Q^\epsilon})$, there exists a point $x\in V(I_Q)$ so that $x$ and $x'$ are on some horizontal, vertical, or diagonal line ($\Delta_x$), and $d_\infty(x, x')\leq \epsilon$. Then, by Fact \ref{fact:dl_distance}(ii), one observes $$d^{(M, N_{\rightarrow \epsilon})}_{triv}(x) \leq d^{(M, N_{\rightarrow \epsilon})}_{triv}(x')+\epsilon \leq d^{(M, N)}_{triv}(x)+2\epsilon < d+2\epsilon.$$ Therefore, $Q^\epsilon$ is $(d+2\epsilon)_{(M, N_{\rightarrow \epsilon})}$-trivializable. 
\end{proof}

\begin{proposition}\label{prop:dtriv_in_S}
	For any intersection component $Q$ of $M$ and $N$, $d_{triv}^{M, N}(I_Q)\in S$ for n=2 (2-parameter case).
	
\end{proposition}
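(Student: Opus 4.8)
The plan is to show that the supremum $d_{triv}^{M,N}(I_Q)=\sup_{x\in I_Q}\max\{\dl(x,U(I_M))/2,\dl(x,L(I_N))/2\}$ is attained at a vertex of $I_Q$, and then to argue that the value at such a vertex lies in $S$. By Proposition~\ref{prop:triv}, the trivializability of $Q$ is controlled by its vertices, so one expects the sup to be realized there; nonetheless the sup defining $d_{triv}^{M,N}(I_Q)$ is over all of $I_Q$, so I first need to confirm it is a max attained at $V(I_Q)$. The function $x\mapsto \dl(x,U(I_M))$ restricted to a single facet of $I_Q$ (or along a diagonal) is $1$-Lipschitz in $d_\infty$ by Fact~\ref{fact:dl_distance}(ii), and on a monotone edge it is piecewise linear with slopes in $\{-1,0,+1\}$; hence on each edge of $I_Q$ it attains its extrema at endpoints. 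Since $I_Q$ is connected through paths of such edges, the global sup over $I_Q$ of $\dl(\cdot,U(I_M))$ — and symmetrically of $\dl(\cdot,L(I_N))$ — is attained at some vertex $v\in V(I_Q)$. (I should double-check the case where $I_Q$ meets $B(\RB^2)$, using the convention that on $V(\RB^n)$ the diagonal collapses to a point; the claim still holds since boundary-at-infinity vertices are in $V(I_Q)$.)

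Next, fix a vertex $v\in V(I_Q)$ realizing the max, say $d_{triv}^{M,N}(I_Q)=\dl(v,U(I_M))/2$ (the case with $L(I_N)$ is symmetric). Now I must relate $v$ to a vertex of $I_M$ or $I_N$ so as to land in $S$. The key geometric observation is that a vertex of the intersection $I_Q=I_M\cap I_N$ arises in one of two ways: either $v$ is (the appropriate boundary projection of) a vertex of $I_M$ or of $I_N$, or $v$ lies at a transversal crossing of an edge of $I_M$ with an edge of $I_N$. In the first case, $\dl(v,U(I_M))$ is of the form $\dl(w,U(I_M))$ for a vertex $w\in V(I_M)\cup V(I_N)$ lying on the same horizontal/vertical edge or diagonal as $v$ — here I would invoke Fact~\ref{fact:dl_distance}(ii) together with the fact that along an edge the $\dl$-value to a monotone chain, once it stops decreasing, is constant — so $\dl(v,U(I_M))\in D(w)\subseteq \{d:2d\in S\ \text{gives}\ d\in S\}$, hence $d_{triv}^{M,N}(I_Q)\in S$. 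In the second (transversal crossing) case, I use the geometry of horizontal/vertical edges in 2-parameter: the diagonal $\Delta_v$ through a crossing vertex $v$ hits $U(I_M)$ at a point whose $d_\infty$-distance from $v$ equals the $d_\infty$-distance from $v$ to a genuine vertex of $U(I_M)$ — because on an axis-parallel edge the endpoints govern where $\Delta_v$ exits — so again $\dl(v,U(I_M))=\dl(w,U(I_M))$ for a vertex $w$ of $I_M$, landing in $D(w)$ and thus in $S$.

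I expect the main obstacle to be the case analysis of how an intersection vertex $v$ relates to an honest vertex of $I_M$ or $I_N$, and in particular making precise that $\dl(v,U(I_M))$ can always be rewritten as $\dl(w,\,\cdot\,)$ or $d_\infty(w,\cdot)$ for a vertex $w\in V(I_M)\cup V(I_N)$ on the same monotone line. This hinges on the 2-parameter structural facts about discretely presented intervals — axis-parallel edges, no vertex in the interior of an edge, each vertex incident to a single edge — which is exactly why the statement is restricted to $n=2$; in higher dimensions the analogue requires the $O(t^2)$ candidate set, handled separately. Modulo this bookkeeping, the Lipschitz/piecewise-linear control from Fact~\ref{fact:dl_distance} and the vertex-reduction of Proposition~\ref{prop:triv} do the rest.
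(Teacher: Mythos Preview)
Your overall strategy matches the paper's: first reduce the supremum $d_{triv}^{M,N}(I_Q)$ to a vertex $x\in V(I_Q)$, then relate the value at that vertex to $S$. The paper carries out the first reduction by citing Proposition~\ref{prop:distance_bound_2} (rather than your direct Lipschitz/PL argument), but these are interchangeable.

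Where you diverge, and where the gap lies, is in the second step. You propose a case split on how the intersection vertex $x$ arises (vertex of $I_M$ or $I_N$ versus transversal crossing), and you correctly flag the transversal case as the ``main obstacle.'' Your claim there --- that the diagonal through a crossing vertex $x$ hits $U(I_M)$ at $d_\infty$-distance equal to the distance from $x$ to a genuine vertex of $U(I_M)$ --- is not true in general, and ``the endpoints govern where $\Delta_x$ exits'' is not enough to repair it. The paper avoids this case analysis altogether with a single structural observation: a vertex $x\in V(I_Q)$ necessarily lies on $n=2$ \emph{non-parallel} facets drawn from $F(I_M)\cup F(I_N)$ (one horizontal, one vertical). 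Since the target facet $f\subseteq U(I_M)\cup L(I_N)$ is itself horizontal or vertical, one of the facets through $x$ is parallel to $f$. At that point Corollary~\ref{cor:distance_bound} (the parallel-edge case of Proposition~\ref{prop:distance_bound_2}) applies directly and gives $\dl(x,f)\in D(y)$ for some $y\in V(I_M)\cup V(I_N)$, whence $\dl(x,f)/2\in S$.

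So the missing ingredient in your proposal is precisely Corollary~\ref{cor:distance_bound}: once you notice the parallel-facet configuration at any intersection vertex, no case distinction is needed and the result is immediate.
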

\begin{proof}
	By definition of $d_{triv}^{M,N}$, it is not hard to see that $d_{triv}^{M,N}(I_Q)$ is realized by some $x\in B(I_Q)$. Furthermore, by Proposition~\ref{prop:distance_bound_2}, it can be realized by some $x\in V(I_Q)$. Let $f \in U(I_M)\cup L(I_N)$ be the facet such that $d_{triv}^{M,N}(x)=\dl(x, f)=\dl(x,x')$ where $x'=\pi_f(x)$. That is $d_{triv}^{M,N}(x)$ is realized by the distance between $x$ and $f$. Then by the definition of interval, one can observe that $x$ must be contained in $n$ non-parallel facets from either $F(I_M)$ or $F(I_N)$. So there is at least one facets containg $x$ which is parallel with $f$. By Corollary~\ref{cor:distance_bound}, we get the conclusion.
	
\end{proof}

By definition of $d_{triv}^{M,N}$ and the candidate set $S$, we get the following corollary.

\begin{corollary}\label{cor:dtriv_in_S}
	For any intersection component $Q$ of $M$ and $N_{\rightarrow d}$, $d \in S$ where $d=d_{triv}^{M, N_{\rightarrow d}}(I_Q)$.
\end{corollary}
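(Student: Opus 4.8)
The plan is to reduce the statement to Proposition~\ref{prop:dtriv_in_S} applied to the pair $(M,N_{\rightarrow d})$, and then to convert the resulting membership — which a priori lives in the candidate set built from $I_M$ and $I_{N_{\rightarrow d}}$ — into membership in $S$ (built from $I_M$ and $I_N$) by ``un-shifting'' along the diagonal. Concretely, I would first run the proof of Proposition~\ref{prop:dtriv_in_S}, which works verbatim for any two interval modules, on $(M,N_{\rightarrow d})$ in place of $(M,N)$. This produces a vertex $x\in V(I_Q)$ and a facet $f$, with $f\subseteq U(I_M)$ or $f\subseteq L(I_{N_{\rightarrow d}})$, realizing $d=d_{triv}^{M,N_{\rightarrow d}}(I_Q)$, i.e.\ $2d=\dl(x,f)$; moreover $x$ lies on a facet parallel to $f$, and Corollary~\ref{cor:distance_bound} then hands us a vertex $v\in V(I_M)\cup V(I_{N_{\rightarrow d}})$ and a boundary $\partial\in\{L(I_M),U(I_M),L(I_{N_{\rightarrow d}}),U(I_{N_{\rightarrow d}})\}$ with $2d=\dl(v,\partial)$ or $2d=2\,\dl(v,\partial)$.

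The next step is to descend from $I_{N_{\rightarrow d}}$ to $I_N$. Since $\dl$ is invariant under a simultaneous diagonal translation and $V(I_{N_{\rightarrow d}})=V(I_N)-\vec{d}$, a ``pure-$N$'' witness ($v\in V(I_{N_{\rightarrow d}})$ with $\partial$ a boundary of $I_{N_{\rightarrow d}}$) rewrites as $\dl(v+\vec{d},\partial+\vec{d})$ with $v+\vec{d}\in V(I_N)$ and $\partial+\vec{d}\in\{L(I_N),U(I_N)\}$, so it lies in $D(v+\vec{d})$ and hence in $S$; a ``pure-$M$'' witness already has the form defining $S$. The only genuinely new case is a mixed witness, and here I would use the fixed-point hypothesis $d=d_{triv}^{M,N_{\rightarrow d}}(I_Q)$ together with $x\in I_Q\subseteq I_M\cap I_{N_{\rightarrow d}}$. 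For instance, suppose $2d=\dl(x,U(I_M))$ with $x\in V(I_{N_{\rightarrow d}})$. Then $x\in I_{N_{\rightarrow d}}$ gives $x+\vec{d}\in I_N$, while Fact~\ref{fact:dl_distance}(i) gives $x+\vec{2d}\in\overline{I_M}$; diagonal convexity of $\overline{I_M}$ then forces $x+\vec{d}\in\overline{I_M}$, and reading off $\dl(\cdot,U(I_M))$ at $x+\vec{d}$ with Fact~\ref{fact:dl_distance}(i) again collapses $2d=\dl(x,U(I_M))$ to $d=\dl(x+\vec{d},U(I_M))\in D(x+\vec{d})$ with $x+\vec{d}\in V(I_N)$. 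The remaining mixed sub-cases are symmetric under swapping $M\leftrightarrow N$, $U\leftrightarrow L$, and $+\vec{d}\leftrightarrow-\vec{d}$. Assembling the cases gives $d\in S$.

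I expect the real difficulty to be confined to the transversal-crossing situation, where the realizing vertex $x$ of $I_Q$ is the crossing of an edge of $I_M$ with an edge of $I_{N_{\rightarrow d}}$ rather than a vertex of either module; then the witness vertex returned by Corollary~\ref{cor:distance_bound} need not be $x$ itself, and one must check that it still sits in the region (inside $\overline{I_M}$ or $\overline{I_N}$, suitably shifted) that the un-shift argument requires. The useful facts here are that along a facet parallel to $f$ the map $y\mapsto\dl(y,f)$ is piecewise linear with slopes in $\{-1,0,+1\}$, so that the value $2d$, being the maximum of $d_{triv}^{M,N_{\rightarrow d}}(\cdot)$ over $B(I_Q)$, is pinned at a vertex lying in $I_Q$ — an endpoint of that facet, or a point whose diagonal passes through a vertex of the module carrying $f$ — and that an edge of $I_M$ through a point with $\dl(\cdot,U(I_M))>0$ must be a lower-boundary edge, which is precisely what makes the collapsed distance to $L(I_M)$ exact. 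Carrying out this case bookkeeping is where the effort lies; the translation-invariance part is routine.
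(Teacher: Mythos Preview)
Your plan is correct and in fact supplies the detail that the paper omits: the paper's entire justification for this corollary is the single sentence ``By definition of $d_{triv}^{M,N}$ and the candidate set $S$, we get the following corollary,'' followed only by the remark that $S$ here is the original, unshifted candidate set. The un-shifting argument you give---apply Proposition~\ref{prop:dtriv_in_S} to $(M,N_{\rightarrow d})$, then use the fixed-point constraint $d=d_{triv}^{M,N_{\rightarrow d}}(I_Q)$ together with diagonal translation-invariance to push the witness from $V(I_M)\cup V(I_{N_{\rightarrow d}})$ back into $V(I_M)\cup V(I_N)$---is exactly the content the paper leaves implicit. Your pure-$M$ and pure-$N$ cases are immediate, and your mixed-case computation (e.g.\ $\dl(x,U(I_M))=2d$ with the realizing vertex $x\in V(I_{N_{\rightarrow d}})\cap I_Q$, whence $x+\vec d\in V(I_N)$ and $\dl(x+\vec d,U(I_M))=d$) is correct; the transversal-crossing case you flag as residual is indeed the only one needing an extra step, and your sketch for it is sound.

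Two minor remarks. First, the alternative ``$2d=2\,\dl(v,\partial)$'' does not arise: tracing Proposition~\ref{prop:dtriv_in_S}, Corollary~\ref{cor:distance_bound} always returns $2d\in D'(v)$ (the factor $1/2$ lives in the definition of $d_{triv}$, not in the witness distance), so $d$ enters $S'$ via the clause ``$2d\in D(x)$''. Second, in your mixed-case paragraph you silently switch from the Corollary~\ref{cor:distance_bound} witness $v$ back to the realizing vertex $x\in V(I_Q)$; that is fine---and simpler---whenever $x$ already lies in $V(I_M)\cup V(I_{N_{\rightarrow d}})$, but it is worth saying explicitly, since the transversal case is precisely where $x$ and $v$ must be distinguished.
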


Note that here the set $S$ is defined for the original modules $M$ and $N$ without any shifting.

\begin{theorem} \label{thm:dIinS}
	$d_I(M, N)\in S$. 
\end{theorem}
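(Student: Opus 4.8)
The plan is to combine Theorem~\ref{di_criteria} with the discretization results just established so that the infimum defining $d_I(M,N)$ is actually attained at a member of the finite set $S$. Recall from the remark after Theorem~\ref{di_criteria} that $d_I(M,N)$ equals the infimum of all $\delta'\geq\delta^*$ satisfying condition (ii) of that theorem, namely that every intersection component of $M$ and $N_{\rightarrow\delta'}$ is either valid or $\delta'_{(M,N_{\rightarrow\delta'})}$-trivializable, and symmetrically for $M_{\rightarrow\delta'}$ and $N$. Write $\delta_0:=d_I(M,N)$. By Proposition~\ref{prop:delstar-search}(ii) we already know $\delta^*\in S$, so if $\delta_0=\delta^*$ we are done; hence assume $\delta_0>\delta^*$.

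First I would argue that $\delta_0$ lies in $S$ by a contradiction-and-perturbation argument. Suppose $\delta_0\notin S$. Then by Proposition~\ref{prop:stable_intersection_iff_dnotinS} every intersection component of $M$ and $N_{\rightarrow\delta_0}$, and of $N_{\rightarrow\delta_0}$ and $M$, is stable, so by Proposition~\ref{prop:stable_intersection_property0} each such component $Q$ has a stable neighborhood $\{Q^\epsilon\}$ for $\epsilon$ in some interval around $0$; taking the minimum over the finitely many components gives a single $\eta>0$ that works for all of them. Now since $\delta_0$ is the infimum of the $\delta'$ satisfying condition (ii), for every small $\epsilon>0$ there is some $\delta'\in(\delta_0-\epsilon,\delta_0)$ where condition (ii) holds. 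Using Corollary~\ref{cor:stable_intersection_property}: validity of a component is preserved across the stable neighborhood, and $d_{(M,N_{\rightarrow\delta'})}$-trivializability passes to $(d+2|\delta_0-\delta'|)_{(M,N_{\rightarrow\delta_0})}$-trivializability of the corresponding component at $\delta_0$. Since each component at $\delta'$ is valid or $\delta'$-trivializable, each component at $\delta_0$ is valid or $(\delta'+2(\delta_0-\delta'))_{(M,N_{\rightarrow\delta_0})}$-trivializable, and $\delta'+2(\delta_0-\delta')\to\delta_0$ as $\delta'\to\delta_0$; here I would use Corollary~\ref{cor:dtriv_in_S}, which says the trivialization threshold $d_{triv}^{M,N_{\rightarrow d}}(I_Q)$ forces $d\in S$ — so if some component were only trivializable and not valid, the corresponding threshold value would have to be in $S$, and a short case analysis (choosing $\epsilon$ smaller than the gap $\epsilon$ of Step~1 of the algorithm between $\delta_0$ and the nearest element of $S$) rules out the borderline cases and shows condition (ii) in fact already holds at $\delta_0$ with strict trivializability. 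This contradicts $\delta_0$ being the infimum if $\delta_0\notin S$, because then there would be an open neighborhood of $\delta_0$ all of whose points satisfy condition (ii), pushing the infimum strictly below $\delta_0$.

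To make the last step precise I would instead directly show $\delta_0\in S$: if $\delta_0\notin S$, pick $\epsilon$ as in Step~1 of the algorithm so that $(\delta_0-\epsilon,\delta_0+\epsilon)\cap S=\emptyset$; on this interval the intersection structure is stable and, by Corollary~\ref{cor:stable_intersection_property}(i) and Corollary~\ref{cor:dtriv_in_S}, a component that is trivializable-but-not-valid at some $\delta'$ would force a value of $S$ into $(\delta_0-\epsilon,\delta_0+\epsilon)$, which is impossible; so on the whole interval every component is valid-or-trivializable exactly when it is at $\delta_0$, whence condition (ii) holds on an open neighborhood of $\delta_0$ and the infimum is $\leq\delta_0-\epsilon'$ for some $\epsilon'>0$ — contradicting $d_I(M,N)=\delta_0$ unless $\delta_0=\delta^*\in S$. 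Therefore $d_I(M,N)\in S$.

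The main obstacle I anticipate is the boundary bookkeeping in the perturbation argument: handling the difference between strict inequality ($d_{triv}(x)<\delta$) in the definition of trivializable and the non-strict conditions that arise when passing to the limit $\delta'\to\delta_0$, and ensuring the $\epsilon$ from Step~1 of the algorithm (half the smallest gap in $S$) is genuinely small enough to simultaneously (a) stay inside all the stable neighborhoods of Proposition~\ref{prop:stable_intersection_property0} and (b) avoid every element of $S$. Getting these two smallness requirements to coexist — and verifying that validity and trivializability really are the only two regimes and that their "switching points" are exactly captured by $S$ via Corollary~\ref{cor:dtriv_in_S} and Proposition~\ref{prop:dtriv_in_S} — is where the real care is needed; the rest is a routine application of Theorem~\ref{di_criteria}.
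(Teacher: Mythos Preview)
Your overall strategy---assume $\delta_0=d_I(M,N)\notin S$, use stability of the intersection components near $\delta_0$, and invoke Corollary~\ref{cor:dtriv_in_S} to rule out the borderline case---is exactly the paper's strategy. However, there is a genuine logical slip in your first paragraph that, if left uncorrected, makes the argument circular.

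You write: ``since $\delta_0$ is the infimum of the $\delta'$ satisfying condition (ii), for every small $\epsilon>0$ there is some $\delta'\in(\delta_0-\epsilon,\delta_0)$ where condition (ii) holds.'' This is backwards. An infimum gives you elements of the set \emph{above} $\delta_0$, not below; what you actually know is that for every $\epsilon>0$ the condition holds at $\delta'=\delta_0+\epsilon$. The existence of some $\delta'<\delta_0$ where the condition holds is precisely the contradiction you are trying to manufacture, so you cannot take it as a premise. The paper therefore runs the transfer in the opposite direction from what you describe: it starts at $Q^{+\epsilon}$ (where validity/trivializability is guaranteed because $\delta_0+\epsilon>d_I$), pushes this information \emph{down} to $Q$ at $\delta_0$ via Corollary~\ref{cor:stable_intersection_property}, obtaining $d_{triv}^{(M,N_{\rightarrow\delta_0})}(I_Q)\leq\delta_0$ for the non-valid components, and only then invokes Corollary~\ref{cor:dtriv_in_S} in the contrapositive (since $\delta_0\notin S$, equality cannot occur, so $d_{triv}<\delta_0$, in fact $\leq d^*$, the largest $S$-value below $\delta_0$). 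A second application of Corollary~\ref{cor:stable_intersection_property}(ii) with a carefully chosen $\epsilon$ (the paper takes $\epsilon=(\delta_0-d^*)/4$) then shows $Q^{-\epsilon}$ is $(\delta_0-\epsilon)$-trivializable, yielding the desired contradiction.

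Your second paragraph moves in the right direction but is too vague at the crucial point: the sentence ``a component that is trivializable-but-not-valid at some $\delta'$ would force a value of $S$ into $(\delta_0-\epsilon,\delta_0+\epsilon)$'' is not what Corollary~\ref{cor:dtriv_in_S} says. That corollary only places $d$ in $S$ when $d$ \emph{exactly equals} the trivialization threshold $d_{triv}^{(M,N_{\rightarrow d})}(I_Q)$; it says nothing about nearby values. The paper uses it solely to rule out the single equality $d_{triv}=\delta_0$, converting the weak inequality obtained from the limit $\epsilon\to0^+$ into a strict one. Once you fix the direction of the transfer and use Corollary~\ref{cor:dtriv_in_S} only in this pointwise contrapositive form, your plan becomes the paper's proof.
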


\begin{proof}
	Suppose that $d=d_I(M,N)\not\in S$. Let $d^*$ be the largest value in S satisfying $d^* \leq d$. Note that $d\in S$ if and only if $d=d^*$. Then, $d^* < d$ by our assumption that $d \notin S$. 
	
	By definition of interleaving distance, we have $\forall d' > d$, there is a $d'$-interleaving between $M$ and $N$, and $\forall d'' < d$, there is no $d''$-interleaving between $M$ and $N$. By Proposition \ref{prop:delstar-search}(ii), one can see that $ \delta^* \leq d^* < d$. So, to get a contradiction, we just need to show that there exists $d''$, $d^* < d''< d$, satisfying the condition 2 in Theorem \ref{di_criteria}.
	
	Let $Q$ be any intersection component of $M, N_{\rightarrow d}$ or $N, M_{\rightarrow d}$.
	Without loss of generality, assume $Q$ is an intersection component of $M$ and $N_{\rightarrow d}$. By Proposition \ref{prop:stable_intersection_iff_dnotinS}, $Q$ is stable. We claim that there exists some $\epsilon > 0$ such that $Q^{-\epsilon}$ is an intersection component of $M$ and $N_{\rightarrow d-\epsilon}$ in a stable neighborhood of $Q$, and $Q^{-\epsilon}$ is either $(M,N_{\rightarrow d-\epsilon})$-valid or $(d-\epsilon)_{(M,N_{\rightarrow d-\epsilon})}$-trivializable.
	
	Let $\epsilon>0$ be small enough so that $Q^{+\epsilon}$ is a stable intersection component of $M$ and $N_{\rightarrow d+\epsilon}$ in a stable neighborhood of $Q$. 
	By Theorem \ref{di_criteria}, $Q^{+\epsilon}$ is either $(M, N_{\rightarrow (d+\epsilon)})$-valid or $(d+\epsilon)_{(M, N_{\rightarrow (d+\epsilon)})}$-trivializable. 
	If $Q^{+\epsilon}$ is $(M, N_{\rightarrow (d+\epsilon)})$-valid, then by Corollary \ref{cor:stable_intersection_property}(i), any intersection component in a stable neighborhood of $Q$ is valid, which means there exists $Q^{-\epsilon}$ that is $(M,N_{\rightarrow d-\epsilon})$-valid for some $\epsilon > 0$.
	Now assume $Q^{+\epsilon}$ is not $(M, N_{\rightarrow (d+\epsilon)})$-valid. Then, $\forall \epsilon > 0$, $Q^{+\epsilon}$ is ${(M, N_{\rightarrow (d+\epsilon)})}$-trivializable, By Proposition \ref{prop:triv} and \ref{cor:stable_intersection_property}(ii), we have $\forall x\in V(I_Q)$, $d^{(M, N_{\rightarrow d+\epsilon})}_{triv}(x) < d+3\epsilon$, $\forall \epsilon > 0$. 
	Taking $\epsilon \to 0$, we get $\forall x\in V(I_Q)$, $d_{triv}^{(M,N_{\rightarrow d})}(x)\leq d$. 
	We claim that, actually, $\forall x\in V(I_Q)$, $d_{triv}^{(M,N_{\rightarrow d})}(x)< d$. If the claim were not true, some point $x\in V(I_Q)$ would exist so that $d_{triv}^{(M,N_{\rightarrow d})}(x) = d$. 
	By Corollary~\ref{cor:dtriv_in_S}, we have $d\in S$, contradicting $d\neq d^*$.
	
	
	Now by our claim and Proposition \ref{prop:triv}, $Q$ is $d_{(M,N_{\rightarrow d})}$-trivializable where 
	$d>d^*\geq \max_{x\in V(I_Q)}\set{d^{(M, N_{\rightarrow d})}_{triv}(x)}$. Let $\delta=d-d^*$ and $\epsilon=\delta/4$. Since $ d-\epsilon = d-\delta/4 > d-\delta/2=d-\delta+2\cdot\delta/4= d^*+2\epsilon$ and $d^* \geq \max_{x\in V(I_Q)}\set{d^{(M, N_{\rightarrow d})}_{triv}(x)}$, we have  $d>d^*$ and $d-\epsilon > \max_{x\in V(I_Q)}\set{d^{(M, N_{\rightarrow d})}_{triv}(x)}+2\epsilon$. Therefore, by Corollary \ref{prop:triv}, $Q^{-\epsilon}$ is $(d-\epsilon)_{(M, N_{\rightarrow d-\epsilon})}$-trivializable.

	The above argument shows that 
	there exists a $d''$-interleaving where $d''=d-\epsilon<d$, reaching a contradiction.
\end{proof}

\begin{remark}
	Our main theorem and the algorithm based on it consider the persistence modules defined over $\Real^n$ ($n=2$ in this subsection). In practice, we often deal with persistence modules defined on a discrete grid like $\Int^n$. In this case, we can consider the embedded persistence modules defined over $\Int^n$ into $\Real^n$ and apply our theorem and algorihtm accordingly.
\end{remark}

\subsection{$n$-parameter module}
To extend our results to n-parameter case, we need the following definitions and propositions. Most of them are the extensions of the original ones in 2-parameter case. Also, the algorithm needs adjustments.

To make sure $d_I\in S$, we need to change the set $S$ to be slightly larger but still with finite size.
\begin{definition}[Extended candidate set for $(n>2)$-parameter case]
	For two interval modules $M$ and $N$, and for each point $x$ in $I_M \cup I_N$, let
	\begin{eqnarray*}
		\hat{D}(x)&=&\{\dl(x, \hat{f})\mid f\in F(I_M)\cup F(I_N) \} \mbox{, recall that $\hat{f}$ is the flat of $f$ in $\Real^n$.}\\
		\hat{S}& = &\{d \mid d\in \hat{D}(x) \mbox{ or } 2d\in \hat{D}(x) \mbox{ for some vertex } x\in V(I_M)\cup V(I_N) \} \mbox { and } \\
		\hat{S}_{\geq \delta}&:=&\set{d \mid d\geq \delta, d\in \hat{S}}. 
	\end{eqnarray*}
\end{definition}

For any facet $f\in F(I_M)$, let $/f/:=\{x+\vec{t}\mid x\in f, t\in \Real\}$. This can be viewed as a translate of $f$ along the diagonal line direction. Define a set $\bar{V}:=\{V(/f/\cap g)\mid f,g\in F(I_M)\cup F(I_N) \}$.
Observe that, since each facet $g$ belongs to a hyperplane $\hat{g}=\{x_i=c\}\subseteq\Real^n$ for some $i$ and a constant $c\in \bar{\Real}$, the intersection $/f/\cap g$ is a convex set in $g$ with boundary edges $E(/f/\cap g)$ consisting of edges only along a standard direction $\vec{e_i}$ or the direction of the projection of $\vec{e}=(1,\cdots, 1)$ onto $\hat{g}$.

We use the following important fact.

\begin{fact}\label{fact:vertex_bound}
	$\forall x\in (/f/\cap g) - V( /f/\cap g)$, $\exists y,z\in V(/f/\cap g)$,  $y<x<z$.
\end{fact}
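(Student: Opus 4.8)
Write $P:=/f/\cap g$. The plan rests entirely on the structural description of $P$ recorded just above the statement: $P$ is a bounded convex polytope, and every edge of $P$ is parallel to a vector from the finite set $D:=\{\vec{e}_1,\dots,\vec{e}_n\}\cup\{\vec{e}-\vec{e}_i\}$ — a standard basis vector, or the orthogonal projection of the diagonal $\vec{e}$ onto the flat $\hat g=\{x_i=c\}$. The one property I will exploit is that \emph{every vector in $D$ is componentwise nonnegative and nonzero}. Note also that, since all points of $P$ share the coordinate $x_i=c$, the relation ``$y<x<z$'' in the statement must be read in the coordinatewise partial order, i.e.\ $y\le x\le z$ with $y\ne x$ and $x\ne z$; that is what I will construct.

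First I would fix $x\in(/f/\cap g)\setminus V(/f/\cap g)$; then $x$ lies in the relative interior of a unique face $F$ of $P$, and $\dim F\ge 1$ since $x$ is not a vertex. The core of the argument is a descent through the face lattice of $P$ driven by the nonnegative edge directions. As $\dim F\ge1$, $F$ has an edge, which is also an edge of $P$; pick a componentwise-nonnegative direction vector $u\in D$ of that edge, so $u\ge0$, $u\ne0$, and $u$ is parallel to $\mathrm{aff}(F)$. Since $x\in\mathrm{relint}(F)$ and $F$ is bounded, there is a largest $t^{\ast}>0$ with $x_1:=x-t^{\ast}u\in F$; then $x_1$ lies in the relative interior of a \emph{proper} face $F_1\subsetneq F$, so $\dim F_1<\dim F$, while $x_1\le x$ and $x_1\ne x$ (because $(x_1)_k=x_k-t^{\ast}u_k<x_k$ at every $k$ with $u_k>0$, and such a $k$ exists as $u\ne0$). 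Every edge of $F_1$ is again an edge of $P$, so the step repeats with $(F_1,x_1)$ in place of $(F,x)$; the face dimension strictly decreases, so after finitely many steps the descent terminates at a $0$-face, i.e.\ a vertex $y\in V(/f/\cap g)$, with $y\le x_1\le x$ and hence $y\ne x$. Running the identical descent along $+u$ instead of $-u$ yields a vertex $z\in V(/f/\cap g)$ with $z\ge x$ and $z\ne x$. Therefore $y<x<z$.

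I expect the only genuine obstacle to be the input structural claim — that all edge directions of $/f/\cap g$ are (signably) nonnegative — which is exactly the observation made just before the statement; the rest is routine polytope bookkeeping (uniqueness of the face whose relative interior contains a given point; every face of dimension $\ge1$ has an edge; an edge of a face of $P$ is an edge of $P$; a ray parallel to a bounded face exits it at a finite parameter). It is worth stressing that the nonnegativity of $D$ is genuinely used: for a convex polytope carrying an edge whose direction is neither $\ge0$ nor $\le0$ (a tilted triangle already suffices), a non-vertex point need not dominate or be dominated by any vertex, so the conclusion would fail without the special geometry of $P$ as an intersection of a diagonal prism $/f/$ with an axis-parallel flat $\hat g$.
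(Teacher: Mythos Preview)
The paper does not supply a proof of this Fact; it is stated immediately after the observation that every edge of $/f/\cap g$ lies along a standard coordinate direction or along the projection of the diagonal onto $\hat g$, and the reader is left to infer the conclusion from that observation alone. Your proof carries out exactly this inference, and correctly so: the face-lattice descent driven by the componentwise nonnegativity of the allowed edge directions is the natural way to turn the paper's observation into an argument, and the polytope facts you invoke (every point lies in the relative interior of a unique face; every face of positive dimension has an edge; faces of faces are faces; a ray parallel to a bounded face exits it at a finite parameter into a proper subface) are all standard. Your clarification that $y<x<z$ here must mean $y\le x\le z$ with $y\ne x$ and $x\ne z$---forced because all points of $/f/\cap g$ share the $i$-th coordinate $c$---is apt and worth recording, since the paper writes the strict symbol without comment. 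One cosmetic point: your set $D$ should omit $\vec e_i$ itself (it is orthogonal to $\hat g$), but this does not affect the argument since the remaining vectors are still nonnegative and nonzero.
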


We also have the following proposition.

\begin{proposition} [Extension of Proposition~\ref{prop:distance_bound_2} for $(n>2)$-parameter case]
	Let $M$ and $N$ be two interval modules. 
	Given any point $x\in B(I_M)$ and any $L\in\set{L(I_M),U(I_M),L(I_N),U(I_N)}$, 
	with $x'=\pi_L(x)$ existing, let $d_x=\dl(x, L)$, $F_x$ and $F_{x'}$ be the two facets containing $x$ and $x'$ respectively. Then there exist (not necessarily distinct) $y, z\in V(F_x)\cup V(F_{x'})$ and $d_y\in \hat{D}(y), d_z\in \hat{D}(z)$ such that $d_y\leq d_x\leq d_z$. 
	\label{prop:distance_bound}
\end{proposition}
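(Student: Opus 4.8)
The plan is to reduce the claim about an arbitrary point $x\in B(I_M)$ to a statement about vertices by exploiting the combinatorial structure of facets together with Fact~\ref{fact:vertex_bound}, exactly as Proposition~\ref{prop:distance_bound_2} did in the $2$-parameter setting. First I would set up notation: let $F_x\in F(I_M)$ be a facet containing $x$ and let $x'=\pi_L(x)\in F_{x'}$, where $F_{x'}\in\set{F(I_M)\cup F(I_N)}$ lies inside the boundary component $L$. Since $x'=\pi_L(x)$, the two points $x$ and $x'$ lie on the common diagonal line $\Delta_x$, so $x'\in /F_x/\cap F_{x'}$, i.e.\ $x\in /F_{x'}/\cap F_x$ as well (a translate of one facet along the diagonal meeting the other). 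The key geometric object is therefore the convex set $/F_{x'}/\cap F_x$ sitting inside the flat $\hat F_x=\{x_i=c\}$, whose edge directions, as noted just before Fact~\ref{fact:vertex_bound}, are either a standard direction $\vec e_i$ or the projection of $\vec e$ onto $\hat F_x$.

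Next I would invoke Fact~\ref{fact:vertex_bound} applied to $/F_{x'}/\cap F_x$ (playing the role of $/f/\cap g$ with $f=F_{x'}$, $g=F_x$): if $x$ is not already a vertex of this convex set, there exist $y,z\in V(/F_{x'}/\cap F_x)$ with $y<x<z$. Because every edge of $/F_{x'}/\cap F_x$ runs either along $\vec e_i$ (hence inside $\hat F_x$, so staying on the facet's flat) or along the diagonal-projection direction, the comparison $y<x<z$ translates into a monotone relationship between the signed diagonal distances $\dl(\cdot,\hat F_{x'})$: moving from $y$ to $x$ to $z$ along these edges only increases the diagonal distance to the flat $\hat F_{x'}=\hat L$. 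Thus $\dl(y,\hat L)\le \dl(x,\hat L)=d_x\le \dl(z,\hat L)$. Here I use that $\dl(x,L)=\dl(x,\hat L)$ because $x'=\pi_L(x)$ actually lands on the facet $F_{x'}\subseteq\hat L$, so the flat-distance and the facet-distance coincide at $x$ (and the flat-distance is what the candidate set $\hat D$ records). Setting $d_y=\dl(y,\hat L)$ and $d_z=\dl(z,\hat L)$ gives $d_y\le d_x\le d_z$ with $d_y,d_z$ of the required form, provided $y,z$ are vertices of the correct facets.

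Then I would argue that $V(/F_{x'}/\cap F_x)\subseteq V(F_x)\cup V(F_{x'})$, which is the final thing needed so that $d_y\in\hat D(y)$ and $d_z\in\hat D(z)$ for $y,z\in V(F_x)\cup V(F_{x'})$. A vertex of the convex slab $/F_{x'}/\cap F_x$ arises either as a vertex of $F_x$ itself (when the boundary of $F_x$ pinches the slab) or as a point where a diagonal-translate of an edge of $F_{x'}$ hits the boundary of $F_x$ --- and by the manifold/discretely-presented hypotheses these latter points are forced to be vertices of $F_x$ or images of vertices of $F_{x'}$ under the diagonal translation (which have the same diagonal distance to $\hat L$ as the corresponding vertex of $F_{x'}$, since translating along the diagonal does not change $\dl(\cdot,\hat L)$ when $\hat L$ is the flat of $F_{x'}$ and the translation direction is diagonal --- caution: one must instead use that a vertex $v\in V(F_{x'})$ with $\pi$-preimage on the diagonal through the slab-vertex has $\dl(v,\hat L)=0\le d_x$, or more carefully pick the companion vertex on $F_x$). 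The cleanest route is: $V(/F_{x'}/\cap F_x)$ consists of points each of which projects diagonally onto a vertex of $F_{x'}$ or is itself a vertex of $F_x$; in the first case take that vertex $v\in V(F_{x'})$ with $\dl(v,\hat L)=0$, in the second case take the vertex of $F_x$ directly, and in all cases sandwich $d_x$ between the distances realized at vertices of $F_x\cup F_{x'}$.

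The main obstacle I anticipate is precisely this last bookkeeping: Fact~\ref{fact:vertex_bound} hands us vertices of $/F_{x'}/\cap F_x$, but the proposition demands vertices of the original facets $F_x$ and $F_{x'}$, so I must show the vertex set of the slab injects (up to diagonal translation preserving $\dl(\cdot,\hat L)$) into $V(F_x)\cup V(F_{x'})$. This requires carefully using that $I_M, I_N$ are discretely presented with $(n-1)$-manifold boundaries, so facets are (unions of) axis-aligned rectangular pieces and their edges lie along standard directions; the diagonal-projection direction is the only ``non-axis'' direction that can appear, and it is exactly the direction along which $\dl$ to a flat is constant (the slab's generating direction), so the extreme vertices of the slab in that direction are where an axis-aligned edge of $F_x$ or a diagonal-translate of $F_{x'}$'s boundary is met. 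Making this rigorous --- and handling corner cases where $x$ already lies on $B(F_x)$ or where $F_x\parallel F_{x'}$ --- is the technical heart; everything else is the monotonicity of $\dl$ along the two allowed edge directions, which follows from Fact~\ref{fact:dl_distance}(ii) and the definition of $\dl$ as an $\ell_\infty$ distance along the diagonal.
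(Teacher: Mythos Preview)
Your approach via Fact~\ref{fact:vertex_bound} and the slab $/F_{x'}/\cap F_x$ is substantially more complicated than necessary, and the ``main obstacle'' you identify is a genuine gap that you do not close. You need vertices of $F_x$ or $F_{x'}$, but Fact~\ref{fact:vertex_bound} only hands you vertices of the slab $/F_{x'}/\cap F_x$; your proposed fix---that each slab vertex is either a vertex of $F_x$ or the diagonal translate of a vertex of $F_{x'}$---is asserted but not proved, and in general a vertex of the intersection of two convex polytopes need not come from a vertex of either one. Your fallback (``take $v\in V(F_{x'})$ with $\dl(v,\hat L)=0$'') only supplies a lower bound $0\le d_x$, not the two-sided sandwich the statement requires. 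There is also a minor slip in your monotonicity argument: from $y<x<z$ you get that $w\mapsto w_i$ is monotone, hence $w\mapsto |c-w_i|$ is monotone along that chain, but its direction depends on which side of $\{x_i=c\}$ the chain lies; you would still get one of $\dl(y,\hat L),\dl(z,\hat L)$ below $d_x$ and the other above, so this is fixable, but the vertex-membership issue is not.

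The paper's proof bypasses all of this with a one-line observation you are circling around but never isolate: the flat $\hat F_{x'}$ is a hyperplane $\{x_i=c\}$, so the function $\phi\colon F_x\to\bar\Real$ given by $\phi(w)=\dl(w,\hat F_{x'})=|c-w_i|$ is (piecewise) linear on the convex polytope $F_x$. A linear function on a polytope attains its maximum and minimum at vertices, so there exist $y,z\in V(F_x)$ with $\phi(y)\le \phi(x)=d_x\le \phi(z)$, and by definition $\phi(y)\in\hat D(y)$, $\phi(z)\in\hat D(z)$. No slab, no Fact~\ref{fact:vertex_bound}, no bookkeeping about which facet a slab vertex came from---the vertices are in $V(F_x)$ directly.
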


\begin{proof}
	Note that the facet ${F_{x'}}$ belongs to the hyperplane $\hat{F_{x'}}=\{x_i=c\}\subset \bar{\Real}^n$ for some $i\in \mathbb{N}$ and constant $c\in \bar{\Real}$. Consider the $\bar{\Real}$-valued function $\phi: F_x\rightarrow \bar{\Real}$ given by $\phi(w)=\dl(w, \hat{F_{x'}})$. 
	Observe that $\phi(w)=|c-w_i|$. So this function is a linear function on $F_{x'}$.
	By the property of linearity, we have that the maximum and minimum are achieved in $V(F_x)$.

\end{proof}



The following three statements all depend on the extension of the candidate set $\hat{D}$ and $\hat{S}$, and the extended Proposition~\ref{prop:distance_bound}. The proofs are almost the same except that, in order to apply the extended version of propositions in n-parameter cases, we have to replace  $\hat{D}$ and $\hat{S}$ with $D$ and $S$.

\begin{proposition} [Extension of Proposition ~\ref{prop:dtriv_in_S}]
	For any intersection component $Q$ of $M$ and $N$, $d_{triv}^{M, N}(I_Q)\in \hat{S}$
	\label{prop:dtriv_in_S_n}
\end{proposition}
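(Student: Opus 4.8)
The plan is to adapt the proof of Proposition~\ref{prop:dtriv_in_S} verbatim, substituting the $n$-parameter combinatorial inputs in place of the $2$-parameter ones. First I would observe, exactly as in the $2$-parameter case, that $d_{triv}^{M,N}$ is a supremum over $I_Q$ of a function built from $\dl(\cdot, U(I_M))/2$ and $\dl(\cdot, L(I_N))/2$, and that such a supremum over a closed interval is attained on the boundary $B(I_Q)$; this part is geometry-neutral and needs no change. The new ingredient is to push the realizing point onto a vertex of $I_Q$, and here I would invoke the extended Proposition~\ref{prop:distance_bound} (the $n$-parameter analogue of Proposition~\ref{prop:distance_bound_2}) instead of the $2$-parameter bound. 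That proposition says the value $\dl(x, L)$ for $x\in B(I_M)$ is sandwiched between values $d_y\in\hat D(y)$ and $d_z\in\hat D(z)$ for vertices $y,z$ of the two facets $F_x, F_{x'}$ involved; combined with Fact~\ref{fact:vertex_bound} this lets one argue that the supremum defining $d_{triv}^{M,N}(I_Q)$ is realized at a vertex of $I_Q$ rather than in a facet interior.

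Next I would carry out the facet-parallelism argument. Let $x\in V(I_Q)$ realize $d_{triv}^{M,N}(I_Q)$, and let $f\in U(I_M)\cup L(I_N)$ be the facet with $d_{triv}^{M,N}(x)=\dl(x,f)=\dl(x,x')$, $x'=\pi_f(x)$. Because the boundary of an interval in $\RB^n$ is required to be an $(n-1)$-manifold and $x$ is a vertex of $I_Q$, the point $x$ lies in $n$ pairwise non-parallel facets drawn from $F(I_M)\cup F(I_N)$; since $f$ points in one of the $n$ standard coordinate directions, one of those $n$ facets through $x$, call it $g$, is parallel to $f$. This is the same observation used in Proposition~\ref{prop:dtriv_in_S}, only now stated for the $n$-dimensional facet structure rather than for edges. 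The realizing distance $\dl(x,f)$ then equals $\dl(x,\hat g)$ up to the geometry linking $g$ and $f$, i.e.\ it equals a coordinate difference $|c - x_i|$ where $\hat g = \{x_i = c\}$, so it is captured by $\hat D(x)$, and therefore $d_{triv}^{M,N}(I_Q)\in\hat S$ by the definition of $\hat S$ (allowing for the factor $2$ in the definitions of $d_{triv}$ and $\hat S$). To make the last step clean I would cite the $n$-parameter version of Corollary~\ref{cor:distance_bound} exactly as the $2$-parameter proof cites its Corollary~\ref{cor:distance_bound}.

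The main obstacle I anticipate is not any single hard estimate but making the reduction "$d_{triv}$ is realized at a vertex" genuinely rigorous in dimension $n$. In the plane, the boundary of an intersection component is a union of axis-parallel segments and the realizing point sits at a corner almost by inspection; in higher dimensions one must combine Fact~\ref{fact:vertex_bound} (every non-vertex point of $/f/\cap g$ is strictly between two vertices of $/f/\cap g$) with the linearity of $w\mapsto\dl(w,\hat F_{x'})$ on a facet to conclude that sliding $x$ toward a vertex does not decrease the relevant $\dl$ value. I would therefore spend the bulk of the write-up on that monotonicity/vertex-reduction argument, and treat the facet-parallelism and the membership-in-$\hat S$ bookkeeping as short consequences, pointing the reader to the corresponding $2$-parameter lemmas whose proofs transfer line by line.
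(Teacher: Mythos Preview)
Your proposal is correct and follows exactly the approach the paper takes: the paper's own treatment of Proposition~\ref{prop:dtriv_in_S_n} is a one-line remark that ``the proofs are almost the same'' as Proposition~\ref{prop:dtriv_in_S} after replacing $D,S$ by $\hat D,\hat S$ and invoking the extended Proposition~\ref{prop:distance_bound}, which is precisely your plan. Your identification of the vertex-reduction step as the place needing the most care is apt, since the paper is already somewhat brisk about this point even in the $2$-parameter case.
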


\begin{corollary}[Extension of Corollary~\ref{cor:dtriv_in_S}]
	For any intersection component $Q$ of $M$ and $N_{\rightarrow d}$,  we have $d \in \hat{S}$ where $d=d_{triv}^{M, N_{\rightarrow d}}(I_Q)$.
	\label{cor:dtriv_in_S_n}
\end{corollary}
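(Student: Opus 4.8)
The plan is to derive Corollary~\ref{cor:dtriv_in_S_n} from Proposition~\ref{prop:dtriv_in_S_n} in exactly the way Corollary~\ref{cor:dtriv_in_S} is derived from Proposition~\ref{prop:dtriv_in_S}, namely by applying the latter to the pair $(M,N')$ with $N':=N_{\rightarrow d}$, and then transporting the resulting candidate value back to the \emph{unshifted} modules (recall the remark after Corollary~\ref{cor:dtriv_in_S}). Concretely, since $d=d_{triv}^{M,N'}(I_Q)$, the proof of Proposition~\ref{prop:dtriv_in_S_n} supplies a vertex $x\in V(I_Q)$ at which the supremum is attained, a facet $f\in U(I_M)\cup L(I_{N'})$ with $\dl(x,f)=\dl(x,\pi_f(x))=2d$, and — among the $n$ pairwise non-parallel facets of $F(I_M)\cup F(I_{N'})$ through $x$ — a facet $g$ parallel to $f$. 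Writing $\hat f=\{x_i=c_f\}$ and $\hat g=\{x_i=c_g\}$ (same index $i$, since $g\parallel f$), the functional $w\mapsto\dl(w,\hat f)=|c_f-w_i|$ is constant on $\hat g$, so $2d=\dl(x,\hat f)=\dl(y,\hat f)=|c_f-c_g|$ for every vertex $y\in V(g)\subseteq V(I_M)\cup V(I_{N'})$. The one remaining tool is diagonal shift-invariance, $\dl(w,A-\vec{d})=\dl(w+\vec{d},A)$, together with $I_{N'}=I_N-\vec{d}$, $F(I_{N'})=F(I_N)-\vec{d}$, $V(I_{N'})=V(I_N)-\vec{d}$.

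I would first dispose of the two ``unmixed'' configurations. If $f,g\in F(I_M)$, then $y\in V(I_M)$ and $2d=\dl(y,\hat f)\in\hat D(y)$, so $d\in\hat S$ by the clause ``$2d\in\hat D(x)$''. If $f,g\in F(I_{N'})$, replace $(y,f,g)$ by $(y+\vec{d},\,f+\vec{d},\,g+\vec{d})\in V(I_N)\times F(I_N)\times F(I_N)$; shift-invariance gives $2d=\dl(y,\hat f)=\dl(y+\vec{d},\widehat{f+\vec{d}})\in\hat D(y+\vec{d})$, so again $d\in\hat S$. For the two mixed configurations I would use that $f$ lies on an upper or a lower boundary, which together with $x\in g$ fixes the sign of $c_f-c_g$: if $f\in U(I_M)$ then $\pi_f(x)=x+\vec{2d}$ forces $c_f=c_g+2d$, while if $f\in L(I_{N'})$ then $\pi_f(x)=x-\vec{2d}$ forces $c_f=c_g-2d$. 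Shifting the $N'$-side flat back by $\vec{d}$ moves its constant by $+d$, so the distance from the corresponding unshifted vertex to the corresponding unshifted flat equals $|c_f-c_g\mp d|=|{\pm}2d\mp d|=d$; hence $d\in\hat D(\cdot)$ for a suitable vertex of $V(I_M)\cup V(I_N)$ and $d\in\hat S$. Since $Q$ was an arbitrary intersection component of $M,N_{\rightarrow d}$, and the same argument with the roles of $M$ and $N$ interchanged covers the components of $N_{\rightarrow d},M$, this proves the corollary.

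I expect the mixed case to be the only genuine obstacle: there the half-distance clause ``$2d\in\hat D(x)$'' of the definition of $\hat S$ is not what applies, and instead the shift by $\vec{d}$ cancels exactly half of the distance $2d$, so that it is the plain clause ``$d\in\hat D(x)$'' that closes the argument. That the cancellation is exact — and not perturbed by the other coordinates — is immediate once one observes that $\dl(\cdot,\hat f)$ depends only on the single coordinate cut out by $\hat f$; the unmixed cases are then routine consequences of shift-invariance of $\dl$, and the passage from the point $x$ to a vertex $y$ of $g$ is just constancy of a linear functional on the parallel facet.
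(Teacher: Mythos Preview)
Your proof is correct and follows the approach the paper intends: the paper gives essentially no argument here, stating only that the corollary follows ``by definition of $d_{triv}^{M,N}$ and the candidate set $S$'' (and, for $n>2$, that the proof is ``almost the same'' after passing to $\hat D,\hat S$). Your four-case analysis according to which of $f,g$ lies on the $M$-side or the $N'$-side --- and in particular your observation that in the mixed cases the diagonal shift by $\vec d$ cancels exactly half of the distance $2d$, so that it is the clause ``$d\in\hat D(\cdot)$'' rather than ``$2d\in\hat D(\cdot)$'' that places $d$ in $\hat S$ --- is a correct and careful elaboration of what the paper leaves implicit.
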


\begin{theorem} [Extension of Theorem~\ref{thm:dIinS}]
	$d_I(M, N)\in \hat{S}$. 
	\label{thm:dIinS_n}
\end{theorem}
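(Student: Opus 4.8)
The plan is to reproduce the argument of Theorem~\ref{thm:dIinS} essentially verbatim, with the only changes being the replacement of $S$ by $\hat{S}$ everywhere, and the invocation of the $n$-parameter versions of the auxiliary results in place of their $2$-parameter counterparts. Concretely, suppose toward a contradiction that $d=d_I(M,N)\notin\hat{S}$, and let $d^*$ be the largest element of $\hat{S}$ with $d^*\le d$; by assumption $d^*<d$. Since $\delta^*\in\hat{S}$ (this is the content of Proposition~\ref{prop:deltastar-search_n}, the $n$-parameter analogue of Proposition~\ref{prop:delstar-search}), we have $\delta^*\le d^*<d$, so to contradict the definition of $d_I$ it suffices to exhibit some $d''$ with $d^*<d''<d$ satisfying condition (ii) of Theorem~\ref{di_criteria}.

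Next I would fix an arbitrary intersection component $Q$ of $M$ and $N_{\rightarrow d}$ (the case $N$ and $M_{\rightarrow d}$ is symmetric). By the $n$-parameter analogue of Proposition~\ref{prop:stable_intersection_iff_dnotinS} — which follows because $d\notin\hat{S}$ — $Q$ is stable, hence Proposition~\ref{prop:stable_intersection_property0} gives a stable neighborhood $\{Q^\epsilon\}$ with the facet-wise bijection $\mu_\epsilon$. For small $\epsilon>0$, Theorem~\ref{di_criteria} applied at $d+\epsilon$ forces $Q^{+\epsilon}$ to be either $(M,N_{\rightarrow(d+\epsilon)})$-valid or $(d+\epsilon)_{(M,N_{\rightarrow(d+\epsilon)})}$-trivializable. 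In the valid case, Corollary~\ref{cor:stable_intersection_property}(i) propagates validity to $Q^{-\epsilon}$ for some $\epsilon>0$. In the trivializable case, Proposition~\ref{prop:triv} together with Corollary~\ref{cor:stable_intersection_property}(ii) gives $d_{triv}^{(M,N_{\rightarrow d+\epsilon})}(x)<d+3\epsilon$ for every $x\in V(I_Q)$ and every $\epsilon>0$; letting $\epsilon\to 0$ yields $d_{triv}^{(M,N_{\rightarrow d})}(x)\le d$. The strict inequality $d_{triv}^{(M,N_{\rightarrow d})}(x)<d$ then follows from Corollary~\ref{cor:dtriv_in_S_n}: equality would put $d$ into $\hat{S}$, contradicting $d\ne d^*$. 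Finally, with $\delta=d-d^*$ and $\epsilon=\delta/4$, one checks $d-\epsilon>d^*+2\epsilon\ge\max_{x\in V(I_Q)}d_{triv}^{(M,N_{\rightarrow d})}(x)+2\epsilon$, so Corollary~\ref{cor:stable_intersection_property}(ii) (via Proposition~\ref{prop:triv}) makes $Q^{-\epsilon}$ be $(d-\epsilon)_{(M,N_{\rightarrow d-\epsilon})}$-trivializable. In either case, setting $d''=d-\epsilon$ (taking the minimum of the finitely many $\epsilon$'s produced over all intersection components) gives the required value, and Theorem~\ref{di_criteria} produces a $d''$-interleaving with $d''<d$, a contradiction.

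The one genuinely new ingredient, and the step I expect to carry the real weight, is verifying that the trivialization distance $d_{triv}^{M,N}(I_Q)$ lands in $\hat{S}$ rather than in the smaller set $S$ — i.e.\ Proposition~\ref{prop:dtriv_in_S_n} and its shifted form Corollary~\ref{cor:dtriv_in_S_n}. In two dimensions one could localize the realizing point to a vertex of $I_Q$ lying on two non-parallel edges, one of which is parallel to the facet $f$ realizing the distance, and conclude via the explicit edge/diagonal geometry. In $n>2$ dimensions the realizing point need no longer be a vertex of $I_Q$, so the argument instead routes through Fact~\ref{fact:vertex_bound} and the extended Proposition~\ref{prop:distance_bound}: the function $w\mapsto\dl(w,\hat{f})$ restricted to a facet $F_x$ is affine (it equals $|c-w_i|$ on a face where $\hat f=\{x_i=c\}$), so its extremal values over $F_x$ are attained at vertices of $F_x$, which belong to $V(I_M)\cup V(I_N)$; squeezing $d_x$ between two such vertex values $d_y\le d_x\le d_z$ and using that $\hat S$ is defined to contain exactly these flat-distances (and their halves) at vertices of $I_M,I_N$ places $d_{triv}^{M,N}(I_Q)$ in $\hat S$. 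Everything else in the proof is a transcription in which the symbols $S$, Proposition~\ref{prop:dtriv_in_S}, Corollary~\ref{cor:dtriv_in_S}, Proposition~\ref{prop:delstar-search}, and Proposition~\ref{prop:distance_bound_2} are swapped for $\hat S$ and their $n$-parameter extensions, so I would simply point to Theorem~\ref{thm:dIinS} and indicate these substitutions rather than rewrite the argument in full.
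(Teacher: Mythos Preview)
Your proposal is correct and matches the paper's approach exactly: the paper states that the proof of Theorem~\ref{thm:dIinS_n} is the same as that of Theorem~\ref{thm:dIinS} after replacing $S$ by $\hat{S}$ and invoking the $n$-parameter extensions (Proposition~\ref{prop:dtriv_in_S_n}, Corollary~\ref{cor:dtriv_in_S_n}, Proposition~\ref{prop:deltastar-search_n}, Proposition~\ref{prop:distance_bound}). Your identification of Proposition~\ref{prop:dtriv_in_S_n}/Corollary~\ref{cor:dtriv_in_S_n} as the only substantive new ingredient, and your sketch of why $\hat{S}$ (flat distances at vertices) is needed in place of $S$, is also in line with how the paper organizes the extension.
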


\noindent
\begin{proposition}[Extension of Proposition~\ref{prop:delstar-search} for $(n>2)$-parameter case]
	$(i)~ \delta^*=\max_{x\in \bar{V}}\set{d_I(M|_{\Delta_x}, N|_{\Delta_x})},~(ii)~ \delta^*\in S$.
	\label{prop:deltastar-search_n}
\end{proposition}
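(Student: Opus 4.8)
**Proof proposal for Proposition~\ref{prop:deltastar-search_n} (the $n$-parameter version of the $\delta^*$ search).**

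The plan is to mirror the structure that makes the $2$-parameter Proposition~\ref{prop:delstar-search} work, but to replace the finite set of vertices by the finite set $\bar V$ of ``critical'' vertices defined just above, namely the vertices of the convex pieces $/f/\cap g$ as $f,g$ range over $F(I_M)\cup F(I_N)$. Recall $\delta^* = \sup_{x\in\RB^n} d_I\bigl(M|_{\Delta_x}, N|_{\Delta_x}\bigr)$, so for (i) it suffices to show the supremum is attained at some $x\in\bar V$, i.e. that the function $x\mapsto d_I(M|_{\Delta_x},N|_{\Delta_x})$ on the quotient of $\RB^n$ by the diagonal can only achieve local maxima at images of points in $\bar V$. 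For (ii), once (i) is known, the value $d_I(M|_{\Delta_x},N|_{\Delta_x})$ for a $1$-parameter slice is governed (via the $1$-parameter isometry theorem / bottleneck matching between the intervals $\Delta_x\cap I_M$ and $\Delta_x\cap I_N$) by distances of the form $\dl(x', \cdot)$ measured along $\Delta_x$ to the boundary facets crossed by $\Delta_x$; when $x\in\bar V$ this quantity is exactly of the form $\dl(y,\hat f)$ or half of it for a vertex $y\in V(I_M)\cup V(I_N)$, hence lies in $\hat S$ (and in the $2$-parameter specialization, in $S$).

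The key steps, in order, would be: (1) Fix $x$ and analyze the $1$-parameter slices $M|_{\Delta_x}$ and $N|_{\Delta_x}$: each is a direct sum of interval (bar) modules given by the connected components of $\Delta_x\cap I_M$ and $\Delta_x\cap I_N$, whose endpoints are the intersections of $\Delta_x$ with facets of $I_M$, $I_N$. By the $1$-parameter isometry theorem, $d_I(M|_{\Delta_x},N|_{\Delta_x})=d_B$ of these bar decompositions, an explicit max/min over matchings of endpoint distances along $\Delta_x$. (2) Show this quantity, as a function of $x$ moving transversally to $\Delta_x$, is piecewise of the form ``linear combination (with $\pm\frac12$ coefficients) of the $\dl(x,\hat f)$'', and that the combinatorial type of the optimal matching — hence which linear piece is active — changes only when $\Delta_x$ passes through a vertex of some $/f/\cap g$, which is precisely a point of $\bar V$; here Fact~\ref{fact:vertex_bound} gives that on each such piece the extreme value of the relevant $\dl$ over the piece is attained at a vertex of the piece, i.e. in $\bar V$. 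This is the $n$-parameter analogue of the edge-by-edge argument that works in the plane, and it is where Proposition~\ref{prop:distance_bound} (linearity of $w\mapsto\dl(w,\hat F)$ on a facet, attaining extrema at $V(F)$) is invoked. (3) Conclude (i): the supremum over all $x$ equals the max over the finite set $\bar V$, and it is attained. (4) For (ii): at $x\in\bar V$, every endpoint distance along $\Delta_x$ appearing in the $d_B$ expression equals $\dl(x',\hat f)$ for some facet $f$ and some point $x'$ that, by the $\bar V$ construction, can be taken to be a vertex of $I_M$ or $I_N$; triangular commutativity forces the relevant values to be either such a $\dl$ or half of it, so $\delta^*\in\hat S$ (respectively $\in S$ when $n=2$, using Proposition~\ref{prop:distance_bound_2} and Corollary~\ref{cor:distance_bound} as in the proof of Proposition~\ref{prop:dtriv_in_S}).

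The main obstacle I anticipate is Step~(2): cleanly proving that the only ``events'' at which the optimal $1$-parameter matching between the two sliced bar-decompositions changes combinatorial type are the passages of $\Delta_x$ through vertices of the sets $/f/\cap g$. In the $2$-parameter case the boundaries are monotone staircases and one can track intersection points edge by edge; in dimension $n$ one must argue that the map $x\mapsto(\text{sorted multiset of endpoints of }\Delta_x\cap I_M,\ \Delta_x\cap I_N)$ is piecewise-linear with breakpoints exactly where $\Delta_x$ meets a lower-dimensional stratum, and that two endpoints (one from $I_M$, one from $I_N$) swap order — the only thing that can alter the bottleneck matching — precisely when $\Delta_x$ meets $/f/\cap g$ for the corresponding facets $f,g$. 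Fact~\ref{fact:vertex_bound} and Proposition~\ref{prop:distance_bound} are the tools designed to absorb this, so the real work is organizing the case analysis (order swaps, a bar of one module being born/dying inside a bar of the other, bars appearing or disappearing) and checking that in every case the witnessing point lands in $\bar V$; once that bookkeeping is done, parts (i) and (ii) follow formally, exactly paralleling the $n=2$ argument.
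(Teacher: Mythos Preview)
Your Step~(2) is where the plan diverges from the paper and runs into trouble. Two issues. First, since $M,N$ are themselves interval modules, the interval property forces $\Delta_x\cap I_M$ and $\Delta_x\cap I_N$ to be connected, so each slice is a \emph{single} bar; there is no matching combinatorics beyond ``match the pair'' versus ``leave both unmatched'', and the $1$-parameter isometry theorem is unnecessary---Proposition~\ref{prop:1d_interleaving_distance} gives the slice interleaving distance directly. Second, and more seriously, the claim that the active linear piece changes only at points of $\bar V$ is stronger than what is needed and is not what Fact~\ref{fact:vertex_bound} delivers. That fact is purely order-theoretic (for any non-vertex point of $/f/\cap g$ there exist comparable vertices $y<x<z$ in $V(/f/\cap g)$), not a statement about extrema of linear functionals on polytopal cells. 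Moreover, the cells of your would-be subdivision are cut out by \emph{four} facets simultaneously (one from each of $L(I_M),U(I_M),L(I_N),U(I_N)$) together with the min/max switches in the two-bar distance formula, whereas $\bar V$ records only \emph{pairwise} interactions $/f/\cap g$; there is no evident reason the cell vertices should land in $\bar V$.

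The paper bypasses this entirely with a direct sandwich argument. Fixing $\delta$, Proposition~\ref{prop:1d_interleaving_distance} turns ``$\delta\ge d_I(M|_{\Delta_x},N|_{\Delta_x})$ for all $x$'' into a conjunction of implications such as ``$x\in L(I_M)$, $\dl(x,U(I_M))>2\delta\ \Rightarrow\ x+\vec\delta\in I_N$''. To show these hold for all $x\in B(I_M)$ once they hold on $\bar V$: given such $x$ on a facet $f$ with $x'=\pi_{U(I_M)}(x)$ on facet $g$, pick a vertex $y<x$ of $f$ and---this is precisely the intended use of Fact~\ref{fact:vertex_bound}---a vertex $z>x'$ of $/f/\cap g$. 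Poset-convexity of $I_M$ gives $y+2\vec\delta\in I_M$ and $z-2\vec\delta\in I_M$, so by the hypothesis at $\bar V$ one gets $y+\vec\delta\in I_N$ and $z-\vec\delta\in I_N$; then $y+\vec\delta\le x+\vec\delta\le z-\vec\delta$ and the interval property of $I_N$ yield $x+\vec\delta\in I_N$. No event analysis, no tracking of which linear piece is active. Part~(ii) then follows along the lines you sketch, via Proposition~\ref{prop:distance_bound}.
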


\begin{proof} First, we show (i).
	By definition of $\delta^*$, the claim is equivalent to showing that
	$$
	\delta \geq \delta^* \iff \delta \geq \max_{x\in \bar{V}}\set{d_I(M|_{\Delta_x}, N|_{\Delta_x})}. 
	$$
	
	We observe the following chain of equivalences.
	\begin{eqnarray*}
		\delta \geq \delta^*
		&\iff& 
		\mbox{ for every pair $M|_{\Delta_x}, N|_{\Delta_x}$,  
			$\delta \geq d_I(M|_{\Delta_x}, N|_{\Delta_x})$}\\
		&\iff&
		\begin{cases}
			\forall x \in B(I_M), \\
			\dl(x, U(I_M))> 2\delta \implies x + \vec{\delta} \in I_N 
			\mbox{ and }\pi_{U(I_M)}(x)-\vec{\delta}\in I_N,
			\\
			\dl(x, L(I_M))> 2\delta \implies x - \vec{\delta} \in I_N 
			\mbox{ and }\pi_{L(I_M)}(x)+\vec{\delta}\in I_N.
			\\
			\forall y \in B(I_N), \\
			\dl(y, U(I_N))>2\delta \implies y + \vec{\delta} \in I_M 
			\mbox{ and }\pi_{U(I_N)}(y)-\vec{\delta}\in I_M,
			\\
			\dl(y, L(I_N))>2\delta \implies y - \vec{\delta} \in I_M
			\mbox{ and }\pi_{U(I_N)}(y)+\vec{\delta}\in I_M.
			
		\end{cases}\\
		&\iff&
		\begin{cases}
			\forall x \in \bar{V}\cap V(I_M),\\
			\dl(x, U(I_M))> 2\delta \implies x + \vec{\delta} \in I_N     \mbox{ and }\pi_{U(I_M)}(x)-\vec{\delta}\in I_N,\\
			
			\dl(x, L(I_M))> 2\delta \implies x - \vec{\delta} \in I_N      \mbox{ and }\pi_{L(I_M)}(x)+\vec{\delta}\in I_N.
			\\
			\forall y \in \bar{V}\cap V(I_N),\\			
			\dl(y, U(I_N))>2\delta \implies y + \vec{\delta} \in I_M 
			\mbox{ and }\pi_{U(I_N)}(y)-\vec{\delta}\in I_M,
			\\
			
			\dl(y, L(I_N))>2\delta \implies y - \vec{\delta} \in I_M
			\mbox{ and }\pi_{U(I_N)}(y)+\vec{\delta}\in I_M.
		\end{cases}\\
		&\iff&
		\delta \geq \max_{x\in V(I_M)\cup V(I_N)}\set{d_I(M|_{\Delta_x}, N|_{\Delta_x})}.
	\end{eqnarray*}

	The first two and the last equivalences follow from the definition of interleaving distance and Proposition~\ref{prop:1d_interleaving_distance}.
	The $\implies$ direction of the third equivalence follows trivially from the fact that $\bar{V}\cap V(I_M) \subseteq B(I_M)$ and $\bar{V}\cap V(I_N) \subseteq B(I_N)$. For the $\Longleftarrow$ direction, 
	we show that if the implications $x\pm2\vec{\delta} \in I_M \implies x\pm\vec{\delta} \in I_N$ hold for every point  $x\in \bar{V}\cap V(I_M)$,
	then they also hold for every point in $B(I_M)$. 
	Similarly, one can show if the implications $x\pm2\vec{\delta} \in I_N \implies x\pm\vec{\delta} \in I_M$ hold for every point  $x\in \bar{V}\cap V(I_N)$, then they also hold for every point in $B(I_N)$.
	

	Without loss of generality, assume that $x \in L(I_M)-\bar{V}$ with $\dl(x, U(I_M))>2\delta$.
	We want to show that $x+\vec{\delta}\in I_N$.
	
	Let $x'=\pi_{U(I_M)}(x)$. Observe that $x' > x+2\vec{\delta}$.
	Let $f,g$ be the facets containing $x, x'$ respectively. Choose any $y\in V(f)$ with $y < x$. 
	Such a $y$ exists since $x$ is not a vertex in $f$. 
	Then, we have 
	$$
	I_M \ni y\leq y+2\vec{\delta}\leq x+2\vec{\delta}\in I_M \implies y+2\vec{\delta}\in I_M.
	$$
	By assumption, we have $y+\vec{\delta}\in I_N$. 
	Notice that $y+\vec{\delta}\leq x+\vec{\delta}$.
	
	Let $z\in V(/f/\cap g)$ with $z > x'$. Such a point $z$ always exists by Fact~\ref{fact:vertex_bound}. Then we have 
	$$ 
	I_M\ni x = x+ 2\vec{\delta}-2\vec{\delta} \leq x'-2\vec{\delta}\leq z-2\vec{\delta}\leq z \in I_M   \implies z-2\vec{\delta} \in I_M
	$$
	By assumption, we have $z-\vec{\delta}\in I_N$. Observe that $z-\vec{\delta}\geq x'-\vec{\delta}\geq x+2\vec{\delta}-\vec{\delta}=x+\vec{\delta}$.
	
	Now we have 
	
	$$
	I_N \ni y+\vec{\delta} \leq x+\vec{\delta}\leq z-\vec{\delta}\in I_N \implies x+\vec{\delta}\in I_N
	$$
	
	This completes the proof of (i).
	The proof of (ii) is the same as the one presented for the original proposition.
	
	
	
\end{proof}


In $(n>2)$-parameter case, three things are different from the 2-parameter case from the computational viewpoint: the extended candidate set $\hat{S}$, the discrete set $\bar{V}$ for computing $\delta^*$, and the intersection of intervals in $\Real^n$. We describe the modified algorithm for $n>2$ case below:
\bigskip

\noindent\textbf{Algorithm} {\sc Interleaving ($n>2$)}

(output: $d_I(M, N)$, input: $I_M$ and $I_N$ with $t$ vertices in total)
\begin{enumerate}
	
	\item Compute the candidate set $\hat{S}$ and 
	let $\epsilon$ be the half of the smallest difference between any two numbers in $\hat{S}$.  /* $O(t^2)$ time */
	\item Compute $\delta^*=\max_{x\in \bar{V}}\set{d_I(M|_{\Delta_x}, N|_{\Delta_x})}$; Let $\delta = \delta^*$.  /* $O(t^2)$ time */
	\item Output $\delta$ after a binary search in $S_{\geq \delta^*}$ by following steps /* $O(\log t$) probes */ 
	\begin{itemize}
		\item let $\delta'=\delta+\epsilon$
		\item Compute intersections $I_M \cap I_{N_{\rightarrow\delta'}}$ and $I_N \cap I_{M_{\rightarrow\delta'}}$. /* $O(t^2)$ time */
		\item For each intersection component, check if it is valid or trivializable according to Theorem~\ref{di_criteria}. /* $O(t^2)$ time */
	\end{itemize}
\end{enumerate}

The computation of $\bar{V}$ depends on the intersection $/f/\cap g$ for each pair of facets $f,g\in F(I_M)\cup F(I_N)$. We first compute the projection of $f$ onto the flat $\hat{g}$ of $g$ along the direction $\vec{e}=(1,\cdots, 1)$, denoted as $f_g=/f/\cap \hat{g}$, which is a $(n-1)$-dimensional convex set in $\hat{g}$. Then, we compute the intersection $f_g\cap g\subseteq \hat{g}$. Since we have to do the process for each pair of faces, the entire process takes time $O(t^2)$ where the total number of faces in intervals is $O(t)$.

The computation of $\hat{S}$ depends on the distances from vertices to flats containing the facets. But, since each vertex is contained in a facet, this can be done automatically when we compute $f_g$ in the previous procedure.

In each iteration, the computation of intersection of two intervals requires $O(t^2)$ time. So the total time complexity becomes $O(t^2\log t)$ by taking into account $O(\log t)$ probes in the binary search.

\section{A lower bound on $d_I$}
\label{sec:dimension_function}

In this section we propose a distance between two persistence modules that bounds the interleaving distance from below. This distance is defined for $n$-parameter modules and not necessarily only for 2-parameter modules. It
is based on dimensions of the vectors involved with the two modules and is efficiently computable.

Let $[n]=\set{1,2,\ldots, n}$ be the set of all the integers from $1$ to $n$. Let $\binom{[n]}{k}=\set{s\subseteq [n]: |s|=k}$ be the set of all subset in $[n]$ with cardinality $k$.

\begin{definition}
	%
	
	For a right continuous function $f:\mathbb{R}^n\rightarrow \mathbb{Z}$, 
	define the {\em differential} of $f$ to be $\Delta f: \mathbb{R}^n\rightarrow \mathbb{Z}$ where
	$$
	\Delta f(x) = \sum_{k=0}^n (-1)^k \cdot \sum_{s\in \binom{[n]}{k}} \lim_{\epsilon \to 0_+} f(x-\epsilon \cdot \sum_{i\in s}e_i)
	$$
	Note that for $k=0$, $\sum_{s\in \binom{[n]}{k}} \lim_{\epsilon \to 0_+} f(x-\epsilon \cdot \sum_{i\in s}e_i)=f(x)$. We say $f$ is {\em nice} if the support $supp(\Delta f)$ is finite and $supp(f) \subseteq \set{x\,|\,x\geq \vec{a}}$ for some $a \in \Real$.
\end{definition}


The differential $\Delta f$ is a function recording the change of function values of $f$ at each point, especially at 'jump points'. For $n=1$, $\Delta f(x)=f(x)-\lim_{\epsilon \to 0_+}f(x-\epsilon)$. For $n=2$, which is the case we deal with, we have

$$
\Delta f(x) = f(x) 
- \lim_{\epsilon \to 0_+}f(x-(\epsilon, 0))-\lim_{\epsilon \to 0_+}f(x-(0, \epsilon))
+ \lim_{\epsilon \to 0_+}f(x-(\epsilon, \epsilon)).
$$
See Figure~\ref{dimfunc-fig} and \ref{dm-fig} for illustrations in $1$- and 2-parameter cases respectively.
\begin{proposition}
	\label{prop:nice-func}
	For a nice function $f$, 
	$f(x)=\sum_{y \leq x} \Delta f(y)$ (Proof in Appendix~\ref{app:miss2}).
\end{proposition}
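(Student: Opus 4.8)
The plan is to show that the differential operator $\Delta$ and the down-set summation operator $\Sigma g(x):=\sum_{y\le x}g(y)$ are mutually inverse on the functions at hand, isolating the one genuinely subtle direction. Since $f$ is nice, $\mathrm{supp}(\Delta f)$ is finite, so $\sum_{y\le x}\Delta f(y)$ is a finite sum and is well defined; write $g:=\Sigma(\Delta f)$, so the claim becomes $g=f$. First I would record two bookkeeping facts. (a) $\Delta f$ vanishes off $\{x\ge\vec a\}$: if some coordinate $y_i<a$ then each point $y-\epsilon\sum_{j\in s}e_j$ still has $i$-th coordinate $<a$, hence lies off $\mathrm{supp}(f)$, so every term defining $\Delta f(y)$ is zero. (b) Consequently $g=\sum_{y\in\mathrm{supp}(\Delta f)}\Delta f(y)\cdot\mathbb{1}_{\{x\ge y\}}$ is a finite $\mathbb Z$-combination of indicators of up-sets, hence right continuous and supported in $\{x\ge\vec a\}$.

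Next I would prove the easy direction, $\Delta\circ\Sigma=\mathrm{id}$ on finitely supported functions, by direct inclusion--exclusion. For a single up-set indicator one has, for $x\ge y$, $\lim_{\epsilon\to0_+}\mathbb{1}_{\{x-\epsilon\sum_{i\in s}e_i\ge y\}}=\mathbb{1}[\,s\subseteq T\,]$ where $T=\{i:x_i>y_i\}$, so $\Delta(\mathbb{1}_{\{x\ge y\}})(x)=\mathbb{1}[x\ge y]\sum_{s\subseteq T}(-1)^{|s|}$, which is the point mass at $y$ because the alternating sum over subsets of $T$ vanishes unless $T=\emptyset$, i.e.\ unless $x=y$. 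By linearity of $\Delta$ (each diagonal left limit is a limit of a finite sum of piecewise-constant functions), $\Delta\Sigma g=\sum_y g(y)\,\delta_y=g$. Applying this with $g=\Delta f$ gives $\Delta g=\Delta f$, so $h:=f-g$ satisfies $\Delta h\equiv0$, is right continuous, and is supported in $\{x\ge\vec a\}$.

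It remains to prove the rigidity statement that closes the argument: a right continuous $\mathbb Z$-valued $h$ with $\Delta h\equiv0$ and $\mathrm{supp}(h)\subseteq\{x\ge\vec a\}$ is identically $0$. This is the step I expect to be the main obstacle, since it is exactly here that right continuity together with the finiteness of $\mathrm{supp}(\Delta h)$ (empty, in this case) must be used to force piecewise-constant structure and, in particular, to rule out jumps of $h$ along non-axis-parallel loci such as a diagonal hyperplane --- such a jump would make $\mathrm{supp}(\Delta h)$ infinite for a right continuous function. I would prove it by induction on $n$: restrict $h$ to lines parallel to a coordinate axis, where $\Delta h\equiv0$ together with right continuity forces $h$ to be locally constant (the base case $n=1$ being that a right continuous $\mathbb Z$-valued function with no jumps is constant, and with finitely many jumps is the telescoped sum of those jumps), then peel off one coordinate at a time using the factorization of $\Delta$ into single-coordinate jump operators. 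Once $h$ is known to be globally constant, the support hypothesis forces that constant to be $0$, giving $f=g=\sum_{y\le x}\Delta f(y)$. (The same scheme, applied to $f$ directly rather than to $h$, also shows that a nice $f$ is a finite $\mathbb Z$-combination of up-set indicators, after which the proposition is just the inclusion--exclusion identity of the second paragraph.)
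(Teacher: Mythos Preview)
Your approach is correct and takes a genuinely different route from the paper. The paper argues by a direct ``induction on $x$'': writing $R_x=\{y:y\le x\}=\{x\}\sqcup\bigcup_i R_x^i$ with $R_x^s=\{y\le x:y_i<x_i\ \forall i\in s\}$, it applies inclusion--exclusion to get $\overline{\Delta f}(\bigcup_i R_x^i)=\sum_{k\ge1}(-1)^{k+1}\sum_{|s|=k}\overline{\Delta f}(R_x^s)$, then invokes the inductive hypothesis (together with finiteness of $\mathrm{supp}(\Delta f)$) to identify $\overline{\Delta f}(R_x^s)$ with $\lim_{\epsilon\to0^+}f(x-\epsilon\sum_{i\in s}e_i)$ and recovers the defining formula for $\Delta f(x)$. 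You instead first establish $\Delta\Sigma=\mathrm{id}$ on finitely supported functions via the clean pointwise identity $\Delta(\mathbb{1}_{\{\cdot\ge y\}})=\delta_y$, and then isolate all the remaining work as a rigidity lemma for $h=f-\Sigma\Delta f$. The two packagings meet at the same core difficulty: the paper's ``induction on $x$'' over the non-well-founded poset $\mathbb{R}^n$ is precisely your rigidity claim in disguise (unwinding the inductive step gives exactly $\Delta h\equiv0$, and the induction then amounts to propagating $h\equiv0$ upward from the base region $\{x\not\ge\vec a\}$). Your version has the virtue of naming this step explicitly; the paper's is more compact. One caution on your rigidity sketch: $\Delta h\equiv0$ does not by itself say $\Delta_i h\equiv0$ on coordinate lines; using the factorization $\Delta=\Delta_1\cdots\Delta_n$ you must propagate vanishing from $\{x_n<a\}$ through $\Delta_n(\Delta_1\cdots\Delta_{n-1}h)\equiv0$, and this requires enough joint regularity to interchange left-limits across coordinates --- but this is the same informality the paper's induction also leaves unaddressed.
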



We also define
$\Delta f_{+}=\max\{{\Delta f, 0}\}$, $\Delta f_{-}=\min \{{\Delta f, 0}\}$ and  $f_{\Sigma+}(x) = \sum_{y\leq x}\Delta f_{+}(y)$, $f_{\Sigma-}(x) = \sum_{y\leq x} \Delta f_{-}(y)$. Note that $f_{\Sigma+}\geq 0$, $f_{\Sigma-}\leq 0$, and are both monotonic functions. By definition and property of $\Delta f$, we have $f = f_{\Sigma+} + f_{\Sigma-}$.

\begin{definition}
	For any $\delta>0$, we define the $\delta$-extension of $f$ as $f^{+\delta}=f_+(x+\delta)+f_-(x-\delta)$. Similarly we define the $\delta$-shrinking of $f$ as $f^{-\delta}=f_-(x+\delta)+f_+(x-\delta)$ (see Figure~\ref{dimfunc-fig}).
\end{definition}





Proposition~\ref{prop_delta_further} below follows from the definition.


\begin{figure}[ht!]
	\centering
	\includegraphics[width=0.9\textwidth]{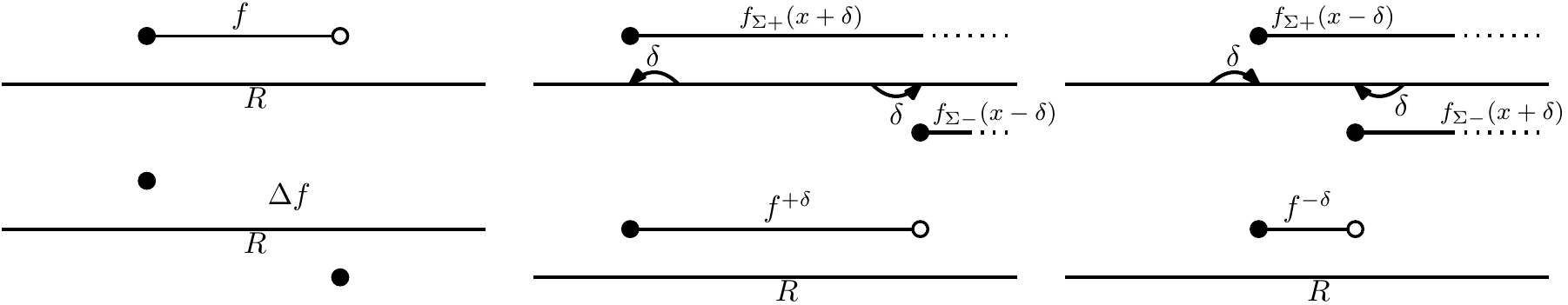}
	\caption{A nice function and its differential (left), its $\delta$-extension (middle), $\delta$-shrinking (right)}
	\label{dimfunc-fig}
\end{figure}

\begin{sloppypar}
	\begin{proposition} \label{prop_delta_further}
		For any $\delta > 0 \in \mathbb{R}$, we have 
		$f^{\pm\delta}(x) = f(x\mp\delta) + \sum_{y\leq x\pm\delta, y \not\leq x\mp\delta} \Delta f_\pm(y)$.
	\end{proposition}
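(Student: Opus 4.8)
\textbf{Proof plan for Proposition~\ref{prop_delta_further}.}

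The plan is to expand the definitions of $f^{+\delta}$ and $f^{-\delta}$ and then recombine the pieces using the decomposition $f = f_{\Sigma+} + f_{\Sigma-}$ together with Proposition~\ref{prop:nice-func}. I will carry out the $+\delta$ case in detail; the $-\delta$ case is symmetric, obtained by swapping the roles of $f_+$ (i.e.\ $f_{\Sigma+}$) and $f_-$ (i.e.\ $f_{\Sigma-}$) and replacing $\delta$ by $-\delta$ throughout, so I would only remark on it at the end.

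First I would write, directly from the definition, $f^{+\delta}(x) = f_{\Sigma+}(x+\vec\delta) + f_{\Sigma-}(x-\vec\delta)$. Using Proposition~\ref{prop:nice-func} applied to the nice functions $f_{\Sigma+}$ and $f_{\Sigma-}$ (whose differentials are $\Delta f_+$ and $\Delta f_-$ respectively, and whose supports lie in the appropriate upper set), this equals $\sum_{y \leq x+\vec\delta} \Delta f_+(y) + \sum_{y \leq x-\vec\delta} \Delta f_-(y)$. Next I would rewrite the first sum by splitting the index set $\{y \leq x+\vec\delta\}$ into the disjoint union of $\{y \leq x-\vec\delta\}$ and $\{y \leq x+\vec\delta,\ y \not\leq x-\vec\delta\}$, giving
\[
f^{+\delta}(x) = \Bigl(\sum_{y \leq x-\vec\delta} \Delta f_+(y) + \sum_{y \leq x-\vec\delta} \Delta f_-(y)\Bigr) + \sum_{y \leq x+\vec\delta,\ y \not\leq x-\vec\delta} \Delta f_+(y).
\]
The parenthesized expression is $\sum_{y\leq x-\vec\delta}(\Delta f_+(y) + \Delta f_-(y)) = \sum_{y\leq x-\vec\delta}\Delta f(y) = f(x-\vec\delta)$, again by Proposition~\ref{prop:nice-func} since $\Delta f = \Delta f_+ + \Delta f_-$. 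This yields exactly $f^{+\delta}(x) = f(x-\vec\delta) + \sum_{y\leq x+\vec\delta,\ y\not\leq x-\vec\delta}\Delta f_+(y)$, matching the claimed formula (with $\vec\delta$ in place of the scalar $\delta$, consistent with the diagonal shift convention used elsewhere in the paper).

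The only genuine point requiring care — and the step I expect to be the mildest obstacle — is justifying that Proposition~\ref{prop:nice-func} applies to $f_{\Sigma+}$ and $f_{\Sigma-}$: one needs that these are right continuous with finite-support differentials and support contained in some $\{x \geq \vec a\}$, which follows because $\mathrm{supp}(\Delta f_\pm) \subseteq \mathrm{supp}(\Delta f)$ is finite, hence bounded below, and $f_{\Sigma\pm}$ is by construction a sum over a down-set of this differential. The interchange/splitting of the (finite) sums is then purely formal. I would state this verification in a single sentence and leave the rest as the short computation above.
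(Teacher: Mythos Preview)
Your argument is correct and is exactly the unpacking the paper has in mind when it says the proposition ``follows from the definition.'' One small simplification: the identity $f_{\Sigma\pm}(x)=\sum_{y\leq x}\Delta f_{\pm}(y)$ is literally the \emph{definition} of $f_{\Sigma\pm}$, so you do not need to invoke Proposition~\ref{prop:nice-func} (or verify niceness of $f_{\Sigma\pm}$) at that step; Proposition~\ref{prop:nice-func} is only needed at the end to recognize $\sum_{y\leq x-\vec\delta}\Delta f(y)=f(x-\vec\delta)$.
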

\end{sloppypar}

That is to say, for any $\delta\in \mathbb{R}$, the extended (shrunk) function $f^\delta$ can be computed by adding to $f(x-|\delta|)$ the positive (negative) difference values of $\Delta f$ in $(x-|\delta|, x+|\delta|]$. From this, it follows:

\begin{corollary}
	Given $0\leq \delta \leq \delta' \in \Real$, we have $f^{+\delta}\leq f^{+\delta'}$ and $f^{-\delta}\geq f^{-\delta'}$.
\end{corollary}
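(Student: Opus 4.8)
The plan is to prove the two monotonicity statements separately, each by invoking Proposition~\ref{prop_delta_further} to rewrite the extension/shrinking as a sum of a shifted copy of $f$ plus a correction term, and then comparing the correction terms. First I would treat $f^{+\delta}\leq f^{+\delta'}$. By Proposition~\ref{prop_delta_further} with the $+$ sign, we have $f^{+\delta}(x)=f(x-\delta)+\sum_{y\leq x+\delta,\ y\not\leq x-\delta}\Delta f_+(y)$ and likewise $f^{+\delta'}(x)=f(x-\delta')+\sum_{y\leq x+\delta',\ y\not\leq x-\delta'}\Delta f_+(y)$. Since $\Delta f_+\geq 0$ everywhere and $f(w)=\sum_{y\leq w}\Delta f(y)$ by Proposition~\ref{prop:nice-func}, the difference $f(x-\delta)-f(x-\delta')$ equals $\sum_{x-\delta'<y\leq x-\delta}\Delta f(y)$ (more precisely, the sum over $y\leq x-\delta$ but $y\not\leq x-\delta'$), and the combinatorics of the index sets of the two correction sums must account exactly for this gap. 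The key point is that the index set $\{y\leq x+\delta,\ y\not\leq x-\delta\}$ for the smaller parameter is contained in the corresponding set for $\delta'$, and the extra indices picked up by $\delta'$ are precisely those in $\{x-\delta<y\leq x-\delta'\ \text{region}\}\cup\{x+\delta<y\leq x+\delta'\ \text{region}\}$ (in the appropriate ``$\leq$ but not $\leq$'' sense). Combining these, $f^{+\delta'}(x)-f^{+\delta}(x)$ rearranges into a sum of $\Delta f_+$ over a set of indices plus the terms $\Delta f(y)-\Delta f_+(y)=\Delta f_-(y)\leq 0$ cancelling against $f(x-\delta)-f(x-\delta')$ — I would check that the net result is $\sum$ of nonnegative terms, hence $\geq 0$.

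Concretely, the cleanest route is to avoid the shifted-$f$ form entirely and work directly from the definitions $f^{+\delta}(x)=f_{\Sigma+}(x+\delta)+f_{\Sigma-}(x-\delta)$ and $f^{+\delta'}(x)=f_{\Sigma+}(x+\delta')+f_{\Sigma-}(x-\delta')$. Then $f^{+\delta'}(x)-f^{+\delta}(x)=\bigl(f_{\Sigma+}(x+\delta')-f_{\Sigma+}(x+\delta)\bigr)+\bigl(f_{\Sigma-}(x-\delta')-f_{\Sigma-}(x-\delta)\bigr)$. Since $\delta'\geq\delta$ we have $x+\delta'\geq x+\delta$, so $f_{\Sigma+}$ monotone increasing (it is a sum of nonnegative $\Delta f_+$ terms, hence monotonic as stated right before the $\delta$-extension definition) gives the first bracket $\geq 0$; and $x-\delta'\leq x-\delta$, so $f_{\Sigma-}$ monotone increasing gives $f_{\Sigma-}(x-\delta')\leq f_{\Sigma-}(x-\delta)$ — wait, that makes the second bracket $\leq 0$. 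So I must be careful with the sign convention: re-reading, $f_{\Sigma-}\leq 0$ and is monotonic, and $f^{+\delta}$ uses $f_-(x-\delta)$, i.e.\ the negative part evaluated at a \emph{smaller} argument, which makes it \emph{more negative}; the larger $\delta$ the more negative that contribution, but simultaneously the $f_+(x+\delta)$ contribution grows. The correct bookkeeping is that $f^{+\delta}$ is defined so that extension \emph{increases} $f$: one shows $f^{+\delta}\geq f$ first, and then the telescoping $f^{+\delta'}-f^{+\delta}=(f^{+\delta'}-f)-(f^{+\delta}-f)$ where each summand is itself an ``extension from scratch,'' so monotonicity in $\delta$ reduces to the same inequality applied with $f^{+\delta}$ in place of $f$.

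The analogous argument with all inequalities reversed and $\Delta f_-,\ f_{\Sigma-}$ in place of $\Delta f_+,\ f_{\Sigma+}$ gives $f^{-\delta}\geq f^{-\delta'}$, using that $\delta$-shrinking \emph{decreases} $f$ and that this decrease is monotone in $\delta$. I expect the main obstacle to be purely bookkeeping: getting the sign conventions in the definition of $f^{\pm\delta}$ exactly right (the paper writes $f_+(x+\delta)+f_-(x-\delta)$ with $f_+=f_{\Sigma+}$, $f_-=f_{\Sigma-}$, which is slightly error-prone), and correctly identifying which ``not-$\leq$'' region of indices $\Delta f_\pm$ is summed over when passing from $\delta$ to $\delta'$. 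Once the monotonicity of $f_{\Sigma+}$ (increasing) and $f_{\Sigma-}$ (also increasing, toward $0$ as the argument grows) is pinned down — both of which are immediate from their definitions as running sums of nonnegative resp.\ nonpositive quantities — the inequalities follow by the argument-monotonicity of these running sums applied to the arguments $x\pm\delta$ versus $x\pm\delta'$. No genuinely hard step is involved; this is a direct consequence of Propositions~\ref{prop:nice-func} and~\ref{prop_delta_further}.
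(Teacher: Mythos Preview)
Your ``cleanest route'' via the definition $f^{+\delta}(x)=f_{\Sigma+}(x+\delta)+f_{\Sigma-}(x-\delta)$ is exactly right and is what the paper intends; the only problem is that you have the monotonicity of $f_{\Sigma-}$ backwards. By definition $f_{\Sigma-}(x)=\sum_{y\leq x}\Delta f_-(y)$ with every summand $\leq 0$, so enlarging $x$ adds more nonpositive terms and $f_{\Sigma-}$ is \emph{nonincreasing}, not increasing toward~$0$. With this correction your computation goes through cleanly: from $\delta\leq\delta'$ we get $x+\delta\leq x+\delta'$ hence $f_{\Sigma+}(x+\delta)\leq f_{\Sigma+}(x+\delta')$, and $x-\delta\geq x-\delta'$ hence $f_{\Sigma-}(x-\delta)\leq f_{\Sigma-}(x-\delta')$ (nonincreasing, larger argument gives smaller-or-equal value). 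Both brackets in $f^{+\delta'}(x)-f^{+\delta}(x)$ are therefore $\geq 0$, and the shrinking case is the mirror image.

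All of the detours in your write-up --- the attempt via Proposition~\ref{prop_delta_further} with index-set bookkeeping, the telescoping $f^{+\delta'}-f^{+\delta}=(f^{+\delta'}-f)-(f^{+\delta}-f)$, the worry about sign conventions --- stem from this single misreading of the direction of monotonicity of $f_{\Sigma-}$. Once that is fixed there is no obstacle at all, and the corollary is a two-line consequence of the definition, as the paper indicates.
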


\begin{definition}
	For any two nice functions $f,g: \mathbb{R}^n \rightarrow \mathbb{Z}$ and $\delta \geq 0$, we say $f, g$ are within $\delta$-extension, denoted as $f_{\leftarrow \delta \rightarrow} g$,  if $f \leq g^{+\delta}$ and $g \leq f^{+\delta}$. Similarly, we say $f, g$ are within $\delta$-shrinking, denoted as $f_{\rightarrow \delta \leftarrow} g$, if $f \geq g^{-\delta}$ and $g \geq f^{-\delta}$.
\end{definition}

Let $d_+, d_-,d_0$ be defined as follows on the space of all nice real-valued functions on $\mathbb{R}^n$:
\begin{equation*}
d_-(f,g) = \inf_{\delta} \left \{ \delta \mid f_{\rightarrow \delta \leftarrow} g \right\}, d_+(f,g) = \inf_{\delta} \left \{ \delta \mid f_{\leftarrow \delta \rightarrow} g \right\}, d_0(f,g) = \min(d_-, d_+)
\end{equation*}

One can verify that $d_0$ is indeed a distance function. Also, note that when $f,g\geq 0$ (for example, $f,g$ are dimension functions as defined below), we have $d_- \leq d_+$, hence $d_0=d_-$. It seems that the definition of $d_-$ has a similar connotation as the erosion distance defined by Patel~\cite{patel2016generalized} in 1-parameter case. 

\subsection{Dimension distance} \label{sec:stab_of_dimension_func}

\begin{figure}[ht!]
	\centering
	\includegraphics[width=0.5\textwidth]{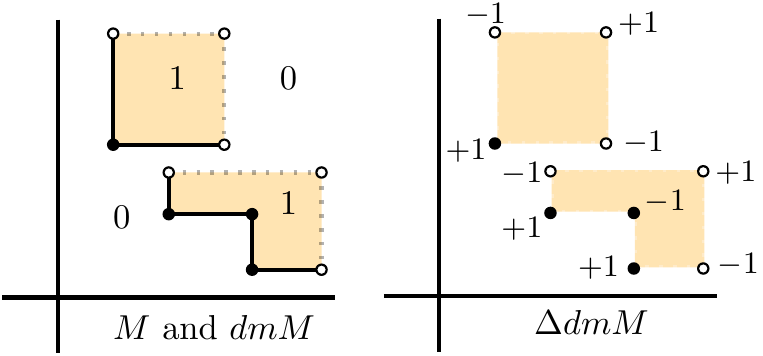}
	\caption{Dimension function (left), its differential is non-zero only at vertices (right)}
	\label{dm-fig}
\end{figure}


Given a persistence module $M$, let the {\em dimension function} $\dm M:\Real^n \rightarrow Z$ be defined as $\dm M(x) = dim(M_x)$. The distance $d_0(dmM,dmN)$ for two modules $M$ and $N$ is called the dimension distance. Our main result in theorem \ref{theorem_stability} is that this distance is stable with respect to the interleaving distance and thus provides a lower bound for it.


\begin{definition}
	A persistence module $M$ is nice if there exists a value $\epsilon_0 \in \mathbb{R}^+$ so that for every $\epsilon < \epsilon_0$, each linear map $\rho^M_{x \rightarrow x+\vec{\epsilon}} : M_{x} \rightarrow M_{x+\vec{\epsilon}}$ is either injective or surjective (or both).
\end{definition}

For example, a persistence module generated by a simplicial filtration defined on a grid with at most one additional simplex being introduced between two adjacent grid points satisfies this nice condition above. 

\begin{sloppypar}
	
	\begin{theorem} \label{theorem_stability}
		For nice persistence modules $M$ and $N$, $d_0(\dm M, \dm N) \leq d_I(M, N)$.
		
	\end{theorem}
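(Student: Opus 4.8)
The plan is to show that if $M$ and $N$ are $\delta$-interleaved then $d_0(\dm M, \dm N) \leq \delta$, which amounts to proving that $\dm M$ and $\dm N$ are within $\delta$-shrinking (recall that for nonnegative functions $d_0 = d_-$), i.e. $\dm M \geq (\dm N)^{-\delta}$ and symmetrically $\dm N \geq (\dm M)^{-\delta}$. Fix $\delta' > \delta$ and a $\delta'$-interleaving $(\phi, \psi)$. The first step is to translate the \emph{niceness} of the modules into a statement about the differential of the dimension function: I would like to argue that $\Delta(\dm M)_+$ ``counts births'' and $\Delta(\dm M)_-$ ``counts deaths'' of $M$ in a precise way. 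Concretely, because every short-range map $\rho^M_{x\to x+\vec\epsilon}$ is injective or surjective for $\epsilon$ small, the change of $\dim M_x$ across any small corner-step is sign-definite in a controlled manner, and by Proposition~\ref{prop:nice-func} applied to $\dm M$ one can write $\dm M(x) = (\dm M)_{\Sigma+}(x) + (\dm M)_{\Sigma-}(x)$ where the first term is a monotone ``generators seen so far'' function and the second is the (negative) ``relations/deaths seen so far'' function.

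The core step is a rank/dimension inequality coming from triangular commutativity. For any $x$, the composite $\rho^M_{x\to x+2\vec{\delta'}} = \psi_{x+\vec{\delta'}}\circ\phi_x$ factors through $N_{x+\vec{\delta'}}$, so $\operatorname{rank}\rho^M_{x\to x+2\vec{\delta'}} \leq \dim N_{x+\vec{\delta'}}$; symmetrically $\operatorname{rank}\rho^N_{x\to x+2\vec{\delta'}} \leq \dim M_{x+\vec{\delta'}}$. Combined with square commutativity (which makes $\phi,\psi$ honest morphisms, so images and kernels behave functorially), the niceness hypothesis lets me upgrade these rank bounds into the pointwise function inequality I want: the number of generators of $M$ born by time $x-\vec{\delta'}$ is at most $\dim N_x$ (push generators forward by $\phi$ and use that niceness prevents premature death within a diagonal step), while the number of relations of $N$ imposed by time $x+\vec{\delta'}$ is at least the number imposed on $M$ by time $x-\vec{\delta'}$ (pull relations back). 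Assembling, $\dm N(x) \geq (\dm M)_{\Sigma+}(x-\vec{\delta'}) + (\dm M)_{\Sigma-}(x+\vec{\delta'})$, and by Proposition~\ref{prop_delta_further} the right-hand side is exactly $(\dm M)^{-\delta'}(x)$. Running the symmetric argument gives $\dm M \geq (\dm N)^{-\delta'}$, so $\dm M \,{}_{\to \delta' \leftarrow}\, \dm N$; letting $\delta' \downarrow \delta$ and then taking the infimum over all interleavings yields $d_-(\dm M, \dm N) \leq d_I(M,N)$, hence $d_0 \leq d_I$.

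The main obstacle I anticipate is the second step: making rigorous the identification of $\Delta(\dm M)_\pm$ with ``births'' and ``deaths'' in $\Rn$, $n>1$, and in particular verifying that the interleaving maps transport generators and relations with the right timing. In one parameter this is the classical observation that $\dim\rho_{x\to y}$ controls everything, but in higher dimensions the inclusion–exclusion alternating sum defining $\Delta$ must be matched against the multigraded Betti numbers of the (finitely presented) module, and one must be careful that ``born by $x$'' is measured by a colimit/image of incoming maps rather than a single map. Niceness is what saves us — it guarantees each elementary corner-step is injective or surjective so no higher $\mathrm{Tor}$ obstructions appear at the scale $\epsilon$ — but spelling out the bookkeeping so that the pushforward under $\phi$ of ``the generators of $M$ present at $x-\vec{\delta'}$'' lands inside ``the subspace of $N_x$ those generators span'' without losing dimension is the delicate part. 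I would isolate this as a lemma: \emph{for a nice module $M$, $(\dm M)_{\Sigma+}(x) = \dim\big(\operatorname{colim}_{y\le x,\, y\ne x \text{ locally}} \text{``generator subspace''}\big)$}, and prove the interleaving inequalities from it; everything else is routine manipulation of the definitions of $f^{-\delta}$, $d_-$, and $d_0$.
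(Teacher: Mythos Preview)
Your overall framework is right: you correctly identify the target inequality $(\dm M)^{-\delta'} \le \dm N$, the rank bound $\operatorname{rank}\rho^M_{x\to x+2\vec{\delta'}} \le \dim N_{x+\vec{\delta'}}$ coming from triangular commutativity, and the role of niceness. But the specific mechanism you propose in the ``core step'' does not work. You attempt to bound the two pieces of the decomposition $(\dm M)^{-\delta'}(x) = (\dm M)_{\Sigma+}(x-\vec{\delta'}) + (\dm M)_{\Sigma-}(x+\vec{\delta'})$ separately, asserting in particular that ``the number of generators of $M$ born by time $x-\vec{\delta'}$ is at most $\dim N_x$'', i.e.\ $(\dm M)_{\Sigma+}(x-\vec{\delta'}) \le \dim N_x$. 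This is false already in one parameter: take $M$ to be a direct sum of many short bars all supported strictly below $x-\vec{\delta'}$, and $N=0$. Then $(\dm M)_{\Sigma+}(x-\vec{\delta'})$ is large while $\dim N_x = 0$; the desired inequality for $(\dm M)^{-\delta'}(x)$ still holds (both sides vanish), but only because the $\Sigma-$ term cancels the $\Sigma+$ term exactly. Niceness does \emph{not} ``prevent premature death within a diagonal step'': it says each short-range map is injective \emph{or} surjective, and a surjective non-injective map kills something. So no separate bound on $(\dm M)_{\Sigma+}$ is available, and the colimit-of-generators lemma you propose will not rescue this.

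The paper avoids the trap by using the \emph{other} expression from Proposition~\ref{prop_delta_further}, namely
\[
(\dm M)^{-\delta}(x) \;=\; \dm M(x-\vec\delta) \;+\; \sum_{\substack{y\le x+\vec\delta\\ y\not\le x-\vec\delta}} \Delta(\dm M)_-(y),
\]
and bounding the two pieces \emph{together} rather than separately. If $\dm M(x-\vec\delta) \le \dm N(x)$ there is nothing to do, since the sum is nonpositive. Otherwise one factors $\rho^M_{x-\vec\delta \to x+\vec\delta}$ as a chain $\rho_k\circ\cdots\circ\rho_0$ of maps each of which is injective or surjective (this is exactly where niceness enters), and tracks $\dim(\Img_i)$ along the chain. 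The total drop $\dim(\Img_k) - \dim M_{x-\vec\delta} \le \dim N_x - \dim M_{x-\vec\delta}$ is accounted for entirely by the surjective-noninjective steps, and each such step witnesses negative values of $\Delta(\dm M)$ at points lying in the strip $\{y : y \le x+\vec\delta,\ y\not\le x-\vec\delta\}$. This yields directly $\sum_{\text{strip}} \Delta(\dm M)_- \le \dim N_x - \dim M_{x-\vec\delta}$, which is precisely the inequality needed. No attempt is made to push individual ``generators'' through $\phi$; only the rank of the single composite map is used, and the interleaving morphism $\phi$ never appears beyond furnishing that rank bound.
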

	
\end{sloppypar}


\begin{proof}
	Let $d_I(M, N) = \delta$. There exists $\delta$-interleaving, $\phi=\{\phi_x\}, \psi=\{\psi_x\}$ which satisfy both triangular and square commutativity.
	We claim $(\dm M)^{-\delta} \leq	\dm N$ and $(\dm N)^{-\delta} \leq \dm M$.
	
	
	
	
	
	\begin{sloppypar}
		Let $x\in\Real^n$ be any point. 
		By Proposition \ref{prop_delta_further}, we know that
		$(\dm M)^{-\delta}(x) = \dm M(x-\delta) + \sum_{y\leq x+\delta, y\not\leq x-\delta} (\Delta\dm N_-)(y)$.
		If $\dm M(x-\delta) \leq \dm N(x)$, then we get $(\dm M)^{-\delta}(x)\leq \dm M(x-\delta) \leq \dm N(x)$, because $\sum_{y\leq x+\delta, y\not\leq x-\delta} (\Delta\dm N_-)(y) \leq 0$.
	\end{sloppypar}
	
	Now assume $\dm M(x-\delta) > \dm N(x)$. From triangular commutativity, we have 
	$rank(\psi_x \circ \phi_{x-\vec{\delta}}) = rank(\rho^{M}_{x-\vec{\delta}\rightarrow x+\vec{\delta}})$, which gives $\dim(im(\rho^{M}_{x-\vec{\delta}\rightarrow x+\vec{\delta}})) \leq \dim(im(\phi_{x-\vec{\delta}})) \leq \dm N(x)$.
	
	There exists a collection of linear maps $\set{\rho_i:M_{x_i} \rightarrow M_{x_{i+1}}}_{i=0}^k$ such that $\rho^{M}_{x-\vec{\delta}\rightarrow x+\vec{\delta}}=\rho_k\circ\rho_{k-1}\circ\ldots\circ\rho_1\circ\rho_0$ and each $\rho_i$ is either injective or surjective. 
	Let $\Img_i=im(\rho_i\circ\ldots\circ\rho_0)$. 
	Note that $\Img_k=im(\rho^{M}_{x-\vec{\delta}\rightarrow x+\vec{\delta}})$. Let $\epsilon_i=\dim(\Img_i)-\dim(\Img_{i-1})$. 
	Then note that $\epsilon_i=0$ if $\rho_i$ is injective and $\dim(\Img_k)-\dim(M_{x_0})=\sum_{i=1}^k \epsilon_i$. 
	Since $\dim(\Img_k) -\dim(M_{x_0}) < 0$, there exists a collection of $\rho_{i_j}$'s such that $\epsilon_{i_j} < 0$. 
	This means these $\rho_{i_j}$'s are non-isomorphic surjective linear maps with $\dim(M_{x_{i_j}})-\dim(M_{x_{i_j-1}}) < 0$. 
	By definition of $\Delta \dm$, this means that, for each pair $(x_{i_j-1}, x_{i_j})$, 
	there exists a collection $y_1, y_2, \ldots$ such that $y_l \leq x_{i_j}, y_l\not\leq x_{i_j-1}$ and $\sum_l(\Delta\dm M)_{-}(y_l)\leq \epsilon_{i_j}$. 
	All these $y$'s also satisfy that ${y\leq x+\vec{\delta}, y\not\leq x-\vec{\delta}}$. So,
	$$\sum_{\substack {y\leq x+\delta \\ y\not\leq x-\delta}} (\Delta\dm M_-)(y)\leq \sum_j\epsilon_j=\dim(\Img_k)-\dim(M_{x_0})\leq \dim(N_x)-\dim(M_{x-\vec{\delta}}),$$
	which gives $(\dm M)^{-\delta}(x) \leq \dm N(x) $. Similarly, we can show $(\dm N)^{-\delta}(x) \leq \dm M(x)$.
\end{proof}

\subsection{Computation}
For computational purpose, assume that two input persistence modules $M$ and $N$ are finite in that they are functors on the subcategory $\{1,\ldots, k\}^n \subset \Real^n$ and the dimension functions $f:=\dm M$, $g:=\dm N$ have been given as input on an $n$-dimensional $k$-ary grid. 

First, for the dimension functions $f,g$, we compute $\Delta f, \Delta g, \Delta f_\pm, \Delta g_\pm, f_\pm, g_\pm$ in $O(k^2)$ time. 
By Proposition \ref{prop_delta_further}, for any $\delta \in \mathbb{Z}^+$, we can also compute $f^{\pm \delta}, \ g^{\pm \delta}$ in $O(k^2)$ time. Then we can apply the binary search to find the minimal value $\delta$ within a bounded region such that $f,g$ are within $\delta$-extension or $\delta$-shrinking. This takes $O(\log k)$ time. So the entire computation takes $O(k^2 \log k)$ time.

\section{Conclusions}
In this paper, we presented an efficient algorithm to compute the bottleneck distance of two $n$-parameter persistence modules given by indecomposables that may have non-constant complexity. No such algorithm for such case is known. Making the algorithm more efficient will be one of our future goals. Extending the algorithm or its modification to larger classes of modules such as the $n$-parameter modules or exact pfd bi-modules considered in~\cite{cochoy2016decomposition} will be interesting. Here, we assume that indecomposable interval modules have been given as input. Given an $n$-parameter filtration, computing such indecomposables from the resulting persistence module is an important and difficult task. In a recent work, we made a significant progress for this problem, see~\cite{DeyCheng19}.

The assumption of nice modules for dimension distance $d_0$ is needed so that the dimension function, which is a weaker invariant compared to the rank invariants or barcodes in one dimensional case, provides meaningful information without ambiguity. There are cases where the dimension distance can be larger than interleaving distance if the assumption of nice modules is dropped. Of course, one can adjust the definition of dimension distance to incorporate more information so that it remains bounded from above by the interleaving distance.

\section*{Acknowledgments} This research is supported by NSF grants CCF-1526513, 1740761, and DMS-1547357. 
\bibliographystyle{plainurl}
\bibliography{ref}



\section*{Appendix}
\appendix

\section{Missing details in section \ref{sec:esti_dI}}
\label{app:miss1}
{\bf Triangular and square commutative diagrams}.
\[
\begin{tikzcd}
M_x \arrow[rr, "\rho^M_{x \rightarrow x+2\vec{\delta}}"] \arrow[dr, "\phi_x"'] & & M_{x+2\vec{\delta}} & N_x \arrow[rr, "\rho^N_{x \rightarrow x+2\vec{\delta}}"] \arrow[dr, "\psi_x"'] & & N_{x+2\vec{\delta}}\\
& N_{x+\vec\delta} \arrow[ur, "\psi_{x+\vec\delta}"'] & & & M_{x+\delta} \arrow[ur, "\phi_{x+\vec\delta}"']
\end{tikzcd}
\]


\[
\begin{tikzcd}
M_x \arrow[r, "\rho^M_{x \rightarrow y}"] \arrow[d, "\phi_x"']   &   M_{y} \arrow[d, "\phi_y"]  &   M_{x+\vec\delta} \arrow[r, "\rho^M_{x+\vec\delta \rightarrow y+\vec\delta}"]   &   M_{y+\vec\delta} \\
N_{x+\vec\delta} \arrow[r, "\rho^N_{x+\vec\delta \rightarrow y+\vec\delta}"']                        &   N_{y+\vec\delta}                     &   N_{x} \arrow[r, "\rho^N_{x \rightarrow y}"']  \arrow[u, "\psi_x"']                        &   N_{y} \arrow[u, "\psi_y"]
\end{tikzcd}
\]

\noindent
{\bf Proposition~\ref{prop:valid_0} and its proof}.

Let $\{Q_i\}$ be a set of intersection components of $M$ and $N$ with intervals $\{I_{Q_i}\}$. Let $\{\phi_x\}: M\rightarrow N$ be the family of linear maps defined as $\phi_x=\Id$ for all $x \in I_{Q_i}$ and $\phi_x=0$ otherwise. Then $\phi$ is a morphism if and only if every $Q_i$ is $(M,N)$\textit{-valid}.


\begin{proof}
	$\implies$ direction:
	Let $x\in I_{Q_i}$ and $y,z \in \RB^n$ be such that $y \leq x \leq z$. Then,
	\begin{eqnarray*}
		y\in I_M & \implies& \rho^M_{y\rightarrow x} = \Id\\
		&\implies& \phi_x\circ \rho^M_{y\rightarrow x}=\Id=\rho^N_{y\rightarrow x}\circ\phi_y \mbox{ because $\phi$ is a morphism}\\
		&\implies& \phi_y=\Id\\
		&\implies& N_y=\field{k}\\
		&\implies& y\in I_N.
	\end{eqnarray*}
	
	Similarly, we have $z\in I_N\implies z\in I_M$. So, we get $Q_i$ is ($M, N$)-valid.\\
	
	\noindent
	$\Longleftarrow$ direction: We want to show that the square commutativity $\phi_y\circ\rho^M_{x\rightarrow y} = \rho^N_{x\rightarrow y}\circ\phi_x$ holds for any $x\leq y \in \RB^n$ as depicted in the diagram below: 
	
	\[
	\begin{tikzcd}
	M_x \arrow[r, "\rho^M_{x \rightarrow y}"] \arrow[d, "\phi_x"']   &   M_{y} \arrow[d, "\phi_y"]  \\
	N_x \arrow[r, "\rho^N_{x \rightarrow y}"']                        &   N_{y}                     
	\end{tikzcd}
	\]
	
	First, assume that $M$ and $N$ have a single intersection component $Q$ with $I:=I_{Q}$.
	There are several cases.
	
	\textbf{Case 1:} $x, y \in I$: By assumption, every linear map in the square commutative diagram is the identity map. So, it commutes with $\rho$ as required. 
	
	\textbf{Case 2:} $x, y \notin I$: By assumption we have $\phi_x=0, \phi_y=0$. So, it commutes with $\rho$ trivially.  
	
	\textbf{Case 3:} $x\in I$: If $y \in I_N$, then by the assumption that $Q$ is $(M,N)$\textit{-valid}, we have $y\in I_M$. It reduces to case 1. If $y\in I_M\setminus I_N$, we have $\phi_x =\mathbb{1},\phi_y=0, \rho^M_{x\rightarrow y}=\mathbb{1}, \rho^N_{x\rightarrow y}=0$, which imply $\phi_y\circ\rho^M_{x\rightarrow y} =0= \rho^N_{x\rightarrow y}\circ\phi_x$ as required.
	
	\textbf{Case 4:} $y\in I$: If $x \in I_M$, then by assumption that $Q$ is $(M,N)$\textit{-valid}, we have $x\in I_N$. It reduces to case 1.  If $x\in I_N\setminus I_M$, we have $\phi_x =0,\phi_y=\mathbb{1}, \rho^M_{x\rightarrow y}=0, \rho^N_{x\rightarrow y}=\mathbb{1}$, which imply $\phi_y\circ\rho^M_{x\rightarrow y} =0= \rho^N_{x\rightarrow y}\circ\phi_x$ as required.
	
	Now for the case when $M$ and $N$ intersect in a set $\set{Q_i}$ that has more than one element, let $\phi_i$ be the morphism constructed for $Q_i$ only. Then we let $\phi=\set{\phi_x}$ where $\phi_x=\sum_i (\phi_i)_x$. Since each $(\phi_i)_x$ is a scalar function, either $\Id$ or 0 in $\field{k}=\mathbb{Z}/2$, the sum of such morphisms is still a morphism. We can also see that $\phi_x=\mathbb{1}$ for any $x$ in any $I_{Q_i}$ in the set $\set{Q_i}$ and $\phi_x=0$ if $x$ is not in any $I_{Q_i}$. Hence, $\phi$ is a morphism as required. 
	
\end{proof}


\noindent
{\bf Proposition~\ref{closure_interleaving} and its proof.}

$d_I(M, N)=d_I(\overline{M}, \overline{N})$. 

\begin{proof}
	With the triangular inequality of the interleaving distance, the proposition follows straightforwardly from the claim that $d_I(M, \overline{M}) = 0$ which we prove below.
	
	By definition of $\overline{M}$, we have $I_{\overline{M}}=\overline{I_M}$. 
	First, note that each pair of one dimensional slices $M|_{\Delta_x}$ and $\overline{M}|_{\Delta_x}$ are $\delta$-interleaved for any $\delta>0$. That means $\delta^*=0$.
	Let $\delta > 0$ be a small enough number and $I=I_M \cap I_{\overline{M}\rightarrow \delta}$, $J=I_{M\rightarrow\delta} \cap I_{\overline{M}}$.

	We claim that $\forall x\in I, \forall y<x, y\in I_M \implies y\in I_{\overline{M}\rightarrow \delta}$. 
	This is because $\exists w$ such that $y-\vec{\delta} < w < y$ and $ w\in I_{\overline{M}\rightarrow \delta}$. By the property of interval, 
	$$w<y<x \mbox{ and } w,x\in I_{\overline{M}\rightarrow \delta} \implies y\in I_{\overline{M}\rightarrow \delta}.$$ 
	Similarly, we have $\forall x\in I, \forall z>x, z\in I_{\overline{M}\rightarrow \delta} \implies z\in I_M$. Now we construct $\phi = \set{\phi_x:M_x\rightarrow \overline{M}_{x+\delta}}$ by setting $\forall x\in I,\phi_x\equiv \Id$ and $\forall x\notin I, \phi_x \equiv 0$. We define $\psi=\set{\psi_x:\overline{M}\rightarrow M_{x+\delta}}$ in a similar way. Applying similar argument as in the proof of Proposition \ref{prop:valid_0}, one can obtain that these two maps satisfy square commutativity, and hence are morphisms.
	
	Now we claim that $\phi$ and $\psi$ provide a $\delta$-interleaving for each pair of 1-parameter slices $M|_{\Delta_x}$ and $\overline{M}|_{\Delta_x}$, which means they also follow the triangular commutativity. 
	Observe that $\forall x \in I_M, \forall \epsilon > 0, x+2\vec{\epsilon} \in I_M \implies x+\vec{\epsilon}\in I_{\overline M}$. Symmetrically, we have $\forall x \in I_{\overline M}, \forall \epsilon > 0, x+2\vec{\epsilon} \in I_{\overline M} \implies x+\vec{\epsilon}\in I_M$. 
	Now let $\epsilon=\delta$ and consider any nonzero linear map $\rho^M_{x\rightarrow x+2\vec{\delta}}=\Id$ in $M$. Since $x, x+2\vec{\delta}\in I_M\implies x+\vec{\delta}\in{I_{\overline M}}$, we have $x\in I$ and $x+\vec{\delta}\in J$, which imply $\phi_x = \psi_{x+\delta}=\Id$ by our construction of $\phi$ and $\psi$. So, $\forall x$ so that $\rho^M_{x\rightarrow x+2\vec{\delta}}=\Id$, we have $\rho^M_{x\rightarrow x+2\vec{\delta}}=\Id=\psi_{x+\delta}\circ\phi_x$. For those $x$ so that $\rho^M_{x\rightarrow x+2\vec{\delta}}=0$, observe that the commutativity holds trivially. Therefore, $\forall x$, $\rho^M_{x\rightarrow x+2\vec{\delta}}=\psi_{x+\delta}\circ\phi_x$. Symmetrically, we also have the commutativity $\rho^{\overline M}_{x\rightarrow x+2\vec{\delta}}=\phi_{x+\delta}\circ\psi_x$.
	
	Therefore, the morphisms $\phi$ and $\psi$ provide $\delta$-interleaving on the interval modules $M, \overline{M}$. Since this is true for any $\delta > 0$, we get $d_I(M, \overline{M})=0$.
\end{proof}

\noindent
{\bf Proposition~\ref{prop_valid_intersect} and its proof.}

For an intersection component $Q$ of $M$ and $ N$ with interval $I:=I_Q$, the following conditions are equivalent:
\begin{enumerate}[label={(\arabic*)}]
	\item $Q$ is $(M, N)$\textit{-valid}.
	\item $L(I) \subseteq L(I_M)$ and $U(I) \subseteq U(I_N)$.
	\item $VL(I) \subseteq L(I_M)$ and $VU(I) \subseteq U(I_N)$.    
\end{enumerate}

\begin{proof}
	
	\begin{description}
		\item[$(1)\iff (2)$:] Assume (1) is true. Let $x\in L(I)$. 
		For any $y=(y_1, y_2)$ with $y_1< x_1$ and $y_2<x_2$, we have $y\notin I_M$ or $y\notin I_N$ because no such point $y$ can belong to the intersection $I$ as $x$ is on the boundary $L(I)$. Also, by definition of $(M,N)$-validity, $y\notin I_N \implies y\notin I_M$. These two conditions on $y$ imply that $y\notin I_M$.
		Therefore, $x\in L(I_M)$, that is $L(I)\subseteq L(I_M)$.
		
		Similarly, we get $U(I) \subseteq U(I_N)$ proving (1) $\implies$ (2).
		
		Assume (2). 
		Let $x\in I$. For any $y \leq x$, we want to show that  $y\in I_M \implies y\in I_N$, which is equivalent to the condition
		$y\notin I \implies y\notin I_M$
		since $I=I_N\cap I_M$. Observe that $y\notin I\implies y < y'=\pi_{L(I)}(y)$. By assumption that $L(I)\subseteq L(I_M)$, we have $y'\in L(I_M)$, which implies $y< \pi_{L(I_M)}=y'$. So we get $y\notin I_M$. In a similar way, we can get $\forall z\geq x$, $z\notin I \implies z\notin I_N$, or equivalently, $z\in I_N \implies z\in I_M$. Therefore, by definition of $(M,N)$-validity, we obtain (1).


		\item[$(2)\iff (3)$:] $L(I)$ and $U(I)$ are uniquely determined by their vertices.
	\end{description}
\end{proof}

\noindent
{\bf Proposition~\ref{prop:triv} and its proof.}

An intersection component $Q$ is $\delta_{(M,N)}$\textit{-trivializable} if and only if each vertex in $V(I_Q)$ is $\delta_{(M,N)}$\textit{-trivializable}.

\begin{proof}
	Observe that an intersection component $Q$ is $\delta_{(M,N)}$-trivializable if and only if every point in $B(I_Q)$ is $\delta_{(M,N)}$-trivializable. 
	The $\implies$ direction is trivial. For the $\Longleftarrow$ direction,
	observe that, by the definition of $d_{triv}^{(M,N)}$ and Proposition~\ref{prop:distance_bound_2}, we have $\forall x\in B(I_Q)$, $\exists y\in V(I_Q)$, $d_{triv}^{(M,N)}(x)\leq d_{triv}^{(M,N)}(y)$.
	
\end{proof}

Next proposition is used to prove Proposition~\ref{prop:delstar-search}.

\begin{proposition} \label{prop:1d_interleaving_distance}
	Let $M$ and $N$ be two one-parameter interval modules with intervals $I_M=\overline{st}$ and $I_N=\overline{uv}$ respectively. We have
	$\delta \geq d_I(M, N)$ 
	if and only if
	\begin{eqnarray*}
		|s-t|_\infty > 2\delta& \implies& s+\vec{\delta}\in I_N\mbox{ and } t-\vec{\delta}\in I_N, \mbox { and }\\
		|u-v|_\infty > 2\delta&\implies& u+\vec{\delta}\in I_M\mbox{ and } v-\vec{\delta}\in I_M.
	\end{eqnarray*}
\end{proposition}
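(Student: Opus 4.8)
The plan is to argue directly from the definition of $\delta$-interleaving after passing to a one–dimensional picture. Since $M$ and $N$ are $1$-parameter, I would identify their common line with $\mathbb{R}$ in the obvious way — so that $d_\infty$ on it is the usual absolute value and the shift by $\vec{\delta}$ becomes translation by $\delta$ — and write $I_M=[s,t]$, $I_N=[u,v]$ as closed real intervals, with $\ell_M:=|s-t|_\infty$ and $\ell_N:=|u-v|_\infty$. I will use three elementary facts: (a) $\rho^M_{x\rightarrow x+2\vec{\delta'}}\neq 0$ for some $x$ iff $\ell_M\geq 2\delta'$, and then it is nonzero precisely for $x\in[s,t-2\vec{\delta'}]$; (b) the set of $\delta'$ for which $M$ and $N$ are $\delta'$-interleaved is upward closed (compose the interleaving maps with structure maps), so $\delta\geq d_I(M,N)$ is equivalent to "$M,N$ are $\delta'$-interleaved for every $\delta'>\delta$"; and (c) by Proposition~\ref{prop_conn} together with Proposition~\ref{prop:valid_0}, between two $1$-parameter interval modules the only morphisms are the zero morphism and, when the single intersection component of their supports is valid, the morphism equal to $\mathbb{1}$ on that intersection and $0$ elsewhere, and for interval modules on $[a,b]$ and $[c,d]$ this validity amounts to $c\leq a$ and $d\leq b$.

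For the forward direction I would assume $M,N$ are $\delta'$-interleaved for all $\delta'>\delta$ and suppose $\ell_M>2\delta$ (the case $\ell_N>2\delta$ being symmetric under $M\leftrightarrow N$, $\phi\leftrightarrow\psi$). Fixing $\delta'\in(\delta,\ell_M/2)$ and an interleaving $(\phi,\psi)$, triangular commutativity at $x=s$ gives $\psi_{s+\vec{\delta'}}\circ\phi_s=\rho^M_{s\rightarrow s+2\vec{\delta'}}=\mathbb{1}$, so $\phi_s\neq 0$ and hence $s+\vec{\delta'}\in I_N$; triangular commutativity at $x=t-2\vec{\delta'}$ gives $\psi_{t-\vec{\delta'}}\neq 0$, hence $t-\vec{\delta'}\in I_N$. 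Letting $\delta'\downarrow\delta$ and using that $I_N$ is closed yields $s+\vec{\delta}\in I_N$ and $t-\vec{\delta}\in I_N$, i.e.\ the first implication; the second follows symmetrically.

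For the converse I would assume both implications, fix $\delta'>\delta$, and construct a $\delta'$-interleaving. If $\ell_M<2\delta'$ and $\ell_N<2\delta'$, then by (a) the maps $\rho^M_{\cdot\rightarrow\cdot+2\vec{\delta'}}$ and $\rho^N_{\cdot\rightarrow\cdot+2\vec{\delta'}}$ vanish identically, and $\phi=\psi=0$ works. Otherwise $\ell_M\geq 2\delta'$ or $\ell_N\geq 2\delta'$; say $\ell_M\geq 2\delta'$ (so $\ell_M>2\delta$ and the first implication applies — the other case is symmetric). Using the first implication and $\delta'>\delta$ one checks $s+\vec{\delta'},\,t-\vec{\delta'}\in I_N$, hence $[s,t-2\vec{\delta'}]\subseteq I_M\cap I_{N_{\rightarrow\delta'}}$, which is therefore nonempty, and similarly $I_N\cap I_{M_{\rightarrow\delta'}}\neq\emptyset$; I set $\phi$ to be $\mathbb{1}$ on $I_M\cap I_{N_{\rightarrow\delta'}}$ and $0$ elsewhere, and $\psi$ to be $\mathbb{1}$ on $I_N\cap I_{M_{\rightarrow\delta'}}$ and $0$ elsewhere. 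By (c), the thing to verify is that these are morphisms, i.e.\ the validity inequalities $u-\delta'\leq s,\ v-\delta'\leq t$ for $\phi$ and $s-\delta'\leq u,\ t-\delta'\leq v$ for $\psi$; each follows from the hypotheses after a short computation, splitting on whether $\ell_N>2\delta$ (the second implication then bounds $u,v$ directly) or $\ell_N\leq 2\delta$ (then substitute $v=u+\ell_N\leq u+2\delta$ and use $\ell_M\geq 2\delta'$). Triangular commutativity is then automatic: if $\phi_x=\psi_{x+\vec{\delta'}}=\mathbb{1}$ then $x,x+2\vec{\delta'}\in I_M$, so $\rho^M_{x\rightarrow x+2\vec{\delta'}}=\mathbb{1}=\psi_{x+\vec{\delta'}}\circ\phi_x$; and if $\rho^M_{x\rightarrow x+2\vec{\delta'}}=\mathbb{1}$ then $x\in[s,t-2\vec{\delta'}]$, whence $x\in I_M\cap I_{N_{\rightarrow\delta'}}$ and $x+\vec{\delta'}\in I_N\cap I_{M_{\rightarrow\delta'}}$, so $\phi_x=\psi_{x+\vec{\delta'}}=\mathbb{1}$; in the remaining cases both sides are $0$. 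The analogous statement for $\rho^N$ holds symmetrically, and upward closedness then gives $\delta\geq d_I(M,N)$.

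The hard part will be this converse direction: one must keep track of which of the two implications is available in each sub-case determined by the sizes of $\ell_M$ and $\ell_N$ relative to $2\delta$ and $2\delta'$, and then verify that the (zero or identity-on-intersection) candidate maps are genuine morphisms satisfying the triangular identities — the required endpoint inequalities only fall out after combining both hypotheses with the strict inequality $\delta<\delta'$. A mild simplification is to invoke upward closedness and treat only a sequence $\delta'\downarrow\delta$ chosen to avoid the finitely many degenerate values $\delta'\in\{\ell_M/2,\ \ell_N/2\}$, although, as the case analysis above shows, this is not strictly necessary.
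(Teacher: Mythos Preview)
Your proposal is correct and follows essentially the same approach as the paper: both split the converse into the ``both short'' case (zero maps) and the ``at least one long'' case, and in the latter both define $\phi,\psi$ as $\mathbb{1}$ on the respective intersections and $0$ elsewhere, then verify triangular commutativity directly and square commutativity via the validity criterion of Proposition~\ref{prop:valid_0}. The only notable differences are presentational: you work with an arbitrary $\delta'>\delta$ and appeal to upward closedness (which sidesteps boundary cases like $\ell_M=2\delta$ more cleanly), whereas the paper constructs a $\delta$-interleaving directly; and you verify the validity inequalities $u-\delta'\leq s$, $v-\delta'\leq t$, etc.\ by a direct case split on whether $\ell_N>2\delta$, whereas the paper handles $v-\delta\leq t$ by a short contradiction argument that derives $\ell_N>2\delta$ and then invokes the second hypothesis.
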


\begin{proof}
	
	The $\implies$ direction is obvious by the definition of $\delta$-interleaving. For the $\Longleftarrow$ direction,
	we split the premise into two cases.
	
	Case(1): both $|s-t|_\infty \leq 2\delta$ and $|u-v|_\infty \leq 2\delta$ so that the premise holds vacuously. In this case $M, N$ are two bars with length less than or equal to $2\delta$ and one can observe that $d_I(M, N)\leq \delta$.

	Case(2): there is at least one of $|s-t|_\infty$ and $|u-v|_\infty$ 
	which is greater than $2\delta$.
	We want to show that $M$ and $N$ are $\delta$-interleaved by constructing the linear maps $\phi=\set{\phi_x:M_x\rightarrow N_{x+\vec{\delta}}}$ and $\psi=\set{\psi_x:N_x\rightarrow M_{x+\vec{\delta}}}$ explicitly that satisfy both the square commutativity and triangle commutativity.

	
	
	Let $\phi$ and $\psi$ be defined as follows: 
	\begin{equation*}
	\phi_x=   
	\begin{cases}
	\Id, & x\in I_M\cap I_{N\rightarrow\delta}\\
	0,   & otherwise
	\end{cases}
	\qquad
	\psi_x=   
	\begin{cases}
	\Id, & x\in I_N\cap I_{M\rightarrow\delta}\\
	0,   & otherwise\\
	\end{cases}
	\end{equation*}
	By assumption, one can easily verify that for each nonzero linear map $\rho^M_{x\rightarrow x+2\vec{\delta}}$, we have $\rho^M_{x\rightarrow x+2\vec{\delta}}=\Id=\psi_{x+\vec{\delta}}\circ\phi_x$. Similarly, we have $\rho^N_{x\rightarrow x+2\vec{\delta}}=\Id=\phi_{x+\vec{\delta}}\circ\psi_x$. So, $\phi$ and $\psi$ satisfy the triangular commutativity. Now we show that they also satisfy the square commutativity. By Proposition~\ref{prop:valid_0}, it is equivalent to showing that $I_M\cap I_{N\rightarrow\delta}$ is $(M, N_{\rightarrow\delta})$-valid and $I_N\cap I_{M\rightarrow\delta}$ is $(N, M_{\rightarrow\delta})$-valid. We show the first validity, that is, $I_M\cap I_{N\rightarrow\delta}$ is $(M, N_{\rightarrow\delta})$-valid. The second validity can be proved in a similar way.
	
	Observe that, for one dimensional interval modules, $I_M\cap I_{N\rightarrow\delta}$ being $(M, N_{\rightarrow\delta})$-valid is equivalent to saying that $u-\vec{\delta}\leq s$ and $v-\vec{\delta}\leq t$.  
	By assumption of case 2, we know that at least one of $|s-t|_\infty$ and $|u-v|_\infty$ is greater than $2\delta$. 
	Consider the case when $|s-t|_\infty > 2\delta$.
	The other case can be argued similarly.
	By assumption, we have $s+\vec{\delta}\in I_N$.
	This means $u\leq s+\vec{\delta}$, 
	or equivalently, $u-\vec{\delta}\leq s$. Then, the only thing remaining to be shown is that $v-\vec{\delta}\leq t$. Assume on the contrary that $v-\vec{\delta}>t$, which is equivalent to saying $v>t+\vec{\delta}$. Again, by assumption, $t-\vec{\delta}\in I_N$.  This means $u\leq t-\vec{\delta}$, which implies $|v-u|_\infty > |t+\vec{\delta} - (t-\vec{\delta})|_\infty= 2\vec{\delta}$.
	Now by assumption, we have $v-\vec{\delta}\in I_M$, which is contradictory to $v-\vec{\delta}>t$.
\end{proof}

Note that the above proof also works for interval modules with unbounded intervals. For the proposition below, recall that
\begin{eqnarray*}
	D(x)&=&\{\dl(x, L(I_M)), \dl(x, L(I_N)), \dl(x, U(I_M)), \dl(x, U(I_N))\}\\
	S&=& \{d \mid d\in D(x) \mbox{ or $2d\in D(x)$ for some vertex $x\in V(I_M)\cup V(I_N) \}$}.
\end{eqnarray*}


\begin{proposition} \label{prop:distance_bound_2}
	
	Let $M$ and $N$ be two interval modules. 
	Given any point $x\in B(I_M)$ and any $L\in\set{L(I_M),U(I_M),L(I_N),U(I_N)}$ 
	with $x'=\pi_L(x)$ existing, let $d_x=\dl(x, L)$, $\overline{st}$ and $\overline{uv}$ be the two edges containing $x$ and $x'$ respectively. Then there exist (not necessarily distinct) $y, z\in \{s,t,u,v\}$ and $d_x\in D(x), d_y\in D(y)$ such that $d_x\leq d \leq d_y$.

	
	
\end{proposition}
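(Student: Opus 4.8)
Read the conclusion as: there exist (not necessarily distinct) vertices $y,z\in\{s,t,u,v\}$ and values $d_y\in D(y)$, $d_z\in D(z)$ with $d_y\le d_x\le d_z$, the upper bound $d_x\le d_z$ being the substantive half. The plan is to parametrize the relevant diagonal lines by their slope-$1$ offset, to express $d_x$ as the value at one such offset of a piecewise-linear function measuring the distance between the two edges $\overline{st}$ and $\overline{uv}$ along the corresponding diagonal, a function whose single possible kink sits at height $0$, and then to read the two bounds off the endpoints of the pertinent offset interval, which are forced to be offsets of vertices among $s,t,u,v$.

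For a point $p$ write $\omega(p)=p_2-p_1$, so that the diagonal through $p$ is $\Delta^{\omega(p)}$ with $\Delta^\omega:=\{x_2-x_1=\omega\}$. Since $x$ and $x'=\pi_L(x)$ lie on a common diagonal, $d_x=\dl(x,L)=d_\infty(x,x')$ and $\omega(x)=\omega(x')$. Put $R=\{\omega:\Delta^\omega\text{ meets both segments }\overline{st}\text{ and }\overline{uv}\}$. Because a slope-$1$ line meets a given axis-parallel segment exactly when its offset lies in a closed interval whose two endpoints are the offsets of that segment's two vertices, $R$ is an interval, it contains $\omega(x)$, and both its endpoints lie in $\{\omega(s),\omega(t),\omega(u),\omega(v)\}$. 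For $\omega\in R$ put $P(\omega)=\Delta^\omega\cap\overline{st}$ and $Q(\omega)=\Delta^\omega\cap\overline{uv}$; both move affinely in $\omega$, and since they both lie on the slope-$1$ line, $g(\omega):=d_\infty(P(\omega),Q(\omega))=|P(\omega)_1-Q(\omega)_1|$ is piecewise linear in $\omega$ with at most one kink, at which $g=0$. Hence $g$ attains its maximum over $R$ at an endpoint of $R$, so $d_x=g(\omega(x))\le\max\{g(\omega_-),g(\omega_+)\}$ where $\omega_\pm$ are the endpoints of $R$. At such an endpoint, say $\omega_-=\omega(s)$ (so $P(\omega_-)=s$), the diagonal $\Delta_s$ also meets $\overline{uv}\subseteq L$, and because a slope-$1$ line crosses the antimonotone chain $L$ at most once, $\Delta_s\cap\overline{uv}=\pi_L(s)$; thus $g(\omega_-)=d_\infty(s,\pi_L(s))=\dl(s,L)\in D(s)$. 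Symmetrically, if $\omega_-\in\{\omega(u),\omega(v)\}$ then $g(\omega_-)$ lies in $D(u)$ or $D(v)$ via the chain in $\{L(I_M),U(I_M)\}$ containing $\overline{st}$. This produces a vertex $z$ and $d_z\in D(z)$ with $d_x\le d_z$. The lower bound is then immediate: $d_x\ge 0$, and since $s\in V(I_M)\subseteq L(I_M)\cup U(I_M)$ one has $0\in D(s)$, so we may take $y=s$.

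What I expect to cost the most care are the two geometric facts about crossings that this argument rests on. The first, that a slope-$1$ line meets a fixed axis-parallel segment precisely for offsets in a closed interval with the segment's vertices as endpoints, is routine once spelled out. The second, that a slope-$1$ line meets each of $L(I)$ and $U(I)$ in at most one point, is the real one: it relies on $L(I)$ and $U(I)$ being antimonotone staircases, which should follow from the definitions of $L(I),U(I)$ together with the interval hypothesis and the requirement that the boundary of $I$ be an $(n-1)$-manifold, and then on the elementary observation that a strictly increasing line cannot re-cross an antimonotone chain. A minor extra chore is to accommodate vertices at infinity for unbounded intervals, using the paper's convention for $\dl$ at points of $V(\RB^n)$. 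With the offset parametrization the whole argument is uniform, needing no split into parallel versus perpendicular edges; it is a somewhat more careful version of the ``a linear function attains its extremes at the vertices'' step used in the $n$-parameter extension (Proposition~\ref{prop:distance_bound}).
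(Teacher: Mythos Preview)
Your argument for the upper bound is essentially the paper's own: both order the diagonals by their intercept/offset, note that the distance between the two edges along the diagonal is affine (hence $|\cdot|$ of it is convex) in that parameter, and read the maximum off at an extreme offset, which is forced to be the offset of one of $s,t,u,v$. Your parametrization by $\omega=p_2-p_1$ just makes explicit what the paper does by ``order the five diagonals and take the second and fourth''. You also correctly isolate the one geometric fact the paper uses silently, namely that a slope-$1$ line meets $L$ (an antimonotone staircase) in a single point, so that $g(\omega_-)=\dl(y,L)\in D(y)$ rather than merely some diagonal distance to $\overline{uv}$; this is exactly the step behind the paper's unproved assertion ``$|l_1|_\infty\in D(y)$''.

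Where you diverge is in the lower bound. The paper obtains \emph{both} bounds from the same mechanism: the lengths $|l_1|_\infty\le |l_0|_\infty\le |l_2|_\infty$ at the two neighboring diagonals give a nontrivial $d_y\le d_x$ with $d_y\in D(y)$ for a vertex $y\in\{s,t,u,v\}$. You instead take the shortcut $d_y=0\in D(s)$ (valid because $s\in B(I_M)$). That proves the proposition as stated, but it discards information the paper actually uses: Corollary~\ref{cor:distance_bound} (parallel edges force $d_y=d_z=d_x$, hence $d_x\in S$) follows immediately from the paper's construction because when $\overline{st}\parallel\overline{uv}$ the function $g$ is constant, so the endpoint values coincide with $d_x$. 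With your choice $d_y=0$ the corollary does not follow; you would have to go back and take $d_y=\min\{g(\omega_-),g(\omega_+)\}$, which is exactly the paper's $|l_1|_\infty$. Since your framework already contains this (you observe $g$ is piecewise linear and constant in the parallel case), the fix is one line: use the minimum of $g$ at the endpoints of $R$ for the lower bound instead of $0$.
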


\begin{proof}
	If either $x$ or $x'$ is a vertex, then we just let $y=z=x$ or $x'$ respectively, which provides the conclusion. Now assume neither $x$ nor $x'$ is a vertex, $s\leq t, u\leq v$.
	
	If $x\in B(\RB^2)$, without loss of generality, let $x=(a, +\infty)$.
	
	
	
	If $d=d_\infty(x, x') < +\infty$, then $x'=(a', +\infty)$ for some $a'\in \Real$. If $a'=a$, then $d=0\in S$. If $a'\neq a$, then $x'=u$ or $v$, that is, $x'$ is a vertex, which has been considered before. If $d=d_\infty(x,x')=+\infty$, then $x'=(\pm\infty, +\infty)$. But, in that case, either $s$ or $t$ has the first coordinate different from $x'$, which means either $d_\infty(s, x')=+\infty=d$ or $d_\infty(t, x')=+\infty=d$.

	Now assume $x\in \Real^2$. 
	Let $l_0=\overline{xx'}$ be the line segment with ends $x,x'$. By construction, $l_0$ is contained in the line $\Delta_x$ passing through $x$ that has slope 1.
	For any line segment $l$ in $\Real^2$,
	let $|l|_{\infty}$ be the $d_\infty$ distance between the two end points of $l$. By definition, we know that $x'=\pi_L(x)= \Delta_x \cap L$. So $\dl(x,L)=d_\infty(x, x')=|l_0|$.
	
	Consider the five lines $\Delta_x, \Delta_s, \Delta_{t},\Delta_{u}, \Delta_{v}$ with slope 1. We can order these five lines by their intercepts on the axis of the first coordinate. Note that $\Delta_x$ is ordered third (in the middle) in this sequence. We pick the second and fourth ones in this sequence and observe that they necessarily intersect both edges $\overline{uv}$ and $\overline{st}$. Let $l_1, l_2$ be the line segments on these lines with end points on $\overline{uv}$ and $\overline{st}$. Without loss of generality, we assume $|l_1|_{\infty}\leq |l_2|_{\infty}$. Then we have $|l_1|_{\infty} \leq |l_0|_{\infty} \leq |l_2|_{\infty}$. (See Figure \ref{fig:distance_bound} for an example).
	
	\begin{figure}[ht!]
		\centering
		\includegraphics[scale=0.5]{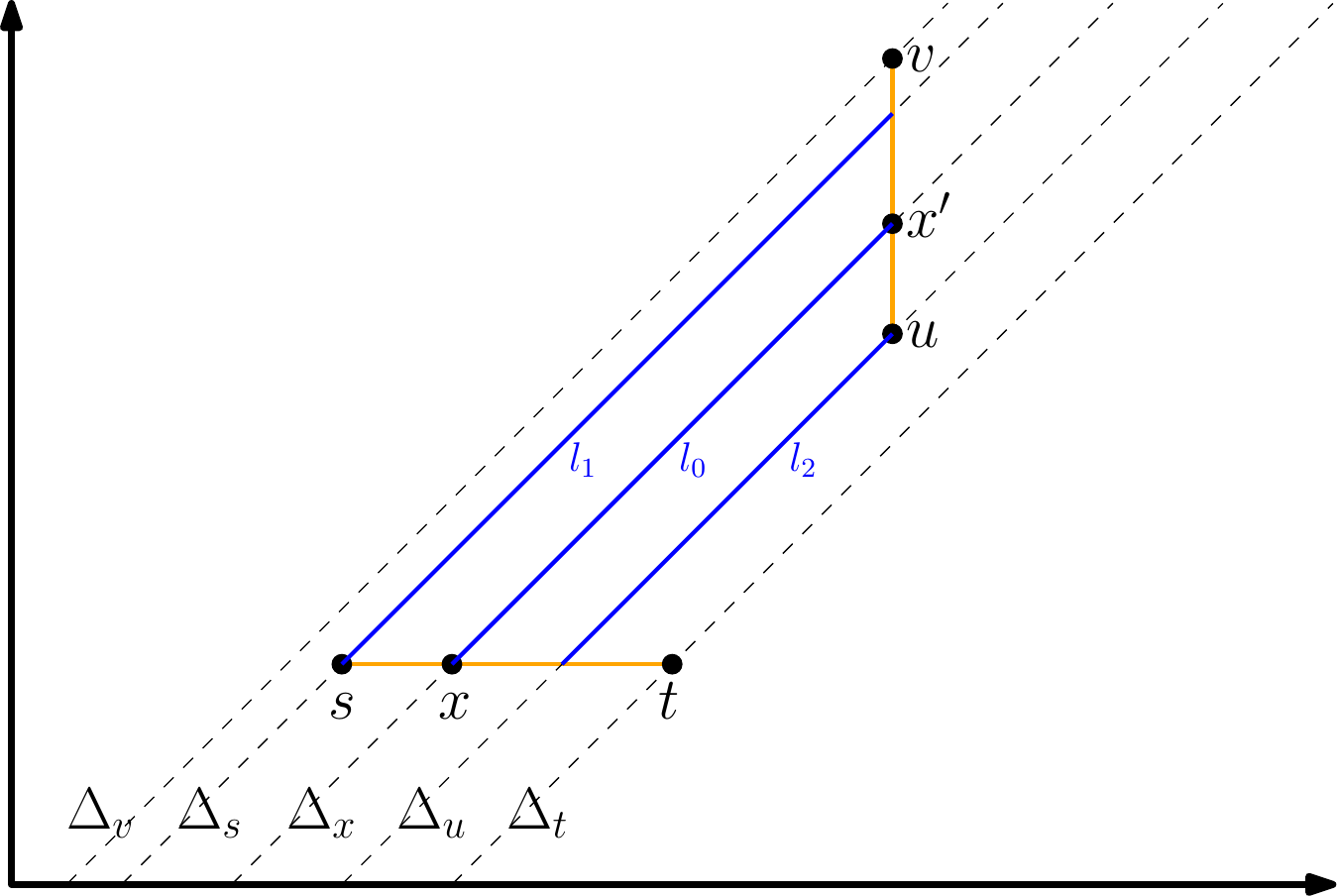}
		\caption{Five diagonal lines (black dotted lines), $\Delta_x, \Delta_s, \Delta_{t},\Delta_{u}, \Delta_{v}$, and three line segments (blue solid line segments), $l_1, l_0, l_2$.}
		\label{fig:distance_bound}
	\end{figure}

	Note that one of the end points of $l_1$ is in the set $\set{s,t,u,v}$, which is a subset of vertices in $V(I_M)\cup V(I_N)$. Let that vertex be $y$. Similarly one of the end points of $l_2$ is a vertex, which we take as $z$.
	We have $|l_1|_{\infty}\in D(y)$ and $|l_2|_{\infty} \in D(z)$ and $d_1=|l_1|_{\infty}\leq d=|l_0|_{\infty}\leq d_2=|l_2|_{\infty}$ for $y, z \in V(I_M)\cup V(I_N)$. This completes the first part of the claim.
	
	
\end{proof}

\begin{corollary}\label{cor:distance_bound}
	If $x$ and $x'$ are on two parallel edges (facets), then $d_y=d_z$. In that case, $d_x=d_y=d_z=d^*\in S$.  
\end{corollary}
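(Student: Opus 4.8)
The plan is to read off the statement from the construction already carried out in the proof of Proposition~\ref{prop:distance_bound_2}, specialised to the situation where the edge $\overline{st}\ni x$ and the edge $\overline{uv}\ni x'=\pi_L(x)$ are parallel. The governing geometric fact is that two parallel edges (facets) lie on flats of the form $\{w_i=c_1\}$ and $\{w_i=c_2\}$ for a \emph{common} standard direction $\vec{e}_i$, and a slope-$1$ diagonal segment that has one endpoint on each of these two flats necessarily has $d_\infty$-length equal to $|c_1-c_2|$ --- a value independent of which diagonal we picked. (When the two flats are ``parallel at infinity'' this common value is $+\infty$, and the argument below goes through with $+\infty$ in place of $|c_1-c_2|$.)

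Now recall that in the proof of Proposition~\ref{prop:distance_bound_2} the numbers $d_y,d_x,d_z$ are realised as the $d_\infty$-lengths of the three slope-$1$ segments $l_1,l_0,l_2$, each of which spans from the edge $\overline{st}$ (or its flat) to the edge $\overline{uv}$ (or its flat), with $|l_1|_\infty\le|l_0|_\infty\le|l_2|_\infty$; moreover $y$ is an endpoint of $l_1$ and $z$ an endpoint of $l_2$, and both lie in $V(I_M)\cup V(I_N)$. If $\overline{st}$ and $\overline{uv}$ are parallel, then by the fact of the previous paragraph all three of $|l_1|_\infty,|l_0|_\infty,|l_2|_\infty$ equal $|c_1-c_2|$, so the inequalities collapse to equalities: $d_y=d_x=d_z$.

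Finally, set $d^*:=d_y=d_x=d_z$. Because $y\in V(I_M)\cup V(I_N)$ and $d_y\in D(y)$, the definition of the candidate set $S$ gives $d^*=d_y\in S$, which is the claimed conclusion. The identical argument, with $F_x,F_{x'}$ in place of $\overline{st},\overline{uv}$ and with $\hat D,\hat S$ in place of $D,S$, proves the corresponding statement from Proposition~\ref{prop:distance_bound} in the $n$-parameter case: there the function $w\mapsto\dl(w,\hat{F_{x'}})=|c_2-w_i|$ is literally constant on the facet $F_x\subseteq\{w_i=c_1\}$, so its maximum and minimum over $V(F_x)$ coincide.

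The only delicate point I anticipate is the bookkeeping of the degenerate sub-cases that already appear in the proof of Proposition~\ref{prop:distance_bound_2} --- a vertex coinciding with $x$ or $x'$, one or both edges contained in $B(\RB^n)$, or the diagonal distance being $+\infty$ --- so that the phrase ``parallel edges (facets)'' is consistently interpreted as ``flats with the same standard normal direction'' and the ``diagonal length is constant'' claim stays literally true (value $+\infty$ included). Granting that reading, the corollary is a one-line consequence of Proposition~\ref{prop:distance_bound_2}.
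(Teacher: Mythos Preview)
Your argument is correct and is exactly the intended reading: the paper states the corollary without proof immediately after Proposition~\ref{prop:distance_bound_2}, and your observation that parallel edges lie on flats $\{w_i=c_1\}$, $\{w_i=c_2\}$ with common normal direction, forcing $|l_1|_\infty=|l_0|_\infty=|l_2|_\infty=|c_1-c_2|$, is precisely why the inequalities from that proposition collapse. The membership $d^*\in S$ then follows just as you say, from $d_y\in D(y)$ with $y\in V(I_M)\cup V(I_N)$.
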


\begin{proposition} \label{touching_pt_intersection}
	Let $M$ and $N$ be two interval modules and $d\geq 0$. If there exists an intersection point $x\in B(I_M)\cap B(I_{N\rightarrow d})$ with two parallel facets $f_1\in F(I_M)$ and $f_2\in F(I_{N\rightarrow d})$ both containing $x$, then $d\in S$.

\end{proposition}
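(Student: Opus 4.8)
The plan is to pull the shift back from $N_{\to d}$ to $N$ and then reduce to the parallel-facet case of Corollary~\ref{cor:distance_bound}. First I would record that, as a subset of $\bar\Real^n$, $I_{N_{\to d}}=I_N-\vec d$; hence $F(I_{N_{\to d}})=\{f-\vec d\mid f\in F(I_N)\}$ and $L(I_{N_{\to d}})=L(I_N)-\vec d$, $U(I_{N_{\to d}})=U(I_N)-\vec d$. So $g:=f_2+\vec d$ is a facet of $I_N$, lying on whichever boundary of $I_N$ corresponds to the boundary of $I_{N_{\to d}}$ containing $f_2$; write $L_N\in\{L(I_N),U(I_N)\}$ for that boundary. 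Because $f_1$ and $f_2$ are parallel and both contain $x$, their flats coincide, say $\hat f_1=\hat f_2=\{w_i=c\}$ with $c=x_i$; consequently $\hat g=\{w_i=c+d\}$, so $g$ is parallel to $f_1$ and sits at diagonal $d_\infty$ distance exactly $d$ from the flat $\hat f_1$.

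The heart of the argument is the claim that $\pi_{L_N}(x)=x+\vec d$. Clearly $x+\vec d\in f_2+\vec d=g\subseteq L_N$, and $x+\vec d$ lies on the diagonal line $\Delta_x$ since $\vec d$ is in the diagonal direction. For uniqueness I would check, directly from the definitions of $L(\cdot)$ and $U(\cdot)$, that $L_N$ is an antichain for the strict coordinatewise order on $\bar\Real^n$: if $p<p'$ were distinct strictly-comparable points of $L_N$, a point of $I_N$ chosen close enough to $p$ (if $L_N=L(I_N)$) or to $p'$ (if $L_N=U(I_N)$) would already be strictly below $p'$, respectively strictly above $p$, contradicting membership of the other point in $L_N$. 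Since $\Delta_x$ is a chain for that order, $\Delta_x\cap L_N$ has at most one point, so it equals $\{x+\vec d\}$. Therefore $\pi_{L_N}(x)$ exists, equals $x+\vec d$, and $\dl(x,L_N)=d_\infty(x,x+\vec d)=|\vec d|_\infty=d$.

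Finally I would apply Corollary~\ref{cor:distance_bound} with the point $x\in B(I_M)$ (it lies on the facet $f_1\in F(I_M)$) and the boundary $L=L_N$: its projection $\pi_{L_N}(x)=x+\vec d$ lies on the facet $g\in F(I_N)$, and $f_1\parallel g$, whence $d=\dl(x,L_N)\in S$. The two degenerate cases---$x$ a vertex of $I_M$, or $x+\vec d$ a vertex of $I_N$---are handled by the same one-point observation applied at that vertex: in either case $d$ equals the diagonal distance from that vertex to a boundary of $I_M$ or $I_N$, hence is an entry of $D$ at a point of $V(I_M)\cup V(I_N)$, so $d\in S$ by definition. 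The same reasoning, with the $n$-parameter bound of Proposition~\ref{prop:distance_bound} and the enlarged candidate set $\hat S$ in place of $S$, covers the general $n$-parameter situation. I expect the one-point intersection claim to be the only real obstacle: it is the single place where the staircase (monotone boundary) structure of intervals is used, while everything else is bookkeeping with translates.
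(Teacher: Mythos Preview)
Your proof is correct and follows essentially the same route as the paper: translate $f_2$ back by $\vec d$ to a facet $f_2'=f_2+\vec d$ of $I_N$, identify $x+\vec d$ as $\pi_L(x)$ for the appropriate $L\in\{L(I_N),U(I_N)\}$, and invoke Corollary~\ref{cor:distance_bound} on the parallel pair $(f_1,f_2')$. The paper simply asserts $x+\vec d=\pi_L(x)$ without comment, whereas you supply the antichain argument to justify it; apart from that extra care and your explicit vertex/degenerate discussion, the two arguments are the same.
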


\begin{proof}
	Let $\nu:\RB^2 \rightarrow \RB^2$ be the shift function defined as $\nu(x)=x+\vec{d}$. Then $I_N = \nu(I_{N\rightarrow d})$. 
	Let $x'=\nu(x) = x+\vec{d}$ and $f_2'=\nu(f_2)$. Then $f_2'$ and $f_1$ are two parallel facets containing $x'$ and $x$ in $B(I_N)$ and $B(I_M)$ respectively. We know that $f_2'\subseteq L$ for some $L=L(I_N)$ or $U(I_N)$. Then we have $x'=\pi_L(x)$ with $\dl(x,L):=d_\infty(x, x')=d$. By Corollary \ref{cor:distance_bound}, we have $d\in S$. 
\end{proof}

From the above proposition, we get the following corollary. 
\begin{corollary} \label{cor:valid_intersection}
	Let $M$ and $N$ be two interval modules and $d\notin S$. Then, for all intersection points $x\in B(I_M)\cap B(I_{N\rightarrow d})$, any two facets containing $x$ in $B(I_M)$ and $B(I_{N\rightarrow d})$ cannot be parallel,
	that is, $M$ and $N_{\rightarrow d}$ intersect generically. Each intersection component of $M$ and $N_{\rightarrow d}$ results from a transversal intersection.

	
\end{corollary}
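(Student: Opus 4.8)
The corollary is, in essence, the contrapositive of Proposition~\ref{touching_pt_intersection}, combined with the elementary fact that two axis-aligned facets are transversal precisely when they are non-parallel. So the plan is short and has two moves.

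First I would prove the parallelism claim by contradiction. Suppose $d\notin S$ but some intersection point $x\in B(I_M)\cap B(I_{N\rightarrow d})$ lies on a facet $f_1\in F(I_M)$ and on a facet $f_2\in F(I_{N\rightarrow d})$ that are parallel, i.e.\ whose flats $\hat f_1$ and $\hat f_2$ both have the form $\{x_i=c\}$ for one and the same coordinate $i$. This is exactly the hypothesis of Proposition~\ref{touching_pt_intersection}, which then gives $d\in S$, a contradiction. Hence no point of $B(I_M)\cap B(I_{N\rightarrow d})$ can lie simultaneously on a facet of $I_M$ and on a parallel facet of $I_{N\rightarrow d}$; equivalently, any facet of $I_M$ and any facet of $I_{N\rightarrow d}$ that share a point have flats in distinct standard directions.

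Next I would read off the transversality statement. Fix an intersection component $Q$ of $M$ and $N_{\rightarrow d}$ and a point $x\in B(I_Q)$. A boundary point of a component of $I_M\cap I_{N\rightarrow d}$ cannot lie in the interior of both $I_M$ and $I_{N\rightarrow d}$, so two cases remain. If $x$ lies in the interior of one of the two intervals, then near $x$ the set $I_M\cap I_{N\rightarrow d}$ coincides with the other interval, so the part of $B(I_Q)$ near $x$ is a piece of $B(I_M)$ or of $B(I_{N\rightarrow d})$ alone and no facets from distinct intervals meet there. Otherwise $x\in B(I_M)\cap B(I_{N\rightarrow d})$, so $x$ lies on a facet $f_1\subseteq\{x_i=c_1\}$ of $I_M$ and on a facet $f_2\subseteq\{x_j=c_2\}$ of $I_{N\rightarrow d}$, and $i\neq j$ by the first step; then the tangent hyperplanes $\{v:v_i=0\}$ and $\{v:v_j=0\}$ of $\hat f_1$ and $\hat f_2$ at $x$ span $\Real^n$, so the two boundary pieces through $x$ meet transversally, their local intersection lying in the codimension-two flat $\{x_i=c_1,\,x_j=c_2\}$. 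Since this holds at every point of $B(I_Q)$ and for every $Q$, every intersection component of $M$ and $N_{\rightarrow d}$ results from a transversal, hence generic, intersection.

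Once Proposition~\ref{touching_pt_intersection} is in hand the argument is essentially immediate; the step I expect to need the most care is the second one, namely pinning down the phrase ``results from a transversal intersection'' uniformly over all boundary points. Two sub-issues arise there: corners, where several facets of $I_M$ and several of $I_{N\rightarrow d}$ meet at one point --- but since the claim is about \emph{pairs} of facets, one from each interval, it suffices to apply the first step to each such pair --- and boundary points having a coordinate equal to $\pm\infty$, for which \emph{facet} and \emph{flat} must be read inside $\RB^n$ and the same reasoning applies because Proposition~\ref{touching_pt_intersection} is already stated for the enriched intervals.
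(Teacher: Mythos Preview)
Your proposal is correct and follows the same approach as the paper: the corollary is stated immediately after Proposition~\ref{touching_pt_intersection} with only the remark ``From the above proposition, we get the following corollary,'' so the intended argument is precisely the contrapositive you give in your first step. Your second step, unpacking what ``transversal intersection'' means at each boundary point (including the corner and $\pm\infty$ cases), goes beyond what the paper spells out but is consistent with it and fills in detail the paper leaves implicit.
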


\section{Missing proof in section \ref{sec:dimension_function}}
\label{app:miss2}
{\bf Proposition~\ref{prop:nice-func} and its proof}.

For a nice function $f$, 
$f(x)=\sum_{y \leq x} \Delta f(y)$.

\begin{proof}
	For a nice function $f$, we extend $\Delta f$ to be a function $\overline{\Delta f}$ defined on $Pow(\Real^n)$ as $\overline{\Delta f}(U) = \sum_{x\in U} \Delta f(x)$ for any $U \subseteq \Real^n$. Note that $\overline{\Delta f}(\emptyset)=0$ and $\overline{\Delta f}(\{x\}) = \Delta f(x)$. First, we observe the following property of the function $\overline{\Delta f}$:
	\begin{equation} \label{eqn:deltaf_property}
	\overline{\Delta f}(U_1\cup U_2)=\overline{\Delta f}(U_1)+\overline{\Delta f}(U_2) -\overline{\Delta f}(U_1\cap U_2)\tag{$\star$}
	\end{equation}
	
	For any $x=(x_1,\ldots, x_n)\in \Rn$, define $R_x=\set{y:y\leq x}\subseteq\Rn$ and $R_x^{i}=R_x\setminus \set{y:y_i=x_i}=\set{y:y\leq x, y_i\neq x_i}$. For any $k=0,\ldots, n$ and $s\in \binom{[n]}{k}$, let $R_x^{s}=\bigcap_{i\in s} R_x^{i}=\set{y:y\leq x, y_i\neq x_i, \forall i\in s}$. We prove the proposition by induction on $x$. 
	
	Assume it is true for any $y<x$, that is $\forall y< x, f(y)=\sum_{z\leq y} \Delta f(z) = \overline{\Delta f}(R_y)$. Since $R_x=\set{x}\coprod (R_x\setminus \set{x})=\set{x}\coprod \bigcup_i R_x^{i}$, by the property (\ref{eqn:deltaf_property}), we have $\Sigma_{y\leq x} \Delta f(y) =  \overline{\Delta f}( R_x ) = \overline{\Delta f}(\set{x}\coprod \bigcup_i R_x^{i})=\Delta f(x) +\overline{\Delta f}(\bigcup_i R_x^{i})$. By the inclusion–exclusion principle, 
	we have 
	
	\begin{eqnarray*}
		\overline{\Delta f}(\bigcup_i R_x^{i})&=&
		\sum_i\overline{\Delta f}(R_x^{i})-\sum_{ij}\overline{\Delta f}(R_x^{\set{i,j}})+\ldots\\
		&=&(-1)\cdot\sum_{k=1}^n (-1)^{k}\sum_{s\in \binom{[n]}{k}} \overline{\Delta f}(R_x^{s})
	\end{eqnarray*}
	
	Note that by inductive hypothesis, for any $s\in \binom{[n]}{k}$,  $\lim_{\epsilon \to 0_+} f(x-\epsilon \cdot \sum_{i\in s}e_i)
	=\lim_{\epsilon\to 0_+} \overline{\Delta f}(R_{(x-\epsilon\cdot\sum_{i\in s} e_i)})
	=\overline{\Delta f}(\bigcup_{\epsilon > 0} R_{(x-\epsilon\cdot\sum_{i\in s} e_i)})
	=\overline{\Delta f}(R_x^{s})$. Therefore, we have $\overline{\Delta f}(\bigcup_i R_x^{i})=(-1)\cdot\sum_{k=1}^n (-1)^k \cdot \sum_{s\in \binom{[n]}{k}} \lim_{\epsilon \to 0_+} f(x-\epsilon \cdot \sum_{i\in s}e_i)$. By definition of $\Delta f(x)$, we have $f(x)=\Delta f(x)+\overline{\Delta f}(\bigcup_i R_x^{i})=\overline{\Delta f}(R_x)=\sum_{y\leq x} \Delta f(y)$.
\end{proof}




\end{document}